\documentclass[11pt,a4paper,oneside]{extarticle}
\PassOptionsToPackage{unicode=true}{hyperref}
\PassOptionsToPackage{hyphens}{url}
\PassOptionsToPackage{dvipsnames,svgnames*,x11names*}{xcolor}
\usepackage{lmodern}
\usepackage{amssymb,amsmath}
\usepackage{ifpdf,ifxetex,ifluatex}
\usepackage{fixltx2e} 
\ifnum 0\ifxetex 1\fi\ifluatex 1\fi=0 
  \usepackage[T1]{fontenc}
  \usepackage[utf8]{inputenc}
  \usepackage{textcomp} 
\else 
  \usepackage{unicode-math}
  \defaultfontfeatures{Ligatures=TeX,Scale=MatchLowercase}
\fi
\IfFileExists{upquote.sty}{\usepackage{upquote}}{}
\IfFileExists{microtype.sty}{%
\usepackage[]{microtype}
\UseMicrotypeSet[protrusion]{basicmath} 
}{}
\usepackage{indentfirst}
\usepackage{xcolor}
\usepackage{hyperref}
\hypersetup{
            pdftitle={Exceptionally Monadic Error Handling},
            pdfauthor={Jan Malakhovski},
            breaklinks=true,
            }
\ifnum 0\ifpdf 1\fi\ifxetex 1\fi\ifluatex 1\fi=0 
\hypersetup{
            colorlinks=true,
            linkcolor=Maroon,
            citecolor=Blue,
            urlcolor=Blue,
            }
\else 
\hypersetup{
            }
\fi
\urlstyle{same}  
\usepackage[margin=2cm]{geometry}
\usepackage{color}
\usepackage{fancyvrb}
\newcommand{\VerbBar}{|}
\newcommand{\VERB}{\Verb[commandchars=\\\{\}]}
\DefineVerbatimEnvironment{Highlighting}{Verbatim}{commandchars=\\\{\}}
\newenvironment{Shaded}{}{}

\newcommand{\CommentTok}[1]{\textcolor[rgb]{0.38,0.63,0.69}{\textit{#1}}}

\newcommand{\ControlFlowTok}[1]{\textcolor[rgb]{0.00,0.44,0.13}{\textbf{#1}}}
\newcommand{\DataTypeTok}[1]{\textcolor[rgb]{0.56,0.13,0.00}{#1}}
\newcommand{\DecValTok}[1]{\textcolor[rgb]{0.25,0.63,0.44}{#1}}

\newcommand{\FunctionTok}[1]{\textcolor[rgb]{0.02,0.16,0.49}{#1}}

\newcommand{\KeywordTok}[1]{\textcolor[rgb]{0.00,0.44,0.13}{\textbf{#1}}}
\newcommand{\NormalTok}[1]{#1}

\newcommand{\OtherTok}[1]{\textcolor[rgb]{0.00,0.44,0.13}{#1}}

\newcommand{\StringTok}[1]{\textcolor[rgb]{0.25,0.44,0.63}{#1}}

\setlength{\emergencystretch}{10em}  
\providecommand{\tightlist}{%
  \setlength{\itemsep}{0pt}\setlength{\parskip}{0pt}}
\setcounter{secnumdepth}{5}
\ifx\paragraph\undefined\else
\let\oldparagraph\paragraph
\renewcommand{\paragraph}[1]{\oldparagraph{#1}\mbox{}}
\fi
\ifx\subparagraph\undefined\else
\let\oldsubparagraph\subparagraph
\renewcommand{\subparagraph}[1]{\oldsubparagraph{#1}\mbox{}}
\fi

\makeatletter
\def\fps@figure{htbp}
\makeatother

\usepackage[style=numeric-comp,backend=biber,natbib=true,backref=true,sorting=none]{biblatex}
\setcounter{biburllcpenalty}{7000}
\setcounter{biburlucpenalty}{8000}
\addbibresource{references.bib}

\usepackage{attachfile2}

\def\ignore#1{}

\usepackage{enumitem}

\usepackage{amsthm}
\newtheorem{definition}{Definition}
\newtheorem{lemma}{Lemma}
\newtheorem{theorem}{Theorem}
\newtheorem{quasitheorem}[theorem]{``Theorem''}
\newtheorem{remark}{Remark}

\usepackage{cleveref}
\usepackage{footnotebackref}
\crefformat{footnote}{#2\footnotemark[#1]#3}
\interfootnotelinepenalty=10000

\usepackage[affil-it]{authblk}
\author{Jan Malakhovski\thanks{\href{mailto:papers@oxij.org}{papers@oxij.org}; preferably with paper title in the subject line}}
\affil{IRIT, University of Toulouse-3 and ITMO University}

\title{Exceptionally Monadic Error Handling\\\small Looking at $bind$ and squinting really hard}

\date{February 2014 - October 2018}

\begin{document}
\maketitle

\begin{abstract}

We notice that the type of
\VERB|\FunctionTok{catch}\OtherTok{ ::}\NormalTok{ c a }\OtherTok{->}\NormalTok{ (e }\OtherTok{->}\NormalTok{ c a) }\OtherTok{->}\NormalTok{ c a}|
operator is a special case of monadic \VERB|\NormalTok{bind}| operator
\VERB|\OtherTok{(>>=) ::}\NormalTok{ m a }\OtherTok{->}\NormalTok{ (a }\OtherTok{->}\NormalTok{ m b) }\OtherTok{->}\NormalTok{ m b}|,
the semantics (surprisingly) matches, and this observation has many
interesting consequences.

For instance, the reader is probably aware that the monadic essence of
the \VERB|\NormalTok{(}\FunctionTok{>>=}\NormalTok{)}| operator of the
error monad \(\lambda A.E \lor A\) is to behave like identity monad for
"normal" values and to stop on "errors". The unappreciated fact is that
handling of said "errors" with a \VERB|\FunctionTok{catch}| operator of
the "flipped" "conjoined" error monad \(\lambda E.E \lor A\) is, too, a
monadic computation that treats still unhandled "errors" as "normal"
values and stops when an "error" is finally handled.

We show that for an appropriately indexed type of computations such a
"conjoined" structure naturally follows from the conventional
operational semantics of \VERB|\NormalTok{throw}| and
\VERB|\FunctionTok{catch}| operators. Consequently, we show that this
structure uniformly generalizes \emph{all} conventional monadic error
handling mechanisms we are aware of. We also demonstrate several more
interesting instances of this structure of which at least bi-indexed
monadic parser combinators and conventional exceptions implemented via
continuations have immediate practical applications. Finally, we notice
that these observations provide surprising perspectives on error
handling in general and point to a largely unexplored trail in
programming language design space.

\end{abstract}

\small\tableofcontents\normalsize\newpage

\hypertarget{extended-abstract}{%
\section{Extended Abstract}\label{extended-abstract}}

\label{sec:extabstract}

In this article we shall use Haskell programming language extensively
for the purposes of precise expression of thought (including Haskell
type class names for the names of the respective algebraic structures
where appropriate, e.g. "\VERB|\DataTypeTok{Monad}|" instead of
"monad").

\begin{itemize}
\item
  We note that the types of

\begin{Shaded}
\begin{Highlighting}[]
\OtherTok{throw ::}\NormalTok{ e }\OtherTok{->}\NormalTok{ c a}
\FunctionTok{catch}\OtherTok{ ::}\NormalTok{ c a }\OtherTok{->}\NormalTok{ (e }\OtherTok{->}\NormalTok{ c a) }\OtherTok{->}\NormalTok{ c a}
\end{Highlighting}
\end{Shaded}

  operators are special cases of \VERB|\DataTypeTok{Monad}|ic
  \VERB|\FunctionTok{return}| and
  \VERB|\NormalTok{(}\FunctionTok{>>=}\NormalTok{)}|
  (\VERB|\NormalTok{bind}|) operators

\begin{Shaded}
\begin{Highlighting}[]
\FunctionTok{return}\OtherTok{ ::}\NormalTok{ a }\OtherTok{->}\NormalTok{ m a}
\OtherTok{(>>=)  ::}\NormalTok{ m a }\OtherTok{->}\NormalTok{ (a }\OtherTok{->}\NormalTok{ m b) }\OtherTok{->}\NormalTok{ m b}
\end{Highlighting}
\end{Shaded}

  (substitute \([a \mapsto e, m \mapsto \lambda\_.c~a]\) into their
  types, see \cref{sec:init,sec:type-of-catch}).
\item
  Hence, a type of computations \VERB|\NormalTok{c e a}| with two
  indexes where \VERB|\NormalTok{e}| signifies a type of errors and
  \VERB|\NormalTok{a}| signifies a type of values can be made a
  \VERB|\DataTypeTok{Monad}| twice: once for \VERB|\NormalTok{e}| and
  once for \VERB|\NormalTok{a}|.

\begin{Shaded}
\begin{Highlighting}[]
\KeywordTok{class} \DataTypeTok{ConjoinedMonads}\NormalTok{ c }\KeywordTok{where}
\OtherTok{  return ::}\NormalTok{ a }\OtherTok{->}\NormalTok{ c e a}
\OtherTok{  (>>=)  ::}\NormalTok{ c e a }\OtherTok{->}\NormalTok{ (a }\OtherTok{->}\NormalTok{ c e b) }\OtherTok{->}\NormalTok{ c e b}

\OtherTok{  throw  ::}\NormalTok{ e }\OtherTok{->}\NormalTok{ c e a}
\OtherTok{  catch  ::}\NormalTok{ c e a }\OtherTok{->}\NormalTok{ (e }\OtherTok{->}\NormalTok{ c f a) }\OtherTok{->}\NormalTok{ c f a}
\end{Highlighting}
\end{Shaded}

  Moreover, for such a structure \VERB|\NormalTok{throw}| is a left zero
  for \VERB|\NormalTok{(}\FunctionTok{>>=}\NormalTok{)}| and
  \VERB|\FunctionTok{return}| is a left zero for
  \VERB|\FunctionTok{catch}| (see
  \cref{sec:conjoinedly-monadic,sec:logical}).
\item
  We prove that the type of the above \VERB|\FunctionTok{catch}| is most
  general type for any \VERB|\DataTypeTok{Monad}|ic structure
  \VERB|\NormalTok{\textbackslash{}a }\OtherTok{->}\NormalTok{ c e a}|
  with additional \VERB|\NormalTok{throw}| and
  \VERB|\FunctionTok{catch}| operators satisfying conventional
  operational semantics (via simple unification of types for several
  equations that follow from semantics of said operators, see
  \cref{sec:type-of-catch}). Or, dually, we prove that
  \VERB|\NormalTok{(}\FunctionTok{>>=}\NormalTok{)}| has the most
  general type for expressing sequential computations for
  \VERB|\DataTypeTok{Monad}|ic structure
  \VERB|\NormalTok{\textbackslash{}e }\OtherTok{->}\NormalTok{ c e a}|
  (with operators named \VERB|\NormalTok{throw}| and
  \VERB|\FunctionTok{catch}|) with additional
  \VERB|\FunctionTok{return}| and
  \VERB|\NormalTok{(}\FunctionTok{>>=}\NormalTok{)}| operators
  satisfying conventional operational semantics (see
  footnote~\ref{fn:its-dual}).
\item
  Substituting a \VERB|\DataTypeTok{Const}|ant
  \VERB|\DataTypeTok{Functor}| for \VERB|\NormalTok{c}| into
  \VERB|\DataTypeTok{ConjoinedMonads}| above (i.e., fixing the type of
  errors) produces the definition of \VERB|\DataTypeTok{MonadError}|,
  and, with some equivalent redefinitions,
  \VERB|\DataTypeTok{MonadCatch}| (see \cref{sec:instances:constant}).
  Similarly, \VERB|\DataTypeTok{IO}| with similar redefinitions and with
  the usual caveats of \cref{rem:io-caveats} is a
  \VERB|\DataTypeTok{ConjoinedMonads}| instance too (see
  \cref{sec:instances:io}).
\item
  \VERB|\DataTypeTok{ExceptT}| (\cref{sec:instances:either}) and some
  other lesser known and potentially novel concrete structures (see all
  sections with "Instance:" in the title, most interestingly,
  \cref{sec:instances:throw-catch-cc}) have operators of such types and
  their semantics matches (or they can be redefined in an equivalent way
  such that the core part of the resulting structure then matches) the
  semantics of \VERB|\DataTypeTok{Monad}| exactly.
\item
  \VERB|\DataTypeTok{Monad}| type class has a well-known "fish"
  representation where "\VERB|\NormalTok{bind}|"
  \VERB|\NormalTok{(}\FunctionTok{>>=}\NormalTok{)}| operator is
  replaced by "\VERB|\NormalTok{fish}|" operator

\begin{Shaded}
\begin{Highlighting}[]
\OtherTok{(>=>) ::}\NormalTok{ (a }\OtherTok{->}\NormalTok{ m b) }\OtherTok{->}\NormalTok{ (b }\OtherTok{->}\NormalTok{ m c) }\OtherTok{->}\NormalTok{ (a }\OtherTok{->}\NormalTok{ m c)}
\end{Highlighting}
\end{Shaded}

  and \VERB|\DataTypeTok{Monad}| laws are just monoidal laws.

  Hence, all those structures can be seen as a pairs of monoids over
  bi-indexed types with identity elements for respective
  \VERB|\NormalTok{bind}|s as left zeros for conjoined
  \VERB|\NormalTok{bind}|s (\cref{sec:conjoinedly-monadic}). We find
  this symmetry to be hypnotic and generalize it in
  \cref{sec:applicatives}.
\item
  The answer to "Why didn't anyone notice this already?" seems to be
  that this structure cannot be expressed well in Haskell (see
  \cref{sec:encodings}).
\item
  Meanwhile, it has at least several practically useful instances:

  \begin{itemize}
  \item
    Parser combinators that are precise about errors they produce and
    that reuse common \VERB|\DataTypeTok{Monad}|ic combinators for both
    parsing and handling of errors. For instance, the type of
    \VERB|\NormalTok{many}| for such a parser combinator guarantees that
    it cannot throw any errors

\begin{Shaded}
\begin{Highlighting}[]
\OtherTok{many ::}\NormalTok{ c e a }\OtherTok{->}\NormalTok{ c f [a]}
\end{Highlighting}
\end{Shaded}

    (since \VERB|\NormalTok{f}| can be anything, it cannot be anything
    in particular) and

\begin{Shaded}
\begin{Highlighting}[]
\OtherTok{choice ::}\NormalTok{ [c e a] }\OtherTok{->}\NormalTok{ c e a}
\end{Highlighting}
\end{Shaded}

    is an instance of \VERB|\NormalTok{foldM}| (see
    \cref{sec:instances:parser-combinators}).
  \item
    Conventional exceptions expressed using \VERB|\DataTypeTok{Reader}|
    \VERB|\DataTypeTok{Monad}| and second-rank \VERB|\NormalTok{callCC}|
    (the whole idea of which seems to be novel, see
    \cref{sec:instances:throw-catch-cc}).
  \item
    Error-explicit \VERB|\DataTypeTok{IO}| (\cref{sec:instances:eio}),
    the latter and similar structures with similar motivation were
    proposed before, but they did not use the fact that their "other
    half" is a \VERB|\DataTypeTok{Monad}| too.
  \end{itemize}
\end{itemize}

Every item on the above list, to our best knowledge, is a headline
contribution.

\hypertarget{preliminaries}{%
\section{Preliminaries}\label{preliminaries}}

\label{sec:preliminaries}

Most of the results of this paper are
\textbf{\textbf{language-agnostic}} and can be applied (if not straight
to practice, then at least to inform design choices) to any programming
language (that permits at least two computationally distinguishable
program states and some kind of dynamic control flow control) as our
definition of an "\emph{error}" in "error handling" is just "an abnormal
program state causing execution of an abnormal code path" and both
"abnormal"s can be arbitrarily defined (see footnote~\ref{fn:terms}).

While most of our results are applicable to any programming language, we
need \emph{some} language to express them in and Haskell seems to be the
most natural choice to host this discussion since

\begin{itemize}
\tightlist
\item
  most of the cited literature uses Haskell or some variant of ML;
\item
  it has the largest number of error handling mechanisms in active use
  of all the programming languages we are aware of;
\item
  as a consequence, most other programming languages implement a subset
  of Haskell's enormous library of error handling mechanisms;
\item
  while it is not ideal for our purposes (Haskell cannot properly
  express the main result and the improper encoding of the main result
  is not particularly convenient, see \cref{sec:encodings}), it is
  expressive enough to show how a convenient encoding could have been
  implemented in theory;
\item
  it is surprisingly popular for an "academic" language.
\end{itemize}

Using Haskell also allows this paper to be encoded as a set of Literate
Haskell programs in a single Emacs Org-Mode
tree~\cite{OrgMode,Schulte:2011:MLCELPRR}.\footnote{The source code is
  available at \url{https://oxij.org/paper/ExceptionallyMonadic/}.
  \ifnum 0\ifpdf 1\fi\ifxetex 1\fi\ifluatex 1\fi=0 It also gets embedded
  into the PDF version of this article when the source gets compiled
  straight to PDF (i.e. via \texttt{pdftex}, \texttt{xetex}, or
  \texttt{luatex}, but not via \texttt{dvipdf}). Unfortunately, the file
  you are looking at was compiled using \texttt{dvipdf}. This also means
  that in this file URLs with line breaks would not be clickable as
  \texttt{dvipdf} generates incorrect PDF link boxes for them. Properly
  compiled version is available via the above link. \else
  It is also embedded straight into the PDF version of this article
  (click here \attachfile{article.org} or look for "attachments" in
  your PDF viewer).\fi}

Our preferred compiler is The Glorious Glasgow Haskell Compiler
(GHC)~\cite{GHC} version 8.2 as we shall use a number of its extensions
over Haskell 2010~\cite{Haskell2010} specification.

Readers unfamiliar with Haskell are advised to read through any tutorial
introduction into Haskell at least until they start feeling like Haskell
is just a syntax for school-level arithmetic with user-definable
functions, lambdas, types, algebraic data types and type classes. After
that it is recommended to look over
Typeclassopedia~\cite{HaskellWiki:Typeclassopedia}, Diehl's
web-page~\cite{Diehl:2016:WIW}, the table of contents (just the list of
modules) of GHC's \texttt{base} package~\cite{Hackage:base4900}, and the
types and descriptions of functions from the
\VERB|\DataTypeTok{Prelude}| module of \texttt{base}.

The rest can be learned on-demand from
\cref{sec:tutorial:basic,sec:tutorial:non-basic} and cited
documentation.

\hypertarget{introduction}{%
\section{Introduction}\label{introduction}}

\begin{definition}

\label{dfn:error-handling}

Generally, when program encounters an "\emph{error}" all it can do is to
switch to an "\emph{exceptional}" execution path~\cite{Benton:2001:ES}.
The latter can then either encounter an "\emph{error}" itself or

\begin{enumerate}
\item
  \label{c:a} gracefully "\emph{terminate}" some part of the previous
  computation (including the whole program as a degenerate case) and
  continue (when there is something left to continue),
\item
  \label{c:b} "\emph{fix}" the "\emph{problem}" and resume the
  computation as if nothing has happened.
\end{enumerate}

\end{definition}

\emph{Error handling}\footnote{\label{fn:terms}Not a consensus term.
  Some people would disagree with this choice of a name as they would
  not consider some of our examples below to be about "errors". However,
  for the purposes of this article we opted into generalizing the term
  "\emph{error}" of "error handling" instead of inventing new
  terminology or appropriating terminology like "exceptions",
  "interrupts", "conditions" or "effects" that has other very specific
  uses. To see the problem with the conventional terminology consider
  how would you define "program encountered an error" formally and
  generally for \textbf{any} abstract interpreter (you can not). Now
  consider the case where an interpreter is a tower of interpreters
  interpreting one another. Clearly, what is an "\emph{error}" for one
  interpreter can be considered normal execution for the one below. A
  simple example of such a structure is the \VERB|\DataTypeTok{Maybe}|
  \VERB|\DataTypeTok{Monad}| discussed in \cref{sec:maybe} in which
  expressions using \VERB|\KeywordTok{do}|-syntax never consider
  \VERB|\DataTypeTok{Nothing}|s while handling of said
  \VERB|\DataTypeTok{Nothing}|s by the \VERB|\DataTypeTok{Monad}|ic
  \VERB|\NormalTok{(}\FunctionTok{>>=}\NormalTok{)}| operator is a
  completely ordinary \VERB|\KeywordTok{case}| for the underlying
  Haskell interpreter. Hence, in this article we consider anything that
  matches \cref{dfn:error-handling} to be about \emph{"error" handling}.
  If the reader still feels like disagreeing with our argument we advise
  mentally substituting every our use of "error" with something like "an
  abnormal program state causing execution of an abnormal code path"
  (where definitions of both "abnormal"s are interpreter-specific).} is
an algebraic subfield of the programming languages theory that studies
this sort of seemingly simple control structures.

Different substitutions for "\emph{error}", "\emph{exceptional}" and
"\emph{terminate}" into \cref{dfn:error-handling} variant~\ref{c:a} and
substitutions for "\emph{error}", "\emph{exceptional}", "\emph{fix}" and
"\emph{problem}" into \cref{dfn:error-handling} variant~\ref{c:b}
produce different error handling mechanisms. Some examples:

\begin{itemize}
\item
  Identity substitution for variant~\ref{c:a} gives programming with
  error codes, programming with algebraic data
  types~\cite{burstall-hope-80, Bailey:1985:HT} that encode errors,
  programming with algebraic data types with
  errors~\cite{ADJ76, Gogolla:1984:AOS} (not the same thing), exceptions
  in conventional programming
  languages~\cite{Goodenough:1975:EHI, Goodenough:1975:EHD,
   Goldberg:1983:SLI, Koenig:1990:EHC, Benton:2001:ES} (with so called
  "termination semantics"~\cite[16.6 Exception Handling: Resumption
   vs. Termination]{Stroustrup:DEC94}), error handling with
  monads~\cite{moggi-89, moggi-91, Wadler:1992:EFP,
   Swierstra:2008:DTL, Iborra:2010:ETE, Katsumata:2014:PEM}, monad
  transformers~\cite{Liang:1995:MTM, Benton:2002:ME,
   Hackage:transformers0520}, Scheme's and ML's
  \VERB|\KeywordTok{call/cc}|~\cite{Sperber:2010:RnRS}, and delimited
  \VERB|\NormalTok{callCC}|~\cite{Asai:2011:IPS, Kiselyov:2012:AAC,
   Hackage:transformers0520}.
\item
  Substituting "\emph{unparsable string}", "\emph{alternative}",
  "\emph{backtrack}" for variant~\ref{c:a} gives monadic parser
  combinators~\cite{Leijen:2001:PDS}.
\item
  Identity substitution for variant~\ref{c:b} gives error handling in
  languages with so called "resumption semantics"~\cite[16.6 Exception
   Handling: Resumption vs. Termination]{Stroustrup:DEC94} like, for
  instance, Common LISP~\cite{Pitman:2001:CHL} (\emph{condition
  handling}) and Smalltalk~\cite{Goldberg:1983:SLI}.
\item
  Substituting "\emph{effect}", "\emph{effect handler}", \emph{handle}",
  "\emph{it}" for variant~\ref{c:a} or~\ref{c:b} (depending on the
  details of the calculus) produces effect systems~\cite{Benton:2002:ME,
   Plotkin:2009:HAE, Brady:2013:PRA, Kammar:2013:HA, Kiselyov:2013:EEA,
   Kiselyov:2015:FMM} and effect systems based on modal logic with
  names~\cite{Nanevski:2004:FPNN, Nanevski:2005:MCEH}.
\item
  "\emph{System call}", "\emph{system call handler}", "\emph{handle}",
  "\emph{it}" for variant~\ref{c:b} produces conventional \emph{system
  calls}~\cite{IEEE:2001:ISR}.\footnote{Except in most UNIX-like
    operating systems system calls cannot call other system calls
    directly and have to use an equivalent kernel API instead.}\footnote{Indeed,
    algebraic effects from the point of view of an OS-developer are just
    properly typed system calls with nesting and modular handling.}
\item
  Substituting "\emph{signal}", "\emph{signal handler}",
  "\emph{handle}", "\emph{it}", "\emph{it}" for variant~\ref{c:b} gives
  hardware interrupts and POSIX signals~\cite{IEEE:2001:ISR}.\footnote{Indeed,
    POSIX signals and hardware interrupts are "system calls in reverse"
    (with some complications outside of the scope of this article):
    kernel and/or hardware raises and applications handle them.}
\end{itemize}

The first complication of the above scheme is the question of whenever
for a given error handling mechanism the "\emph{error}" raising operator

\begin{enumerate}
\item
  passes control to a statically selected (lexically closest or
  explicitly specified) enclosing error handling construct (e.g.
  \texttt{throw} and \texttt{catch} in Emacs
  LISP~\cite{Emacs:ELRM:Catch-and-Throw}, POSIX system calls and
  signals) or
\item
  the language does dynamic dispatch to select an appropriate error
  handler (like exceptions in most conventional languages like C++,
  Java, Python, etc do).
\end{enumerate}

Another complication is ordering:

\begin{enumerate}
\item
  Most conventional programming languages derive their error handling
  from SmallTalk~\cite{Goldberg:1983:SLI} and Common
  LISP~\cite{Pitman:2001:CHL} and the order in which the program handles
  "\emph{errors}" corresponds to the order in which execution encounters
  them.
\item
  Meanwhile, some CPU ISAs\footnote{Instruction Set Architecture (ISA)
    is a specification that describes a set of Operation Codes (OPcodes,
    which are a binary representation of an assembly language) with
    their operational semantics. "i386", "i686", "amd64" ("x86\_64"),
    "aarch64", "riscv64", etc are ISAs.} expose the internal
  non-determinism and allow different independent data-flows to produce
  hardware exceptions in non-deterministic manner (e.g. arithmetic
  instructions on DEC Alpha). So do Haskell~\cite{PeytonJones:1999:SIE}
  (see \cref{sec:imprecise}) and, to some extent,
  C++~\cite{CFAQ:SeqPoints} programming languages.
\end{enumerate}

Finally, another dimension of the problem is whenever the objects
signifying "\emph{errors}" (e.g. arguments of \VERB|\NormalTok{throw}|)
are

\begin{enumerate}
\item
  first-class values (error codes, algebraic data types) as in most
  conventional languages,
\item
  labels or tags as in modal logic with names and, to some degree, with
  \VERB|\KeywordTok{call/cc}| and \VERB|\NormalTok{callCC}|.
\end{enumerate}

In short, despite its seemingly simple operational semantics, error
handling is an algebraically rich field of programming languages theory.

Meanwhile, from the perspective of types there are several schools of
thought about effects.

\begin{itemize}
\item
  The first one, started by Gifford and
  Lucassen~\cite{Gifford:1986:IFI, Lucassen:1987:TE,
   lucassen-gifford-88} represents effects as type annotations. This
  works well in programming languages with eager evaluation, but becomes
  complicated in lazy languages (application in a lazy language delays
  effects until thunk's evaluation, hence type system has to either put
  nontrivial restrictions on the use of effects in expressions or
  annotate both arrows and values with effects, the latter, among other
  things, breaks type preservation of \(\eta\)-conversion since
  \(\lambda x . f x\) moves effect annotation from the arrow to the
  result type).
\item
  The second one, started by Moggi and
  Wadler~\cite{moggi-89,Wadler:1992:EFP} confines effects to monadic
  computations. The latter can then be annotated with effect annotations
  themselves~\cite{wadler-thiemann-03}. Monads work well for small
  programs with a small number of effects, but, it is commonly argued,
  they don't play as nice in larger programs because they lack in
  modularity~\cite{Brady:2013:PRA} (hence, the need for monad
  transformers, which are then critiqued as hard to
  tame~\cite{Kiselyov:2013:EEA}) and produce languages with non-uniform
  syntax (pure functions look very different from monadic ones and
  functions that are useful in both contexts have to be duplicated,
  think e.g. \VERB|\FunctionTok{map}| and \VERB|\FunctionTok{mapM}|).
\item
  The third one, started by Nanevski~\cite{Nanevski:2004:FPNN}
  represents effects using modal logic with names. Practical
  consequences of this way of doing things are unknown, as this
  construction didn't get much adoption yet.
\end{itemize}

In short, from type-theoretic point of view the progression of topics in
the cited literature can be seen as pursuing calculi that are, at the
same time, computationally efficient, algebraically simple (like
monads), but modular (like effect systems).

Note, however, that all of those schools of thought consider exceptions
to be effects, they only disagree about the way to represent the latter.
Meanwhile, from a perspective of a programming language implementer,
there are several problems with that world view:

\begin{itemize}
\item
  mechanisms that support resumption semantics are commonly disregarded
  as useless and computationally expensive error handling mechanisms
  (most notably~\cite[16.6 Exception Handling: Resumption
   vs. Termination, pp. 390–393]{Stroustrup:DEC94}),
\item
  in particular, all popular programming languages implement builtin
  exceptions even though they have more general error handling
  mechanisms like \emph{condition handling} in Common LISP and
  \VERB|\KeywordTok{call/cc}| in Scheme and ML because those are just
  too computationally expensive for emulation of conventional
  exceptions~\cite{Kiselyov:2012:AAC},
\item
  and even in languages with nothing but exceptions and termination
  semantics, high-performance libraries that do a lot of error handling
  frequently prefer not to use exceptions for performance reasons and to
  remove any non-local control-flow.
\end{itemize}

In short, from practical point of view \emph{most} of those
type-theoretic constructs are an overkill for \emph{most} programs.
Meanwhile, we are not aware of any non-ad-hoc language-agnostic
algebraic structure that captures all of the exception handling (both
\VERB|\NormalTok{throw}|ing, and \VERB|\FunctionTok{catch}|ing) without
introducing any other superfluous structure on top. In this article we
shall demonstrate a fairly straightforward but surprisingly useful
solution to this problem.

\hypertarget{not-a-tutorial-side-a}{%
\section{Not a Tutorial: Side A}\label{not-a-tutorial-side-a}}

\label{sec:tutorial:basic}

While algebraic structures used in this article are simple, there are a
lot of them. This section is intended as a reference point for all
algebraic structures relevant in the context of error handling that are
referenced in the rest of the paper (for reader's convenience and for
high self-sufficiency of the Literate Haskell version). Most of those
are usually assumed to be common knowledge among Haskell programmers.
Note however, that this section is not intended to be a tutorial on
either

\begin{itemize}
\tightlist
\item
  functional/declarative programming in general,
\item
  Haskell language in particular (see \cref{sec:preliminaries} for
  pointers),
\item
  error handling in Haskell in general,
\item
  practical usage of error handling structures discussed is this section
  in particular (we show only very primitive examples, if any; for the
  interesting ones the reader will have to look into citations and
  examples given in the original sources).
\end{itemize}

All structures of this section are ordered from semantically simple to
more complex (that is, we do not topologically sort them by their
dependencies in GHC sources). For the reasons of simplicity, uniformity,
self-containment, and novel perspective some of the given definitions
differ slightly from (but are isomorphic/equivalent to) the versions
provided by their original authors. The most notable difference is the
use of \VERB|\DataTypeTok{Pointed}| type class (see
\cref{sec:applicative-functor}) instead of conventional
\VERB|\DataTypeTok{Monad}|ic \VERB|\FunctionTok{return}| and
\VERB|\DataTypeTok{Applicative}| \VERB|\FunctionTok{pure}|. All
structures are listed alongside references to the corresponding papers,
documentation and original source code.

This section can be boring (although, we feel like most remarks and
footnotes are not). On the first reading we advise to skip straight to
\cref{sec:tutorial:non-basic} and refer back to this section on demand.

\hypertarget{before-monadic}{%
\subsection{Before-Monadic}\label{before-monadic}}

This subsection describes type classes that have less structure than
\VERB|\DataTypeTok{Monad}| but are useful for error handling
nevertheless.

\hypertarget{monoid}{%
\subsubsection{Monoid}\label{monoid}}

\VERB|\DataTypeTok{GHC.Base}| from \texttt{base}~\cite{Hackage:base4900}
package defines \VERB|\DataTypeTok{Monoid}| type class as
follows\footnote{\label{fn:monoid-split}Note that by following
  \VERB|\DataTypeTok{Pointed}| logic used below we should have split
  \VERB|\DataTypeTok{Monoid}| into two type classes, but since we will
  not use \VERB|\DataTypeTok{Monoid}|s that much in the rest of the
  article we shall use the original definition as is.}

\begin{Shaded}
\begin{Highlighting}[]
\KeywordTok{class} \DataTypeTok{Monoid}\NormalTok{ a }\KeywordTok{where}
\OtherTok{  mempty  ::}\NormalTok{ a}
\OtherTok{  mappend ::}\NormalTok{ a }\OtherTok{->}\NormalTok{ a }\OtherTok{->}\NormalTok{ a}

  \CommentTok{-- defined for performance reasons}
\OtherTok{  mconcat ::}\NormalTok{ [a] }\OtherTok{->}\NormalTok{ a}
  \FunctionTok{mconcat} \FunctionTok{=} \FunctionTok{foldr} \FunctionTok{mappend} \FunctionTok{mempty}
\end{Highlighting}
\end{Shaded}

\noindent and wants its instances to satisfy the following conventional
equations ("\VERB|\DataTypeTok{Monoid}| laws")

\begin{Shaded}
\begin{Highlighting}[]
\CommentTok{-- `mempty` is left identity for `mappend`,}
\FunctionTok{mempty} \OtherTok{`mappend`}\NormalTok{ x }\FunctionTok{==}\NormalTok{ x}

\CommentTok{-- `mempty` is right identity for `mappend`,}
\NormalTok{x }\OtherTok{`mappend`} \FunctionTok{mempty} \FunctionTok{==}\NormalTok{ x}

\CommentTok{-- `mappend` is associative,}
\NormalTok{x }\OtherTok{`mappend`}\NormalTok{ (y }\OtherTok{`mappend`}\NormalTok{ z)}
  \FunctionTok{==}\NormalTok{ (x }\OtherTok{`mappend`}\NormalTok{ y) }\OtherTok{`mappend`}\NormalTok{ z}
\end{Highlighting}
\end{Shaded}

\noindent and an additional constraint

\begin{Shaded}
\begin{Highlighting}[]
\CommentTok{-- and `mconcat` is extensionally}
\CommentTok{-- equal to its default implementation}
\FunctionTok{mconcat} \FunctionTok{==} \FunctionTok{foldr} \FunctionTok{mappend} \FunctionTok{mempty}
\end{Highlighting}
\end{Shaded}

Signature and default implementation for \VERB|\FunctionTok{mconcat}| is
defined in the type class because \VERB|\FunctionTok{mconcat}| is a
commonly used function that has different extensionally equal
intensionally non-equal definitions with varied performance trade-offs.
For instance,

\begin{Shaded}
\begin{Highlighting}[]
\OtherTok{mconcat' ::} \DataTypeTok{Monoid}\NormalTok{ a }\OtherTok{=>}\NormalTok{ [a] }\OtherTok{->}\NormalTok{ a}
\NormalTok{mconcat' }\FunctionTok{=}\NormalTok{ foldl' }\FunctionTok{mappend} \FunctionTok{mempty}
\end{Highlighting}
\end{Shaded}

\noindent (where \VERB|\NormalTok{foldl'}| is a strict left fold) is
another definition that satisfies the law given above (since
\VERB|\FunctionTok{mappend}| is associative), but this implementation
will not produce any superfluous thunks for strict
\VERB|\FunctionTok{mappend}|.

\VERB|\DataTypeTok{Monoid}|s are not designed for error handling per se
but programmers can use their neutral elements to represent an error and
associative composition to ignore them. Whenever "ignoring" is
"handling" is a matter of personal taste.

One of the simpler instances is, of course, a list

\begin{Shaded}
\begin{Highlighting}[]
\KeywordTok{instance} \DataTypeTok{Monoid}\NormalTok{ [a] }\KeywordTok{where}
  \FunctionTok{mempty} \FunctionTok{=}\NormalTok{ []}
  \FunctionTok{mappend} \FunctionTok{=}\NormalTok{ (}\FunctionTok{++}\NormalTok{)}
\end{Highlighting}
\end{Shaded}

\noindent and hence, for instance, functions generating errors can
produce empty lists on errors and singleton lists on successes.

\hypertarget{functor-pointed-applicative}{%
\subsubsection{Functor, Pointed,
Applicative}\label{functor-pointed-applicative}}

\label{sec:applicative-functor} \label{sec:identity}

Most of the error handling mechanisms that follow are
\VERB|\DataTypeTok{Applicative}| \VERB|\DataTypeTok{Functor}|s.
\VERB|\DataTypeTok{GHC.Base}| from \texttt{base}~\cite{Hackage:base4900}
package defines those two algebraic structures as follows

\begin{Shaded}
\begin{Highlighting}[]
\KeywordTok{class} \DataTypeTok{Functor}\NormalTok{ f }\KeywordTok{where}
\OtherTok{  fmap ::}\NormalTok{ (a }\OtherTok{->}\NormalTok{ b) }\OtherTok{->}\NormalTok{ f a }\OtherTok{->}\NormalTok{ f b}

\KeywordTok{infixl} \DecValTok{4} \FunctionTok{<*>}
\KeywordTok{class} \DataTypeTok{Functor}\NormalTok{ f }\OtherTok{=>} \DataTypeTok{Applicative}\NormalTok{ f }\KeywordTok{where}
\OtherTok{  pure ::}\NormalTok{ a }\OtherTok{->}\NormalTok{ f a}
\OtherTok{  (<*>) ::}\NormalTok{ f (a }\OtherTok{->}\NormalTok{ b) }\OtherTok{->}\NormalTok{ f a }\OtherTok{->}\NormalTok{ f b}
\end{Highlighting}
\end{Shaded}

\noindent and wants their instances to satisfy

\begin{Shaded}
\begin{Highlighting}[]
\CommentTok{-- `fmap` preserves identity}
\FunctionTok{fmap} \FunctionTok{id} \FunctionTok{==} \FunctionTok{id}

\CommentTok{-- `(<*>)` is `fmap` for pure functions}
\FunctionTok{pure}\NormalTok{ f }\FunctionTok{<*>}\NormalTok{ x }\FunctionTok{==} \FunctionTok{fmap}\NormalTok{ f x}
\end{Highlighting}
\end{Shaded}

\noindent and some more somewhat more complicated
equations~\cite{HaskellWiki:Typeclassopedia}. We shall ignore those for
the purposes of this article (we will never use them explicitly).
Meanwhile, for the purposes of this article we shall split the
\VERB|\FunctionTok{pure}| function out of
\VERB|\DataTypeTok{Applicative}| into its own
\VERB|\DataTypeTok{Pointed}| type class and redefine
\VERB|\DataTypeTok{Applicative}| using it as follows (this will simplify
some later definitions).

\begin{Shaded}
\begin{Highlighting}[]
\KeywordTok{class} \DataTypeTok{Pointed}\NormalTok{ f }\KeywordTok{where}
\OtherTok{  pure ::}\NormalTok{ a }\OtherTok{->}\NormalTok{ f a}

\KeywordTok{infixl} \DecValTok{4} \FunctionTok{<*>}
\KeywordTok{class}\NormalTok{ (}\DataTypeTok{Pointed}\NormalTok{ f, }\DataTypeTok{Functor}\NormalTok{ f) }\OtherTok{=>} \DataTypeTok{Applicative}\NormalTok{ f }\KeywordTok{where}
\OtherTok{  (<*>) ::}\NormalTok{ f (a }\OtherTok{->}\NormalTok{ b) }\OtherTok{->}\NormalTok{ f a }\OtherTok{->}\NormalTok{ f b}
\end{Highlighting}
\end{Shaded}

We shall give all definitions and laws using this hierarchy unless
explicitly stated otherwise.

The most trivial example of \VERB|\DataTypeTok{Applicative}| is the
\VERB|\DataTypeTok{Identity}| \VERB|\DataTypeTok{Functor}| defined in
\VERB|\DataTypeTok{Data.Functor.Identity}| of \texttt{base}

\begin{Shaded}
\begin{Highlighting}[]
\KeywordTok{newtype} \DataTypeTok{Identity}\NormalTok{ a }\FunctionTok{=} \DataTypeTok{Identity}
\NormalTok{  \{}\OtherTok{ runIdentity ::}\NormalTok{ a \}}

\KeywordTok{instance} \DataTypeTok{Pointed} \DataTypeTok{Identity} \KeywordTok{where}
  \FunctionTok{pure} \FunctionTok{=} \DataTypeTok{Identity}

\KeywordTok{instance} \DataTypeTok{Functor} \DataTypeTok{Identity} \KeywordTok{where}
  \FunctionTok{fmap}\NormalTok{ f (}\DataTypeTok{Identity}\NormalTok{ a) }\FunctionTok{=} \DataTypeTok{Identity}\NormalTok{ (f a)}

\KeywordTok{instance} \DataTypeTok{Applicative} \DataTypeTok{Identity} \KeywordTok{where}
\NormalTok{  (}\DataTypeTok{Identity}\NormalTok{ f) }\FunctionTok{<*>}\NormalTok{ (}\DataTypeTok{Identity}\NormalTok{ x) }\FunctionTok{=} \DataTypeTok{Identity}\NormalTok{ (f x)}
\end{Highlighting}
\end{Shaded}

The most trivial example of a \VERB|\DataTypeTok{Functor}| that is not
\VERB|\DataTypeTok{Applicative}| is \VERB|\DataTypeTok{Const}|ant
\VERB|\DataTypeTok{Functor}| defined in
\VERB|\DataTypeTok{Data.Functor.Const}| of \texttt{base} as

\begin{Shaded}
\begin{Highlighting}[]
\KeywordTok{newtype} \DataTypeTok{Const}\NormalTok{ a b }\FunctionTok{=} \DataTypeTok{Const}
\NormalTok{  \{}\OtherTok{ getConst ::}\NormalTok{ a \}}

\KeywordTok{instance} \DataTypeTok{Functor}\NormalTok{ (}\DataTypeTok{Const}\NormalTok{ a) }\KeywordTok{where}
  \CommentTok{-- note that it changes type here}
  \FunctionTok{fmap}\NormalTok{ f (}\DataTypeTok{Const}\NormalTok{ a) }\FunctionTok{=} \DataTypeTok{Const}\NormalTok{ a}
  \CommentTok{-- so the following would not work}
  \CommentTok{-- fmap f x = x}
\end{Highlighting}
\end{Shaded}

\noindent It is missing a \VERB|\DataTypeTok{Pointed}| instance.
However, if the argument of \VERB|\DataTypeTok{Const}| is a
\VERB|\DataTypeTok{Monoid}| we can define it as

\begin{Shaded}
\begin{Highlighting}[]
\KeywordTok{instance} \DataTypeTok{Monoid}\NormalTok{ a }\OtherTok{=>} \DataTypeTok{Pointed}\NormalTok{ (}\DataTypeTok{Const}\NormalTok{ a) }\KeywordTok{where}
  \FunctionTok{pure}\NormalTok{ a }\FunctionTok{=} \DataTypeTok{Const} \FunctionTok{mempty}

\KeywordTok{instance} \DataTypeTok{Monoid}\NormalTok{ a }\OtherTok{=>} \DataTypeTok{Applicative}\NormalTok{ (}\DataTypeTok{Const}\NormalTok{ a) }\KeywordTok{where}
  \DataTypeTok{Const}\NormalTok{ x }\FunctionTok{<*>} \DataTypeTok{Const}\NormalTok{ a }\FunctionTok{=} \DataTypeTok{Const}\NormalTok{ (}\FunctionTok{mappend}\NormalTok{ x a)}
\end{Highlighting}
\end{Shaded}

\begin{remark}

\label{rem:applicative-as-app}

One can think of \VERB|\DataTypeTok{Applicative}\NormalTok{ f}| as
representing \emph{generalized function application} on structure
\VERB|\NormalTok{f}|: \VERB|\FunctionTok{pure}| lifts pure values into
\VERB|\NormalTok{f}| while
\VERB|\NormalTok{(}\FunctionTok{<*>}\NormalTok{)}| provides a way to
apply functions to arguments over \VERB|\NormalTok{f}|. Note however,
that \VERB|\DataTypeTok{Applicative}| is not a structure for
representing \emph{generalized functions} (e.g.
\VERB|\DataTypeTok{Applicative}| gives no way to compose functions or to
introduce lambdas, unlike the \VERB|\DataTypeTok{Monad}|, see
\cref{rem:monad-as-app}).

\end{remark}

\hypertarget{alternative}{%
\subsubsection{Alternative}\label{alternative}}

\label{sec:alternative}

\VERB|\DataTypeTok{Control.Applicative}| module of
\texttt{base}~\cite{Hackage:base4900} defines
\VERB|\DataTypeTok{Alternative}| class as a monoid on
\VERB|\DataTypeTok{Applicative}|
\VERB|\DataTypeTok{Functor}|s.\cref{fn:monoid-split}

\begin{Shaded}
\begin{Highlighting}[]
\KeywordTok{class} \DataTypeTok{Applicative}\NormalTok{ f }\OtherTok{=>} \DataTypeTok{Alternative}\NormalTok{ f }\KeywordTok{where}
\OtherTok{  empty ::}\NormalTok{ f a}
\OtherTok{  (<|>) ::}\NormalTok{ f a }\OtherTok{->}\NormalTok{ f a }\OtherTok{->}\NormalTok{ f a}

  \CommentTok{-- defined for performance reasons}
\OtherTok{  some ::}\NormalTok{ f a }\OtherTok{->}\NormalTok{ f [a]}
\NormalTok{  some v }\FunctionTok{=} \FunctionTok{fmap}\NormalTok{ (}\FunctionTok{:}\NormalTok{) v }\FunctionTok{<*>}\NormalTok{ many v}

\OtherTok{  many ::}\NormalTok{ f a }\OtherTok{->}\NormalTok{ f [a]}
\NormalTok{  many v }\FunctionTok{=}\NormalTok{ some v }\FunctionTok{<|>} \FunctionTok{pure}\NormalTok{ []}
\end{Highlighting}
\end{Shaded}

\noindent requiring monoid laws to hold for \VERB|\NormalTok{empty}| and
\VERB|\NormalTok{(}\FunctionTok{<\VerbBar{}>}\NormalTok{)}|

\begin{Shaded}
\begin{Highlighting}[]
\CommentTok{-- `empty` is left identity for `(<|>)`,}
\NormalTok{empty }\FunctionTok{<|>}\NormalTok{ x }\FunctionTok{==}\NormalTok{ x}

\CommentTok{-- `empty` is right identity for `(<|>)`,}
\NormalTok{x }\FunctionTok{<|>}\NormalTok{ empty }\FunctionTok{==}\NormalTok{ x}

\CommentTok{-- `(<|>)` is associative,}
\NormalTok{x }\FunctionTok{<|>}\NormalTok{ (y }\FunctionTok{<|>}\NormalTok{ z)}
  \FunctionTok{==}\NormalTok{ (x }\FunctionTok{<|>}\NormalTok{ y) }\FunctionTok{<|>}\NormalTok{ z}

\CommentTok{-- and both `some` and `many` are}
\CommentTok{-- extensionally equal to their}
\CommentTok{-- default implementations}
\NormalTok{some v }\FunctionTok{==} \FunctionTok{fmap}\NormalTok{ (}\FunctionTok{:}\NormalTok{) v }\FunctionTok{<*>}\NormalTok{ many v}
\NormalTok{many v }\FunctionTok{==}\NormalTok{ some v }\FunctionTok{<|>} \FunctionTok{pure}\NormalTok{ []}
\end{Highlighting}
\end{Shaded}

Combinators \VERB|\NormalTok{some}| and \VERB|\NormalTok{many}|,
similarly to \VERB|\FunctionTok{mconcat}|, commonly occur in functions
handling \VERB|\DataTypeTok{Alternative}|s and can have different
definitions varying in performance for different types. The most common
use of \VERB|\DataTypeTok{Alternative}| type class is parser combinators
(\cref{sec:parser-combinators}) where \VERB|\NormalTok{some}| and
\VERB|\NormalTok{many}| coincide with \texttt{+} ("one or more") and
\texttt{*} ("zero or more", Kleene star) operators from regular
expressions/EBNF. Before the introduction of
\VERB|\DataTypeTok{Alternative}| that role was played by now deprecated
\VERB|\DataTypeTok{MonadPlus}| class, currently defined in
\VERB|\DataTypeTok{Control.Monad}| of \texttt{base} as follows

\begin{Shaded}
\begin{Highlighting}[]
\KeywordTok{class}\NormalTok{ (}\DataTypeTok{Alternative}\NormalTok{ m, }\DataTypeTok{Monad}\NormalTok{ m) }\OtherTok{=>} \DataTypeTok{MonadPlus}\NormalTok{ m }\KeywordTok{where}
\OtherTok{  mzero ::}\NormalTok{ m a}
\NormalTok{  mzero }\FunctionTok{=}\NormalTok{ empty}

\OtherTok{  mplus ::}\NormalTok{ m a }\OtherTok{->}\NormalTok{ m a }\OtherTok{->}\NormalTok{ m a}
\NormalTok{  mplus }\FunctionTok{=}\NormalTok{ (}\FunctionTok{<|>}\NormalTok{)}
\end{Highlighting}
\end{Shaded}

We shall give example instance and usage of
\VERB|\DataTypeTok{Alternative}| in \cref{sec:parser-combinators}.

\hypertarget{purely-monadic}{%
\subsection{Purely Monadic}\label{purely-monadic}}

This subsection describes algebraic structures that involve
\VERB|\DataTypeTok{Monad}| type class and its instances.

\hypertarget{monad-definition}{%
\subsubsection{Monad definition}\label{monad-definition}}

\label{sec:monad}

\VERB|\DataTypeTok{GHC.Base}| from \texttt{base}~\cite{Hackage:base4900}
defines \VERB|\DataTypeTok{Monad}| in the following way using the
original (i.e. not \VERB|\DataTypeTok{Pointed}|) hierarchy (also, at the
time of writing \texttt{base} uses a bit uglier definition which is
discussed in \cref{sec:monad-fail})

\begin{Shaded}
\begin{Highlighting}[]
\KeywordTok{infixl} \DecValTok{1} \FunctionTok{>>=}
\KeywordTok{class} \DataTypeTok{Applicative}\NormalTok{ m }\OtherTok{=>} \DataTypeTok{Monad}\NormalTok{ m }\KeywordTok{where}
\OtherTok{  return  ::}\NormalTok{ a }\OtherTok{->}\NormalTok{ m a}
\OtherTok{  (>>=)   ::}\NormalTok{ m a }\OtherTok{->}\NormalTok{ (a }\OtherTok{->}\NormalTok{ m b) }\OtherTok{->}\NormalTok{ m b}
\end{Highlighting}
\end{Shaded}

\noindent and wants its instances to satisfy the following equations
known as "\VERB|\DataTypeTok{Monad}| laws"

\begin{Shaded}
\begin{Highlighting}[]
\CommentTok{-- `return` is left identity for `(>>=)`}
\FunctionTok{return}\NormalTok{ a }\FunctionTok{>>=}\NormalTok{ f }\FunctionTok{==}\NormalTok{ f a}

\CommentTok{-- `return` is right identity for `(>>=)`}
\NormalTok{f }\FunctionTok{>>=} \FunctionTok{return} \FunctionTok{==}\NormalTok{ f}

\CommentTok{-- `(>>=)` is associative}
\NormalTok{(f }\FunctionTok{>>=}\NormalTok{ g) }\FunctionTok{>>=}\NormalTok{ h }\FunctionTok{==}\NormalTok{ f }\FunctionTok{>>=}\NormalTok{ (\textbackslash{}x }\OtherTok{->}\NormalTok{ g x }\FunctionTok{>>=}\NormalTok{ h)}
\end{Highlighting}
\end{Shaded}

Note that this definition also expects the following additional
"unspoken laws" from its parent structures (see \cref{sec:boilerplate}
for definitions of \VERB|\NormalTok{liftM}| and \VERB|\NormalTok{ap}|).

\begin{Shaded}
\begin{Highlighting}[]
\FunctionTok{fmap} \FunctionTok{==}\NormalTok{ liftM}
\FunctionTok{pure} \FunctionTok{==} \FunctionTok{return}
\NormalTok{(}\FunctionTok{<*>}\NormalTok{) }\FunctionTok{==}\NormalTok{ ap}
\end{Highlighting}
\end{Shaded}

Moreover, we feel that the name "return" itself is an unfortunate
accident since \VERB|\FunctionTok{return}| only injects pure values into
\VERB|\NormalTok{m}| and does not "return" anywhere. We shall avoid that
problem and simplify the above equations by redefining
\VERB|\DataTypeTok{Monad}| using \VERB|\DataTypeTok{Pointed}| hierarchy
instead

\begin{Shaded}
\begin{Highlighting}[]
\KeywordTok{infixl} \DecValTok{1} \FunctionTok{>>=}
\KeywordTok{class} \DataTypeTok{Applicative}\NormalTok{ m }\OtherTok{=>} \DataTypeTok{Monad}\NormalTok{ m }\KeywordTok{where}
\OtherTok{  (>>=)   ::}\NormalTok{ m a }\OtherTok{->}\NormalTok{ (a }\OtherTok{->}\NormalTok{ m b) }\OtherTok{->}\NormalTok{ m b}

\CommentTok{-- for backward-compatibility}
\FunctionTok{return}\OtherTok{ ::} \DataTypeTok{Monad}\NormalTok{ m }\OtherTok{=>}\NormalTok{ a }\OtherTok{->}\NormalTok{ m a}
\FunctionTok{return} \FunctionTok{=} \FunctionTok{pure}
\end{Highlighting}
\end{Shaded}

\begin{remark}

\label{rem:monad-as-app}

Note that while \VERB|\DataTypeTok{Applicative}| is too weak to express
\emph{generalized functions} (\cref{rem:applicative-as-app}),
\VERB|\DataTypeTok{Monad}|, in some sense, is too strong since
\VERB|\NormalTok{(}\FunctionTok{>>=}\NormalTok{)}| combines function
composition (the whole type) with lambda introduction (the type of the
second argument). This might be easier to see with the definition given
in \cref{sec:monad-fish}.

What is the "just right" structure for representing a \emph{generalized
function} is a matter of debate: some would state "an
\VERB|\DataTypeTok{Arrow}|!"~\cite{hughes-arrows-00}, others "a
(Cartesian Closed)
\VERB|\DataTypeTok{Category}|!"~\cite{Elliott:2017:CTC}, yet others
might disagree with both.

\end{remark}

A very common combinator used with \VERB|\DataTypeTok{Monad}|s bears a
name of \VERB|\NormalTok{(}\FunctionTok{>>}\NormalTok{)}| and can be
defined as

\begin{Shaded}
\begin{Highlighting}[]
\OtherTok{(>>) ::} \DataTypeTok{Monad}\NormalTok{ m }\OtherTok{=>}\NormalTok{ m a }\OtherTok{->}\NormalTok{ m b }\OtherTok{->}\NormalTok{ m b}
\NormalTok{a }\FunctionTok{>>}\NormalTok{ b }\FunctionTok{=}\NormalTok{ a }\FunctionTok{>>=} \FunctionTok{const}\NormalTok{ b}
      \CommentTok{-- a >>= \textbackslash{}_ -> b}
\end{Highlighting}
\end{Shaded}

The following subsections will provide many example instances.

\hypertarget{monadfish}{%
\subsubsection{MonadFish}\label{monadfish}}

\label{sec:monad-fish}

A somewhat lesser known but equivalent way to define
\VERB|\DataTypeTok{Monad}| is to define
\VERB|\NormalTok{(}\FunctionTok{>>=}\NormalTok{)}| in "fish" form as
follows

\begin{Shaded}
\begin{Highlighting}[]
\KeywordTok{infixl} \DecValTok{1} \FunctionTok{>=>}
\KeywordTok{class} \DataTypeTok{Applicative}\NormalTok{ m }\OtherTok{=>} \DataTypeTok{MonadFish}\NormalTok{ m }\KeywordTok{where}
\OtherTok{  (>=>)   ::}\NormalTok{ (a }\OtherTok{->}\NormalTok{ m b) }\OtherTok{->}\NormalTok{ (b }\OtherTok{->}\NormalTok{ m c) }\OtherTok{->}\NormalTok{ (a }\OtherTok{->}\NormalTok{ m c)}
\end{Highlighting}
\end{Shaded}

This way \VERB|\DataTypeTok{Monad}| laws become
\VERB|\DataTypeTok{Monoid}| laws

\begin{Shaded}
\begin{Highlighting}[]
\CommentTok{-- `pure` is left identity for `(>=>)`}
\FunctionTok{pure} \FunctionTok{>=>}\NormalTok{ f }\FunctionTok{==}\NormalTok{ f}

\CommentTok{-- `pure` is right identity for `(>=>)`}
\NormalTok{f }\FunctionTok{>=>} \FunctionTok{pure} \FunctionTok{==}\NormalTok{ f}

\CommentTok{-- `(>=>)` is associative}
\NormalTok{(f }\FunctionTok{>=>}\NormalTok{ g) }\FunctionTok{>=>}\NormalTok{ h }\FunctionTok{==}\NormalTok{ f }\FunctionTok{>=>}\NormalTok{ (g }\FunctionTok{>=>}\NormalTok{ h)}
\end{Highlighting}
\end{Shaded}

Both definitions of \VERB|\DataTypeTok{Monad}| are known to be
equivalent in the folklore, but we could not find a reference with a
simple proof of that fact, hence we shall give one ourselves.

\begin{lemma}

\VERB|\NormalTok{(f }\FunctionTok{>=>}\NormalTok{ g) }\FunctionTok{.}\NormalTok{ h }\FunctionTok{==}\NormalTok{ (f }\FunctionTok{.}\NormalTok{ h) }\FunctionTok{>=>}\NormalTok{ g}|

\end{lemma}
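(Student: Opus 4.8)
The plan is to read the claimed equation not as a consequence of any \texttt{Monad} or \texttt{MonadFish} law, but as the \emph{free theorem} of \texttt{(>=>)}: parametric naturality of \texttt{(>=>)} in the domain type of its first argument. For a fixed \texttt{m}, the class method \texttt{(>=>) :: (a -> m b) -> (b -> m c) -> (a -> m c)} is polymorphic in \texttt{a}, \texttt{b}, and \texttt{c}; I would instantiate its free theorem by taking the identity relations on \texttt{b} and on \texttt{c} and the graph of an arbitrary \texttt{h :: a' -> a} for the variable \texttt{a}. Since the relations on \texttt{b} and \texttt{c} are identities and the ``relation'' on \texttt{a} is (the graph of) a function, the relational statement degenerates to an equation, and that equation, once written out, is exactly \texttt{(f >=> g) . h == (f . h) >=> g}.

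Concretely: fix \texttt{g :: b -> m c} and consider the section \texttt{(>=> g) :: (a -> m b) -> (a -> m c)}. Parametricity makes it a natural transformation between the two contravariant functors that send a type \texttt{x} to \texttt{(x -> m b)} and to \texttt{(x -> m c)} respectively, both acting on an arrow \texttt{h} by precomposition \texttt{(. h)}. The naturality square at \texttt{h :: a' -> a} has as its two legs the maps \texttt{\textbackslash f -> (f >=> g) . h} and \texttt{\textbackslash f -> (f . h) >=> g}; reading the square off, and noting that \texttt{g} was arbitrary, gives the lemma. Nothing about \texttt{(>=>)} beyond its type is used.

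For readers who prefer to avoid appeals to parametricity, I would add a remark: if \texttt{(>=>)} is pinned to its canonical shape \texttt{f >=> g = \textbackslash x -> f x >>= g} --- as it is when a \texttt{MonadFish} is derived from a \texttt{Monad} --- then the lemma is a one-step unfolding. Each side applied to an argument \texttt{y} reduces, by the definitions of \texttt{(.)} and of \texttt{(>=>)}, to \texttt{f (h y) >>= g}, so the two sides agree pointwise and hence, reading \texttt{==} on functions extensionally as elsewhere in the paper, are equal.

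I do not expect a genuine obstacle here: both arguments are essentially one line. What needs care is (i) making explicit that function equality is meant extensionally, and (ii) being clear about which reading of \texttt{(>=>)} is in force --- the abstract class method, for which the honest justification is parametricity and for which this lemma is exactly what later makes the \texttt{MonadFish}-to-\texttt{Monad} direction of the equivalence go through (the left-identity and associativity checks for the derived \texttt{(>>=)} both reduce to this lemma together with the \texttt{MonadFish} monoid laws), or the concrete \texttt{\textbackslash f g x -> f x >>= g}, for which plain unfolding already suffices.
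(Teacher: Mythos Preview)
Your proposal is correct and takes a genuinely different route from the paper.

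The paper's argument runs through the \texttt{MonadFish} monoid laws: it treats the pure \texttt{h} as a Kleisli arrow via an acknowledged ``abuse of notation'' (tacitly reading \texttt{h} as \texttt{pure . h}), rewrites \texttt{(f >=> g) . h} as \texttt{h >=> (f >=> g)}, inserts a \texttt{pure} by right identity, reassociates, and unwinds. The two bookend steps of that chain---the passages between \texttt{k . h} and \texttt{h >=> k} for pure \texttt{h}---are, once the abuse is unpacked, precisely the naturality of \texttt{(>=> k)} in its domain type, which is what your free-theorem reading supplies in one stroke and with no appeal to the laws at all. So the paper stays inside the equational calculus at the cost of an informal pure-to-Kleisli bridge; you make that bridge the whole argument and dispense with the monoid laws. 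Your secondary unfolding remark (valid when \texttt{(>=>)} is the concrete \texttt{\textbackslash{}x -> f x >>= g}) is also correct, and you are right that it is the parametricity reading that is needed where the lemma is actually invoked, namely in the \texttt{MonadFish}-to-\texttt{Monad} half of Lemma~2.
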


\begin{proof}

For pure values \VERB|\NormalTok{(}\FunctionTok{>=>}\NormalTok{)}| is a
composition with flipped order of arguments
\VERB|\NormalTok{(}\FunctionTok{.}\NormalTok{)}|

\begin{Shaded}
\begin{Highlighting}[]
\KeywordTok{instance} \DataTypeTok{MonadFish} \DataTypeTok{Identity} \KeywordTok{where}
\NormalTok{  f }\FunctionTok{>=>}\NormalTok{ g }\FunctionTok{=}\NormalTok{ g }\FunctionTok{.}\NormalTok{ runIdentity }\FunctionTok{.}\NormalTok{ f}
\end{Highlighting}
\end{Shaded}

In other words,
\VERB|\NormalTok{f }\FunctionTok{>=>}\NormalTok{ g }\FunctionTok{==}\NormalTok{ g }\FunctionTok{.}\NormalTok{ f}|,
which gives the following

\begin{Shaded}
\begin{Highlighting}[]
\NormalTok{(f }\FunctionTok{>=>}\NormalTok{ g) }\FunctionTok{.}\NormalTok{ h }\FunctionTok{==}\NormalTok{ h }\FunctionTok{>=>}\NormalTok{ (f }\FunctionTok{>=>}\NormalTok{ g)}
              \FunctionTok{==}\NormalTok{ (h }\FunctionTok{>=>} \FunctionTok{pure}\NormalTok{) }\FunctionTok{>=>}\NormalTok{ (f }\FunctionTok{>=>}\NormalTok{ g)}
              \FunctionTok{==}\NormalTok{ ((h }\FunctionTok{>=>} \FunctionTok{pure}\NormalTok{) }\FunctionTok{>=>}\NormalTok{ f) }\FunctionTok{>=>}\NormalTok{ g}
              \FunctionTok{==}\NormalTok{ (h }\FunctionTok{>=>}\NormalTok{ f) }\FunctionTok{>=>}\NormalTok{ g}
              \FunctionTok{==}\NormalTok{ (f }\FunctionTok{.}\NormalTok{ h) }\FunctionTok{>=>}\NormalTok{ g}
\end{Highlighting}
\end{Shaded}

\noindent which, with some abuse of notation
(\VERB|\NormalTok{(}\FunctionTok{>=>}\NormalTok{)}| is not
heterogeneous, the above lifts pure values into \VERB|\NormalTok{m}|
with \VERB|\FunctionTok{pure}|), can be written simply as

\begin{Shaded}
\begin{Highlighting}[]
\NormalTok{(f }\FunctionTok{>=>}\NormalTok{ g) }\FunctionTok{.}\NormalTok{ h }\FunctionTok{==}\NormalTok{ h }\FunctionTok{>=>}\NormalTok{ (f }\FunctionTok{>=>}\NormalTok{ g)}
              \FunctionTok{==}\NormalTok{ (h }\FunctionTok{>=>}\NormalTok{ f) }\FunctionTok{>=>}\NormalTok{ g}
              \FunctionTok{==}\NormalTok{ (f }\FunctionTok{.}\NormalTok{ h) }\FunctionTok{>=>}\NormalTok{ g}
\end{Highlighting}
\end{Shaded}

\end{proof}

\begin{lemma}

\VERB|\DataTypeTok{Monad}| and \VERB|\DataTypeTok{MonadFish}| define the
same structure.

\end{lemma}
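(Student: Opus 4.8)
The plan is to give mutually inverse translations between the two presentations and to check that each carries the laws of one side to the laws of the other. Since \texttt{MonadFish} and \texttt{Monad} share the same \texttt{Pointed}/\texttt{Applicative} superclass, only the \texttt{(>>=)}/\texttt{(>=>)} operators and their equational laws need relating. In the \texttt{Monad}-to-\texttt{MonadFish} direction, define \texttt{(f >=> g) x = f x >>= g}. In the \texttt{MonadFish}-to-\texttt{Monad} direction, use the isomorphism between \texttt{m a} and \texttt{() -> m a} and define \texttt{m >>= g = (const m >=> g) ()}, i.e.\ regard a computation as a Kleisli-style arrow out of the unit type and fish it against \texttt{g}.

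First I would push the three \texttt{Monad} laws through the forward translation: \texttt{pure >=> f == f} is left identity of \texttt{(>>=)} read pointwise, \texttt{f >=> pure == f} is right identity, and associativity of \texttt{(>=>)} is associativity of \texttt{(>>=)} after unfolding both sides at an argument. This half is a direct pointwise calculation that uses nothing beyond the \texttt{Monad} laws.

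The backward direction is where the preceding Lemma --- the identity \texttt{(f >=> g) . h == (f . h) >=> g} --- earns its keep. For left identity, \texttt{pure a >>= f} unfolds to \texttt{(const (pure a) >=> f) ()}; writing \texttt{const (pure a) == pure . const a} and applying the Lemma rewrites this to \texttt{((pure >=> f) . const a) ()}, which the fish law \texttt{pure >=> f == f} and a \texttt{const}-reduction turn into \texttt{f a}. Right identity is immediate from the fish law \texttt{f >=> pure == f}, and associativity of \texttt{(>>=)} follows the same shape as left identity: unfold through \texttt{const}, slide the \texttt{const} past \texttt{(>=>)} with the Lemma, then apply associativity of \texttt{(>=>)}. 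Finally I would check that the two translations are mutually inverse. One round trip collapses at once, since \texttt{(const m >=> g) ()} reduces to \texttt{(const m () >>= g)} and hence to \texttt{m >>= g}. The other round trip invokes the Lemma once more, via \texttt{const (f x) == f . const x}, to rewrite \texttt{(const (f x) >=> g) ()} as \texttt{((f >=> g) . const x) ()}, which reduces to \texttt{(f >=> g) x}.

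I expect the main obstacle to be the bookkeeping around the \texttt{m a} $\cong$ \texttt{() -> m a} encoding: picking \texttt{const} consistently so the types line up, and, at every application of the preceding Lemma, instantiating its \texttt{h} to exactly the right \texttt{const}-combinator so the rewrite matches. A minor additional point is to observe that the unspoken coherence laws tying \texttt{fmap}, \texttt{(<*>)} and \texttt{pure} to \texttt{liftM}, \texttt{ap} and \texttt{return} are unaffected, which is immediate because the \texttt{Applicative} layer is shared and neither translation touches it. Everything else reduces to a short chain of one-line rewrites.
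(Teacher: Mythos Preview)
Your proof is correct, but it diverges from the paper's in two noteworthy ways. First, the cross-definition you choose for \texttt{(>>=)} goes through the isomorphism \texttt{m a}~$\cong$~\texttt{() -> m a}, setting \texttt{m >>= g = (const m >=> g) ()}, whereas the paper uses the slicker \texttt{m >>= f = (id >=> f) m}, treating \texttt{id :: m a -> m a} itself as a Kleisli arrow with domain \texttt{m a}. Both work, and both invoke the preceding Lemma (\texttt{(f >=> g) . h == (f . h) >=> g}) for the \texttt{MonadFish}-to-\texttt{Monad}-to-\texttt{MonadFish} round trip, but the paper's choice avoids the \texttt{const}/\texttt{()} bookkeeping you correctly flagged as the main obstacle. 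Second, you do strictly more than the paper: you verify that each translation carries the three laws of one presentation to the three laws of the other, while the paper's proof is content to exhibit the cross-definitions and check that the two round trips are identities on the operators. Your extra law-preservation checks are not wasted effort---they make explicit what the paper leaves to the reader---but if you want to match the paper's level of detail you can drop them and keep only the mutual-inverse calculation.
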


\begin{proof}

The cross-definitions:

\begin{Shaded}
\begin{Highlighting}[]
\KeywordTok{instance}\NormalTok{ (}\DataTypeTok{Applicative}\NormalTok{ m, }\DataTypeTok{Monad}\NormalTok{ m) }\OtherTok{=>} \DataTypeTok{MonadFish}\NormalTok{ m }\KeywordTok{where}
\NormalTok{  f }\FunctionTok{>=>}\NormalTok{ g }\FunctionTok{=}\NormalTok{ \textbackslash{}a }\OtherTok{->}\NormalTok{ (f a) }\FunctionTok{>>=}\NormalTok{ g }\CommentTok{-- (1)}

\KeywordTok{instance} \OtherTok{\{-# OVERLAPPABLE #-\}}
\NormalTok{         (}\DataTypeTok{Applicative}\NormalTok{ m, }\DataTypeTok{MonadFish}\NormalTok{ m) }\OtherTok{=>} \DataTypeTok{Monad}\NormalTok{ m }\KeywordTok{where}
\NormalTok{  ma }\FunctionTok{>>=}\NormalTok{ f }\FunctionTok{=}\NormalTok{ (}\FunctionTok{id} \FunctionTok{>=>}\NormalTok{ f) ma }\CommentTok{-- (2)}
\end{Highlighting}
\end{Shaded}

\begin{itemize}
\item
  (1) implies (2):

\begin{Shaded}
\begin{Highlighting}[]
\NormalTok{ma }\FunctionTok{>>=}\NormalTok{ f }\FunctionTok{==}\NormalTok{ (}\FunctionTok{id} \FunctionTok{>=>}\NormalTok{ f) ma}
         \FunctionTok{==}\NormalTok{ (\textbackslash{}a }\OtherTok{->} \FunctionTok{id}\NormalTok{ a }\FunctionTok{>>=}\NormalTok{ f) ma}
         \FunctionTok{==}\NormalTok{ ma }\FunctionTok{>>=}\NormalTok{ f}
\end{Highlighting}
\end{Shaded}
\item
  (2) implies (1):

\begin{Shaded}
\begin{Highlighting}[]
\NormalTok{f }\FunctionTok{>=>}\NormalTok{ g }\FunctionTok{==}\NormalTok{ \textbackslash{}a }\OtherTok{->}\NormalTok{ (f a) }\FunctionTok{>>=}\NormalTok{ g}
        \FunctionTok{==}\NormalTok{ \textbackslash{}a }\OtherTok{->}\NormalTok{ (}\FunctionTok{id} \FunctionTok{>=>}\NormalTok{ g) (f a)}
        \FunctionTok{==}\NormalTok{ (}\FunctionTok{id} \FunctionTok{>=>}\NormalTok{ g) }\FunctionTok{.}\NormalTok{ f}
        \FunctionTok{==}\NormalTok{ (}\FunctionTok{id} \FunctionTok{.}\NormalTok{ f) }\FunctionTok{>=>}\NormalTok{ g}
        \FunctionTok{==}\NormalTok{ f }\FunctionTok{>=>}\NormalTok{ g}
\end{Highlighting}
\end{Shaded}
\end{itemize}

\end{proof}

\hypertarget{monads-fail-and-monadfail}{%
\subsubsection{Monad's fail and
MonadFail}\label{monads-fail-and-monadfail}}

\label{sec:monad-fail}

Section~\ref{sec:monad} did not give the complete definition of
\VERB|\DataTypeTok{Monad}| as is defined in the current version of
\texttt{base}~\cite{Hackage:base4900}. Current
\VERB|\DataTypeTok{GHC.Base}| module defines \VERB|\DataTypeTok{Monad}|
in the following way using the original (not
\VERB|\DataTypeTok{Pointed}|) hierarchy

\begin{Shaded}
\begin{Highlighting}[]
\KeywordTok{infixl} \DecValTok{1} \FunctionTok{>>=}
\KeywordTok{class} \DataTypeTok{Applicative}\NormalTok{ m }\OtherTok{=>} \DataTypeTok{Monad}\NormalTok{ m }\KeywordTok{where}
\OtherTok{  return  ::}\NormalTok{ a }\OtherTok{->}\NormalTok{ m a}
\OtherTok{  (>>=)   ::}\NormalTok{ m a }\OtherTok{->}\NormalTok{ (a }\OtherTok{->}\NormalTok{ m b) }\OtherTok{->}\NormalTok{ m b}

\OtherTok{  fail    ::} \DataTypeTok{String} \OtherTok{->}\NormalTok{ m a}
  \FunctionTok{fail}\NormalTok{ s  }\FunctionTok{=} \FunctionTok{error}\NormalTok{ s}
\end{Highlighting}
\end{Shaded}

Note the definition of the \VERB|\FunctionTok{fail}| operation. That
function is invoked by the compiler on pattern match failures in
\VERB|\KeywordTok{do}|-expressions (see \cref{sec:non-exhaustive} for
examples, see \cref{sec:error-undefined} for the definition of
\VERB|\FunctionTok{error}|), but it can also be called explicitly by the
programmer in any context where the type permits to do so.

The presence of \VERB|\FunctionTok{fail}| in \VERB|\DataTypeTok{Monad}|
class is, clearly\footnote{It involves an error handling mechanism that
  is more complicated than the thing itself. It creates semantic
  discrepancies (e.g. \VERB|\DataTypeTok{Maybe}| is not equivalent to
  \VERB|\DataTypeTok{Either}\NormalTok{ ()}|, see \cref{sec:either}).},
a hack. There is an ongoing effort (aka "\VERB|\DataTypeTok{MonadFail}|
proposal", "MFP") to move this function from \VERB|\DataTypeTok{Monad}|
to its own type class defined as follows (in both hierarchies)

\begin{Shaded}
\begin{Highlighting}[]
\KeywordTok{class} \DataTypeTok{Monad}\NormalTok{ m }\OtherTok{=>} \DataTypeTok{MonadFail}\NormalTok{ m }\KeywordTok{where}
\OtherTok{  fail ::} \DataTypeTok{String} \OtherTok{->}\NormalTok{ m a}
  \FunctionTok{fail}\NormalTok{ s  }\FunctionTok{=} \FunctionTok{error}\NormalTok{ s}
\end{Highlighting}
\end{Shaded}

As of writing of this article the new class is available from
\VERB|\DataTypeTok{Control.Monad.Fail}|, but \VERB|\FunctionTok{fail}|
from the original \VERB|\DataTypeTok{Monad}| is not even deprecated yet.
We shall use \VERB|\DataTypeTok{MonadFail}| instead of the original
\VERB|\FunctionTok{fail}| in our hierarchy for simplicity.

\hypertarget{identity-monad}{%
\subsubsection{Identity monad}\label{identity-monad}}

\label{sec:identity-monad}

We can define the following \VERB|\DataTypeTok{Monad}| and
\VERB|\DataTypeTok{MonadFail}| instances for the
\VERB|\DataTypeTok{Identity}| \VERB|\DataTypeTok{Functor}|

\begin{Shaded}
\begin{Highlighting}[]
\KeywordTok{instance} \DataTypeTok{Monad} \DataTypeTok{Identity} \KeywordTok{where}
\NormalTok{  (}\DataTypeTok{Identity}\NormalTok{ x) }\FunctionTok{>>=}\NormalTok{ f }\FunctionTok{=}\NormalTok{ f x}

\KeywordTok{instance} \DataTypeTok{MonadFail} \DataTypeTok{Identity} \KeywordTok{where}
  \CommentTok{-- default implementation}
\end{Highlighting}
\end{Shaded}

\noindent despite this instance it is still usually referenced as
"\VERB|\DataTypeTok{Identity}| \VERB|\DataTypeTok{Functor}|" even though
it is also an \VERB|\DataTypeTok{Applicative}| and a
\VERB|\DataTypeTok{Monad}|.

\hypertarget{maybe-monad}{%
\subsubsection{Maybe monad}\label{maybe-monad}}

\label{sec:maybe}

The simplest form of \VERB|\DataTypeTok{Monad}|ic error handling (that
is, not just "error ignoring") can be done with
\VERB|\DataTypeTok{Maybe}| data type and its \VERB|\DataTypeTok{Monad}|
instance defined in \VERB|\DataTypeTok{Data.Maybe}| of
\texttt{base}~\cite{Hackage:base4900} equivalently to

\begin{Shaded}
\begin{Highlighting}[]
\KeywordTok{data} \DataTypeTok{Maybe}\NormalTok{ a }\FunctionTok{=} \DataTypeTok{Nothing} \FunctionTok{|} \DataTypeTok{Just}\NormalTok{ a}

\KeywordTok{instance} \DataTypeTok{Pointed} \DataTypeTok{Maybe} \KeywordTok{where}
  \FunctionTok{pure} \FunctionTok{=} \DataTypeTok{Just}

\KeywordTok{instance} \DataTypeTok{Monad} \DataTypeTok{Maybe} \KeywordTok{where}
\NormalTok{  (}\DataTypeTok{Just}\NormalTok{ x) }\FunctionTok{>>=}\NormalTok{ k }\FunctionTok{=}\NormalTok{ k x}
  \DataTypeTok{Nothing}  \FunctionTok{>>=}\NormalTok{ _ }\FunctionTok{=} \DataTypeTok{Nothing}

\KeywordTok{instance} \DataTypeTok{MonadFail} \DataTypeTok{Maybe} \KeywordTok{where}
  \CommentTok{-- custom `fail`}
  \FunctionTok{fail}\NormalTok{ _         }\FunctionTok{=} \DataTypeTok{Nothing}
\end{Highlighting}
\end{Shaded}

The \VERB|\FunctionTok{pure}| operator simply injects a given value
under \VERB|\DataTypeTok{Just}| constructor, while the definition of
\VERB|\NormalTok{(}\FunctionTok{>>=}\NormalTok{)}| ensures that

\begin{itemize}
\item
  injected values are transparently propagated further down the
  computation path,
\item
  computation stops as soon as the first \VERB|\DataTypeTok{Nothing}|
  gets emitted.
\end{itemize}

In other words, \VERB|\DataTypeTok{Maybe}| \VERB|\DataTypeTok{Monad}| is
\VERB|\DataTypeTok{Identity}| \VERB|\DataTypeTok{Monad}| that can stop
its computation on request. A couple of examples follow

\begin{Shaded}
\begin{Highlighting}[]
\OtherTok{maybeTest1 ::} \DataTypeTok{Maybe} \DataTypeTok{Int}
\NormalTok{maybeTest1 }\FunctionTok{=} \KeywordTok{do}
\NormalTok{  x }\OtherTok{<-} \DataTypeTok{Just} \DecValTok{1}
  \FunctionTok{pure}\NormalTok{ x}

\OtherTok{maybeTest2 ::} \DataTypeTok{Maybe} \DataTypeTok{Int}
\NormalTok{maybeTest2 }\FunctionTok{=} \KeywordTok{do}
\NormalTok{  x }\OtherTok{<-} \DataTypeTok{Just} \DecValTok{1}
  \FunctionTok{pure}\NormalTok{ x}
  \DataTypeTok{Nothing}
  \DataTypeTok{Just} \DecValTok{2}

\NormalTok{maybeTest }\FunctionTok{=}\NormalTok{ maybeTest1 }\FunctionTok{==} \DataTypeTok{Just} \DecValTok{1}
         \FunctionTok{&&}\NormalTok{ maybeTest2 }\FunctionTok{==} \DataTypeTok{Nothing}
\end{Highlighting}
\end{Shaded}

\hypertarget{either-monad}{%
\subsubsection{Either monad}\label{either-monad}}

\label{sec:either}

\VERB|\DataTypeTok{Either}| data type is defined in
\VERB|\DataTypeTok{Data.Either}| of
\texttt{base}~\cite{Hackage:base4900} equivalently to

\begin{Shaded}
\begin{Highlighting}[]
\KeywordTok{data} \DataTypeTok{Either}\NormalTok{ a b }\FunctionTok{=} \DataTypeTok{Left}\NormalTok{ a }\FunctionTok{|} \DataTypeTok{Right}\NormalTok{ b}

\KeywordTok{instance} \DataTypeTok{Pointed}\NormalTok{ (}\DataTypeTok{Either}\NormalTok{ e) }\KeywordTok{where}
  \FunctionTok{pure} \FunctionTok{=} \DataTypeTok{Right}

\KeywordTok{instance} \DataTypeTok{Monad}\NormalTok{ (}\DataTypeTok{Either}\NormalTok{ e) }\KeywordTok{where}
  \DataTypeTok{Left}\NormalTok{  l }\FunctionTok{>>=}\NormalTok{ _ }\FunctionTok{=} \DataTypeTok{Left}\NormalTok{ l}
  \DataTypeTok{Right}\NormalTok{ r }\FunctionTok{>>=}\NormalTok{ k }\FunctionTok{=}\NormalTok{ k r}

\KeywordTok{instance} \DataTypeTok{MonadFail}\NormalTok{ (}\DataTypeTok{Either}\NormalTok{ e)}
  \CommentTok{-- default `fail`}
\end{Highlighting}
\end{Shaded}

\VERB|\DataTypeTok{Either}| is a computation that can stop and report a
given value (the argument of \VERB|\DataTypeTok{Left}|) when falling out
of \VERB|\DataTypeTok{Identity}| execution. The intended use is similar
to \VERB|\DataTypeTok{Maybe}|

\begin{Shaded}
\begin{Highlighting}[]
\OtherTok{eitherTest1 ::} \DataTypeTok{Either} \DataTypeTok{String} \DataTypeTok{Int}
\NormalTok{eitherTest1 }\FunctionTok{=} \KeywordTok{do}
\NormalTok{  x }\OtherTok{<-} \DataTypeTok{Right} \DecValTok{1}
  \FunctionTok{pure}\NormalTok{ x}

\OtherTok{eitherTest2 ::} \DataTypeTok{Either} \DataTypeTok{String} \DataTypeTok{Int}
\NormalTok{eitherTest2 }\FunctionTok{=} \KeywordTok{do}
\NormalTok{  x }\OtherTok{<-} \DataTypeTok{Right} \DecValTok{1}
  \FunctionTok{pure}\NormalTok{ x}
  \DataTypeTok{Left} \StringTok{"oops"}
  \DataTypeTok{Right} \DecValTok{2}

\NormalTok{eitherTest }\FunctionTok{=}\NormalTok{ eitherTest1 }\FunctionTok{==} \DataTypeTok{Right} \DecValTok{1}
          \FunctionTok{&&}\NormalTok{ eitherTest2 }\FunctionTok{==} \DataTypeTok{Left} \StringTok{"oops"}
\end{Highlighting}
\end{Shaded}

Purely by its data type definition
\VERB|\DataTypeTok{Maybe}\NormalTok{ a}| is isomorphic to
\VERB|\DataTypeTok{Either}\NormalTok{ () a}| (where
\VERB|\NormalTok{()}| is Haskell's name for the ML's
\VERB|\DataTypeTok{unit}| type and type-theoretic "top" type), but their
\VERB|\DataTypeTok{Monad}| instances (in the original hierarchy,
\VERB|\DataTypeTok{MonadFail}| in our hierarchy) differ:
\VERB|\DataTypeTok{Maybe}| has non-default \VERB|\FunctionTok{fail}|,
while \VERB|\DataTypeTok{Either}| does not. This produces some
observable differences discussed in \cref{sec:non-exhaustive}.

\hypertarget{an-intermission-on-monadic-boilerplate}{%
\subsection{An intermission on Monadic
boilerplate}\label{an-intermission-on-monadic-boilerplate}}

\label{sec:boilerplate}

Haskell does not support default definitions for functions in
superclasses that use definitions given in subclasses. That is, Haskell
has no syntax to define \VERB|\DataTypeTok{Functor}| and
\VERB|\DataTypeTok{Applicative}| defaults from
\VERB|\DataTypeTok{Monad}| instance of the same type.

Which is why to compile the code above we have to borrow a couple of
functions from \VERB|\DataTypeTok{Control.Monad}| of \texttt{base}

\begin{Shaded}
\begin{Highlighting}[]
\OtherTok{liftM ::}\NormalTok{ (}\DataTypeTok{Monad}\NormalTok{ m)}
      \OtherTok{=>}\NormalTok{ (a }\OtherTok{->}\NormalTok{ b) }\OtherTok{->}\NormalTok{ m a }\OtherTok{->}\NormalTok{ m b}
\NormalTok{liftM f ma }\FunctionTok{=}\NormalTok{ ma }\FunctionTok{>>=} \FunctionTok{pure} \FunctionTok{.}\NormalTok{ f}

\OtherTok{ap ::}\NormalTok{ (}\DataTypeTok{Monad}\NormalTok{ m)}
   \OtherTok{=>}\NormalTok{ m (a }\OtherTok{->}\NormalTok{ b) }\OtherTok{->}\NormalTok{ m a }\OtherTok{->}\NormalTok{ m b}
\NormalTok{ap mf ma }\FunctionTok{=}\NormalTok{ mf }\FunctionTok{>>=}\NormalTok{ \textbackslash{}f }\OtherTok{->}\NormalTok{ liftM f ma}
\end{Highlighting}
\end{Shaded}

\noindent and use them to define

\begin{Shaded}
\begin{Highlighting}[]
\KeywordTok{instance} \DataTypeTok{Functor} \DataTypeTok{Maybe} \KeywordTok{where}
  \FunctionTok{fmap} \FunctionTok{=}\NormalTok{ liftM}

\KeywordTok{instance} \DataTypeTok{Applicative} \DataTypeTok{Maybe} \KeywordTok{where}
\NormalTok{  (}\FunctionTok{<*>}\NormalTok{) }\FunctionTok{=}\NormalTok{ ap}
\end{Highlighting}
\end{Shaded}

\noindent and analogously for \VERB|\DataTypeTok{Either}|. For all the
listings that follow we shall silently hide this type of boiler-plate
code from the paper version where appropriate (it can still be observed
in the Literate Haskell version).

\hypertarget{monadtransformers}{%
\subsection{MonadTransformers}\label{monadtransformers}}

The problem with \VERB|\DataTypeTok{Monad}|s is that they, in general,
do not compose. \VERB|\DataTypeTok{Monad}|
transformers~\cite{Liang:1995:MTM} provide a systematic way to define
structures that represent "a \VERB|\DataTypeTok{Monad}| with a hole"
that allow computations from an inner \VERB|\DataTypeTok{Monad}|
\VERB|\NormalTok{m}| to be \VERB|\NormalTok{lift}|ed through a hole in
an outer \VERB|\DataTypeTok{Monad}\NormalTok{ (t m)}|
(\VERB|\NormalTok{t}| transforms monad \VERB|\NormalTok{m}|, hence
"monad transformer"). The main type class is defined in
\VERB|\DataTypeTok{Control.Monad.Trans.Class}| module of
\texttt{transformers}~\cite{Hackage:transformers0520} package as follows

\begin{Shaded}
\begin{Highlighting}[]
\KeywordTok{class} \DataTypeTok{MonadTrans}\NormalTok{ t }\KeywordTok{where}
\OtherTok{  lift ::}\NormalTok{ (}\DataTypeTok{Monad}\NormalTok{ m) }\OtherTok{=>}\NormalTok{ m a }\OtherTok{->}\NormalTok{ t m a}
\end{Highlighting}
\end{Shaded}

Haskell type class system is not flexible enough to encode the
requirement that \VERB|\NormalTok{t m}| needs to be a
\VERB|\DataTypeTok{Monad}| in a single definition, so it has to be
encoded in every instance by using the following instance schema

\begin{Shaded}
\begin{Highlighting}[]
\KeywordTok{instance} \DataTypeTok{Monad}\NormalTok{ m }\OtherTok{=>} \DataTypeTok{Monad}\NormalTok{ (t m) }\KeywordTok{where}
  \CommentTok{-- ...}
\end{Highlighting}
\end{Shaded}

Different \VERB|\DataTypeTok{MonadTrans}|formers (\VERB|\NormalTok{t1}|,
\VERB|\NormalTok{t2}| \ldots{} \VERB|\NormalTok{tn}|) can then be
composed with an arbitrary \VERB|\DataTypeTok{Monad}|
\VERB|\NormalTok{m}| (usually called "\emph{the} inner
\VERB|\DataTypeTok{Monad}|") using the following scheme

\begin{Shaded}
\begin{Highlighting}[]
\KeywordTok{newtype}\NormalTok{ comp m a }\FunctionTok{=}\NormalTok{ t1 (t2 (}\FunctionTok{..}\NormalTok{ (tn (m a))))}
\end{Highlighting}
\end{Shaded}

\noindent and the whole \VERB|\NormalTok{comp}|osed stack would get a
\VERB|\DataTypeTok{Monad}| instance inferred for it. Popular choices for
the inner \VERB|\DataTypeTok{Monad}| \VERB|\NormalTok{m}| include
\VERB|\DataTypeTok{Identity}| \VERB|\DataTypeTok{Functor}| and
\VERB|\DataTypeTok{IO}| \VERB|\DataTypeTok{Monad}| (see
\cref{sec:imprecise}).

In short, \VERB|\DataTypeTok{MonadTrans}|formers are, pretty much,
composable \VERB|\DataTypeTok{Monad}|ic structures. The following
subsections will provide many example instances. For an in-depth
tutorial readers are referred to \cite{Jones:1995:FPO}
and~\cite{Liang:1995:MTM}.

\hypertarget{identity}{%
\subsubsection{Identity}\label{identity}}

\label{sec:identity-monadtrans}

The simplest \VERB|\DataTypeTok{MonadTrans}|former is
\VERB|\DataTypeTok{IdentityT}| defined in
\VERB|\DataTypeTok{Control.Monad.Trans.Identity}| of
\texttt{transformers}~\cite{Hackage:transformers0520} package
equivalently to

\begin{Shaded}
\begin{Highlighting}[]
\KeywordTok{newtype} \DataTypeTok{IdentityT}\NormalTok{ m a }\FunctionTok{=} \DataTypeTok{IdentityT}
\NormalTok{  \{}\OtherTok{ runIdentityT ::}\NormalTok{ m a \}}

\KeywordTok{instance} \DataTypeTok{MonadTrans} \DataTypeTok{IdentityT} \KeywordTok{where}
\NormalTok{  lift }\FunctionTok{=} \DataTypeTok{IdentityT}

\KeywordTok{instance} \DataTypeTok{Monad}\NormalTok{ m}
      \OtherTok{=>} \DataTypeTok{Pointed}\NormalTok{ (}\DataTypeTok{IdentityT}\NormalTok{ m) }\KeywordTok{where}
  \FunctionTok{pure} \FunctionTok{=}\NormalTok{ lift }\FunctionTok{.} \FunctionTok{pure}

\KeywordTok{instance} \DataTypeTok{Monad}\NormalTok{ m}
      \OtherTok{=>} \DataTypeTok{Monad}\NormalTok{ (}\DataTypeTok{IdentityT}\NormalTok{ m) }\KeywordTok{where}
\NormalTok{  x }\FunctionTok{>>=}\NormalTok{ f }\FunctionTok{=} \DataTypeTok{IdentityT} \FunctionTok{$} \KeywordTok{do}
\NormalTok{    v }\OtherTok{<-}\NormalTok{ runIdentityT x}
\NormalTok{    runIdentityT (f v)}
\end{Highlighting}
\end{Shaded}

\begin{remark}

\label{rem:identity-transformer}

Note that \VERB|\DataTypeTok{IdentityT}|
\VERB|\DataTypeTok{MonadTrans}|former is different from
\VERB|\DataTypeTok{Identity}| \VERB|\DataTypeTok{Monad}| and cannot be
redefined as simply

\begin{Shaded}
\begin{Highlighting}[]
\KeywordTok{type} \DataTypeTok{IdentityT'}\NormalTok{ m a }\FunctionTok{=} \DataTypeTok{Identity}\NormalTok{ (m a)}
\end{Highlighting}
\end{Shaded}

\noindent (even though the data type definition matches exactly) because
\VERB|\DataTypeTok{IdentityT}| "inherits" \VERB|\DataTypeTok{Monad}|
implementation from its argument \VERB|\NormalTok{m}| while
\VERB|\DataTypeTok{Identity}| provides its own. I.e.
\VERB|\DataTypeTok{IdentityT}| is an identity on
\VERB|\DataTypeTok{MonadTrans}|formers while
\VERB|\DataTypeTok{Identity}| is an identity on types.

In particular, for
\VERB|\DataTypeTok{Identity}\NormalTok{ (}\DataTypeTok{Maybe}\NormalTok{ a)}|

\begin{Shaded}
\begin{Highlighting}[]
\FunctionTok{pure} \FunctionTok{==} \DataTypeTok{Identity}
\end{Highlighting}
\end{Shaded}

\noindent while for
\VERB|\DataTypeTok{IdentityT} \DataTypeTok{Maybe}\NormalTok{ a}|

\begin{Shaded}
\begin{Highlighting}[]
\FunctionTok{pure} \FunctionTok{==} \DataTypeTok{IdentityT} \FunctionTok{.} \FunctionTok{pure} \FunctionTok{==} \DataTypeTok{IdentityT} \FunctionTok{.} \DataTypeTok{Just}
\end{Highlighting}
\end{Shaded}

\end{remark}

\hypertarget{maybe}{%
\subsubsection{Maybe}\label{maybe}}

Transformer version of \VERB|\DataTypeTok{Maybe}| called
\VERB|\DataTypeTok{MaybeT}| is defined in
\VERB|\DataTypeTok{Control.Monad.Trans.Maybe}| from
\texttt{transformers}~\cite{Hackage:transformers0520} package
equivalently to

\begin{Shaded}
\begin{Highlighting}[]
\KeywordTok{newtype} \DataTypeTok{MaybeT}\NormalTok{ m a }\FunctionTok{=} \DataTypeTok{MaybeT}
\NormalTok{  \{}\OtherTok{ runMaybeT ::}\NormalTok{ m (}\DataTypeTok{Maybe}\NormalTok{ a) \}}

\KeywordTok{instance} \DataTypeTok{MonadTrans} \DataTypeTok{MaybeT} \KeywordTok{where}
\NormalTok{  lift }\FunctionTok{=} \DataTypeTok{MaybeT} \FunctionTok{.}\NormalTok{ liftM }\DataTypeTok{Just}

\KeywordTok{instance} \DataTypeTok{Monad}\NormalTok{ m}
      \OtherTok{=>} \DataTypeTok{Pointed}\NormalTok{ (}\DataTypeTok{MaybeT}\NormalTok{ m) }\KeywordTok{where}
  \FunctionTok{pure} \FunctionTok{=}\NormalTok{ lift }\FunctionTok{.} \FunctionTok{pure}

\KeywordTok{instance} \DataTypeTok{Monad}\NormalTok{ m}
      \OtherTok{=>} \DataTypeTok{Monad}\NormalTok{ (}\DataTypeTok{MaybeT}\NormalTok{ m) }\KeywordTok{where}
\NormalTok{  x }\FunctionTok{>>=}\NormalTok{ f }\FunctionTok{=} \DataTypeTok{MaybeT} \FunctionTok{$} \KeywordTok{do}
\NormalTok{    v }\OtherTok{<-}\NormalTok{ runMaybeT x}
    \KeywordTok{case}\NormalTok{ v }\KeywordTok{of}
      \DataTypeTok{Nothing} \OtherTok{->} \FunctionTok{pure} \DataTypeTok{Nothing}
      \DataTypeTok{Just}\NormalTok{ y  }\OtherTok{->}\NormalTok{ runMaybeT (f y)}

\KeywordTok{instance} \DataTypeTok{MonadFail}\NormalTok{ m}
      \OtherTok{=>} \DataTypeTok{MonadFail}\NormalTok{ (}\DataTypeTok{MaybeT}\NormalTok{ m) }\KeywordTok{where}
  \FunctionTok{fail}\NormalTok{ _ }\FunctionTok{=} \DataTypeTok{MaybeT}\NormalTok{ (}\FunctionTok{pure} \DataTypeTok{Nothing}\NormalTok{)}
\end{Highlighting}
\end{Shaded}

\hypertarget{except}{%
\subsubsection{Except}\label{except}}

\label{sec:either-monadtrans}

Transformer version of \VERB|\DataTypeTok{Either}| for historical
reasons bears a name of \VERB|\DataTypeTok{ExceptT}| and is defined in
\VERB|\DataTypeTok{Control.Monad.Trans.Except}| from
\texttt{transformers}~\cite{Hackage:transformers0520} package
equivalently to

\begin{Shaded}
\begin{Highlighting}[]
\KeywordTok{newtype} \DataTypeTok{ExceptT}\NormalTok{ e m a}
  \FunctionTok{=} \DataTypeTok{ExceptT}\NormalTok{ \{ runExceptT}
\OtherTok{           ::}\NormalTok{ m (}\DataTypeTok{Either}\NormalTok{ e a) \}}

\KeywordTok{instance} \DataTypeTok{MonadTrans}\NormalTok{ (}\DataTypeTok{ExceptT}\NormalTok{ e) }\KeywordTok{where}
\NormalTok{  lift }\FunctionTok{=} \DataTypeTok{ExceptT} \FunctionTok{.}\NormalTok{ liftM }\DataTypeTok{Right}

\KeywordTok{instance} \DataTypeTok{Pointed}\NormalTok{ m}
      \OtherTok{=>} \DataTypeTok{Pointed}\NormalTok{ (}\DataTypeTok{ExceptT}\NormalTok{ e m) }\KeywordTok{where}
  \FunctionTok{pure}\NormalTok{ a }\FunctionTok{=} \DataTypeTok{ExceptT} \FunctionTok{$} \FunctionTok{pure}\NormalTok{ (}\DataTypeTok{Right}\NormalTok{ a)}

\KeywordTok{instance} \DataTypeTok{Monad}\NormalTok{ m}
      \OtherTok{=>} \DataTypeTok{Monad}\NormalTok{ (}\DataTypeTok{ExceptT}\NormalTok{ e m) }\KeywordTok{where}
\NormalTok{   m }\FunctionTok{>>=}\NormalTok{ k }\FunctionTok{=} \DataTypeTok{ExceptT} \FunctionTok{$} \KeywordTok{do}
\NormalTok{     a }\OtherTok{<-}\NormalTok{ runExceptT m}
     \KeywordTok{case}\NormalTok{ a }\KeywordTok{of}
       \DataTypeTok{Left}\NormalTok{  e }\OtherTok{->} \FunctionTok{pure}\NormalTok{ (}\DataTypeTok{Left}\NormalTok{ e)}
       \DataTypeTok{Right}\NormalTok{ x }\OtherTok{->}\NormalTok{ runExceptT (k x)}

\KeywordTok{instance} \DataTypeTok{MonadFail}\NormalTok{ m}
      \OtherTok{=>} \DataTypeTok{MonadFail}\NormalTok{ (}\DataTypeTok{ExceptT}\NormalTok{ e m) }\KeywordTok{where}
   \FunctionTok{fail} \FunctionTok{=} \DataTypeTok{ExceptT} \FunctionTok{.} \FunctionTok{fail}
\end{Highlighting}
\end{Shaded}

The main attraction of \VERB|\DataTypeTok{ExceptT}| for the purposes of
this article is the fact that it provides its own non-imprecise
non-dynamic-dispatching \VERB|\NormalTok{throw}| and
\VERB|\FunctionTok{catch}| operators defined as

\begin{Shaded}
\begin{Highlighting}[]
\OtherTok{throwE ::}\NormalTok{ (}\DataTypeTok{Monad}\NormalTok{ m) }\OtherTok{=>}\NormalTok{ e }\OtherTok{->} \DataTypeTok{ExceptT}\NormalTok{ e m a}
\NormalTok{throwE }\FunctionTok{=} \DataTypeTok{ExceptT} \FunctionTok{.} \FunctionTok{pure} \FunctionTok{.} \DataTypeTok{Left}

\OtherTok{catchE ::}\NormalTok{ (}\DataTypeTok{Monad}\NormalTok{ m) }\OtherTok{=>}
    \DataTypeTok{ExceptT}\NormalTok{ e m a}
    \OtherTok{->}\NormalTok{ (e }\OtherTok{->} \DataTypeTok{ExceptT}\NormalTok{ f m a)}
    \OtherTok{->} \DataTypeTok{ExceptT}\NormalTok{ f m a}
\NormalTok{m }\OtherTok{`catchE`}\NormalTok{ h }\FunctionTok{=} \DataTypeTok{ExceptT} \FunctionTok{$} \KeywordTok{do}
\NormalTok{    a }\OtherTok{<-}\NormalTok{ runExceptT m}
    \KeywordTok{case}\NormalTok{ a }\KeywordTok{of}
        \DataTypeTok{Left}\NormalTok{  l }\OtherTok{->}\NormalTok{ runExceptT (h l)}
        \DataTypeTok{Right}\NormalTok{ r }\OtherTok{->} \FunctionTok{pure}\NormalTok{ (}\DataTypeTok{Right}\NormalTok{ r)}
\end{Highlighting}
\end{Shaded}

There also exists deprecated \VERB|\DataTypeTok{ErrorT}| (defined in
\VERB|\DataTypeTok{Control.Monad.Trans.Error}| from
\texttt{transformers} package) which at the time of writing has exactly
the same definition as \VERB|\DataTypeTok{ExceptT}|

\begin{Shaded}
\begin{Highlighting}[]
\KeywordTok{newtype} \DataTypeTok{ErrorT}\NormalTok{ e m a}
  \FunctionTok{=} \DataTypeTok{ErrorT}\NormalTok{ \{ runErrorT}
\OtherTok{          ::}\NormalTok{ m (}\DataTypeTok{Either}\NormalTok{ e a) \}}
\end{Highlighting}
\end{Shaded}

\noindent but its instances require type class
\VERB|\DataTypeTok{Exception}| (see \cref{sec:exception}) from its
argument \VERB|\NormalTok{e}|. Older versions of \texttt{transformers}
package made this requirement in the definition of
\VERB|\DataTypeTok{ErrorT}|

\newpage

\begin{Shaded}
\begin{Highlighting}[]
\KeywordTok{newtype} \DataTypeTok{ErrorT}\NormalTok{ e m a}
  \FunctionTok{=} \DataTypeTok{Exception}\NormalTok{ e }\OtherTok{=>}
    \DataTypeTok{ErrorT}\NormalTok{ \{ runErrorT}
\OtherTok{          ::}\NormalTok{ m (}\DataTypeTok{Either}\NormalTok{ e a) \}}
\end{Highlighting}
\end{Shaded}

\noindent but that mechanism itself was deprecated awhile ago.

\hypertarget{reader-and-state}{%
\subsubsection{Reader and State}\label{reader-and-state}}

\label{sec:reader-state-monadtrans}

While there seems to be no way to directly use
\VERB|\DataTypeTok{Reader}| and \VERB|\DataTypeTok{State}|
\VERB|\DataTypeTok{Monad}|s for error handling, these structures are
used in \VERB|\DataTypeTok{IO}| \VERB|\DataTypeTok{Monad}| of
\cref{sec:imprecise} and parser combinators of
\cref{sec:parser-combinators}. This seems to be as good place as any to
define them.

\VERB|\DataTypeTok{Reader}| \VERB|\DataTypeTok{Monad}| is defined in
\VERB|\DataTypeTok{Control.Monad.Trans.Reader}| module of
\texttt{transformers}~\cite{Hackage:transformers0520} package
equivalently to

\begin{Shaded}
\begin{Highlighting}[]
\KeywordTok{type} \DataTypeTok{Reader}\NormalTok{ s }\FunctionTok{=} \DataTypeTok{ReaderT}\NormalTok{ s }\DataTypeTok{Identity}

\KeywordTok{newtype} \DataTypeTok{ReaderT}\NormalTok{ s m a }\FunctionTok{=} \DataTypeTok{ReaderT}\NormalTok{ \{}\OtherTok{ runReaderT ::}\NormalTok{ s }\OtherTok{->}\NormalTok{ m a \}}

\KeywordTok{instance} \DataTypeTok{MonadTrans}\NormalTok{ (}\DataTypeTok{ReaderT}\NormalTok{ s) }\KeywordTok{where}
\NormalTok{  lift m }\FunctionTok{=} \DataTypeTok{ReaderT} \FunctionTok{$}\NormalTok{ \textbackslash{}_ }\OtherTok{->}\NormalTok{ m}

\KeywordTok{instance} \DataTypeTok{Pointed}\NormalTok{ m }\OtherTok{=>} \DataTypeTok{Pointed}\NormalTok{ (}\DataTypeTok{ReaderT}\NormalTok{ s m) }\KeywordTok{where}
  \FunctionTok{pure}\NormalTok{ a }\FunctionTok{=} \DataTypeTok{ReaderT} \FunctionTok{$}\NormalTok{ \textbackslash{}_ }\OtherTok{->} \FunctionTok{pure}\NormalTok{ a}

\KeywordTok{instance} \DataTypeTok{Monad}\NormalTok{ m }\OtherTok{=>} \DataTypeTok{Monad}\NormalTok{ (}\DataTypeTok{ReaderT}\NormalTok{ s m) }\KeywordTok{where}
\NormalTok{  m }\FunctionTok{>>=}\NormalTok{ k  }\FunctionTok{=} \DataTypeTok{ReaderT} \FunctionTok{$}\NormalTok{ \textbackslash{}s }\OtherTok{->} \KeywordTok{do}
\NormalTok{    a }\OtherTok{<-}\NormalTok{ runReaderT m s}
\NormalTok{    runReaderT (k a) s}

\KeywordTok{instance} \DataTypeTok{MonadFail}\NormalTok{ m }\OtherTok{=>} \DataTypeTok{MonadFail}\NormalTok{ (}\DataTypeTok{ReaderT}\NormalTok{ s m) }\KeywordTok{where}
  \FunctionTok{fail}\NormalTok{ str }\FunctionTok{=} \DataTypeTok{ReaderT} \FunctionTok{$}\NormalTok{ \textbackslash{}_ }\OtherTok{->} \FunctionTok{fail}\NormalTok{ str}
\end{Highlighting}
\end{Shaded}

Meanwhile, \VERB|\DataTypeTok{State}| \VERB|\DataTypeTok{Monad}| is
defined in \VERB|\DataTypeTok{Control.Monad.Trans.State.Lazy}| and
\VERB|\DataTypeTok{Control.Monad.Trans.State.Strict}| modules (the
difference between them does not matter for the purposes of this
article, so we shall ignore it) from
\texttt{transformers}~\cite{Hackage:transformers0520} package
equivalently to

\begin{Shaded}
\begin{Highlighting}[]
\KeywordTok{type} \DataTypeTok{State}\NormalTok{ s }\FunctionTok{=} \DataTypeTok{StateT}\NormalTok{ s }\DataTypeTok{Identity}

\KeywordTok{newtype} \DataTypeTok{StateT}\NormalTok{ s m a }\FunctionTok{=} \DataTypeTok{StateT}\NormalTok{ \{}\OtherTok{ runStateT ::}\NormalTok{ s }\OtherTok{->}\NormalTok{ m (a, s) \}}

\KeywordTok{instance} \DataTypeTok{MonadTrans}\NormalTok{ (}\DataTypeTok{StateT}\NormalTok{ s) }\KeywordTok{where}
\NormalTok{  lift m }\FunctionTok{=} \DataTypeTok{StateT} \FunctionTok{$}\NormalTok{ \textbackslash{}s }\OtherTok{->} \KeywordTok{do}
\NormalTok{    a }\OtherTok{<-}\NormalTok{ m}
    \FunctionTok{pure}\NormalTok{ (a, s)}

\KeywordTok{instance} \DataTypeTok{Pointed}\NormalTok{ m }\OtherTok{=>} \DataTypeTok{Pointed}\NormalTok{ (}\DataTypeTok{StateT}\NormalTok{ s m) }\KeywordTok{where}
  \FunctionTok{pure}\NormalTok{ a }\FunctionTok{=} \DataTypeTok{StateT} \FunctionTok{$}\NormalTok{ \textbackslash{}s }\OtherTok{->} \FunctionTok{pure}\NormalTok{ (a, s)}

\KeywordTok{instance} \DataTypeTok{Monad}\NormalTok{ m }\OtherTok{=>} \DataTypeTok{Monad}\NormalTok{ (}\DataTypeTok{StateT}\NormalTok{ s m) }\KeywordTok{where}
\NormalTok{  m }\FunctionTok{>>=}\NormalTok{ k  }\FunctionTok{=} \DataTypeTok{StateT} \FunctionTok{$}\NormalTok{ \textbackslash{}s }\OtherTok{->} \KeywordTok{do}
\NormalTok{    (a, s') }\OtherTok{<-}\NormalTok{ runStateT m s}
\NormalTok{    runStateT (k a) s'}

\KeywordTok{instance} \DataTypeTok{MonadFail}\NormalTok{ m }\OtherTok{=>} \DataTypeTok{MonadFail}\NormalTok{ (}\DataTypeTok{StateT}\NormalTok{ s m) }\KeywordTok{where}
  \FunctionTok{fail}\NormalTok{ str }\FunctionTok{=} \DataTypeTok{StateT} \FunctionTok{$}\NormalTok{ \textbackslash{}_ }\OtherTok{->} \FunctionTok{fail}\NormalTok{ str}
\end{Highlighting}
\end{Shaded}

Both structures provide \VERB|\DataTypeTok{Monad}|ic structures that
handle state. \VERB|\DataTypeTok{ReaderT}| simply applies variable
\VERB|\NormalTok{s}| throughout its whole computation via its
\VERB|\NormalTok{(}\FunctionTok{>>=}\NormalTok{)}| operator thus
supplying computations with a \emph{context} (i.e. read-only
\emph{state}). Meanwhile, \VERB|\DataTypeTok{StateT}| chains its
\VERB|\NormalTok{s}| between computations, thus providing computations
with a (read-write) \emph{state}.

\hypertarget{imprecise-exceptions}{%
\subsection{Imprecise exceptions}\label{imprecise-exceptions}}

\label{sec:imprecise}

As we mentioned in the introduction, GHC implements \emph{imprecise
exceptions} mechanism proposed in \cite{PeytonJones:1999:SIE}. Such
exceptions look superficially similar to those of C++/Java/Python/etc
but differ in two important aspects.

Firstly, GHC imprecise exceptions in pure computations are completely
imprecise. That is, evaluation of
\VERB|\NormalTok{(a }\OtherTok{`op`}\NormalTok{ b)}| with
\VERB|\NormalTok{a}| raising \VERB|\NormalTok{e}| and
\VERB|\NormalTok{b}| raising \VERB|\NormalTok{f}| (and assuming
\VERB|\NormalTok{op}| can evaluate either argument first) can raise
either or even both (on different evaluations) of \VERB|\NormalTok{e}|
and \VERB|\NormalTok{f}|. Haskell is not the only language that does
this, C++, for instance, defines \emph{sequence points} that serve the
same purpose~\cite{CFAQ:SeqPoints}. However, in GHC the order in which
exception are raised is limited only by data dependencies, while C++'s
sequence points add some more ordering on top.

Secondly, the C++/Java/Python exceptions have dynamic dispatch builtin,
while GHC's dynamically dispatched exceptions are implemented as a
library on top of statically dispatched exceptions. To be more specific

\begin{itemize}
\item
  on the base level GHC runtime defines
  \VERB|\NormalTok{raise}\FunctionTok{#}| and
  \VERB|\NormalTok{catch}\FunctionTok{#}| operations for which
  \VERB|\NormalTok{raise}\FunctionTok{#}| "simply"\footnote{\label{fn:simply}We
    put "simply" and "just" into quotes since unwinding of the stack
    must unwind into the lexically correct handler which is nontrivial
    in a lazy language like Haskell where thunks can be evaluated in an
    environment different from the one they were created in. In short,
    thunks must capture exception handlers as well as variables.}
  unwinds the stack to the closest
  \VERB|\NormalTok{catch}\FunctionTok{#}| (i.e.
  \VERB|\NormalTok{raise}\FunctionTok{#}| is "just"\cref{fn:simply} a
  \texttt{GOTO}; \VERB|\NormalTok{cast}|ing,
  re-\VERB|\NormalTok{raise}|ing, \VERB|\NormalTok{finally}|, etc are
  left for the libraries to implement and are not builtins),
\item
  on top of that GHC libraries then provide dynamically dispatched
  exceptions by \VERB|\NormalTok{cast}|ing elements of
  \VERB|\DataTypeTok{Typeable}| types from/to
  \VERB|\DataTypeTok{SomeException}| existential
  type~\cite{Marlow:2006:EDH}.
\end{itemize}

In the following subsections we shall discuss the details of the actual
implementation.

\hypertarget{io}{%
\subsubsection{IO}\label{io}}

\label{sec:io}

GHC defines the mystical \VERB|\DataTypeTok{IO}|
\VERB|\DataTypeTok{Monad}| in \VERB|\DataTypeTok{GHC.Types}| (the types)
and \VERB|\DataTypeTok{GHC.Base}| (the instances), pretty much, as a
\VERB|\DataTypeTok{State}| \VERB|\DataTypeTok{Monad}| (see
\cref{sec:reader-state-monadtrans}) on
\VERB|\DataTypeTok{State}\FunctionTok{#} \DataTypeTok{RealWorld}|
(definitions of both of which are beyond the scope of this article)

\begin{Shaded}
\begin{Highlighting}[]
\KeywordTok{type} \DataTypeTok{IO}\FunctionTok{#}\NormalTok{ a }\FunctionTok{=} \DataTypeTok{State}\FunctionTok{#} \DataTypeTok{RealWorld}
          \OtherTok{->}\NormalTok{ (}\FunctionTok{#} \DataTypeTok{State}\FunctionTok{#} \DataTypeTok{RealWorld}\NormalTok{, a }\FunctionTok{#}\NormalTok{)}

\KeywordTok{newtype} \DataTypeTok{IO}\NormalTok{ a }\FunctionTok{=} \DataTypeTok{IO}\NormalTok{ \{}\OtherTok{ runIO ::} \DataTypeTok{IO}\FunctionTok{#}\NormalTok{ a \}}

\KeywordTok{instance} \DataTypeTok{Pointed} \DataTypeTok{IO} \KeywordTok{where}
  \FunctionTok{pure}\NormalTok{ a }\FunctionTok{=} \DataTypeTok{IO} \FunctionTok{$}\NormalTok{ \textbackslash{}s }\OtherTok{->}\NormalTok{ (}\FunctionTok{#}\NormalTok{ s, a }\FunctionTok{#}\NormalTok{)}

\KeywordTok{instance} \DataTypeTok{Monad} \DataTypeTok{IO} \KeywordTok{where}
\NormalTok{  m }\FunctionTok{>>=}\NormalTok{ f }\FunctionTok{=} \DataTypeTok{IO} \FunctionTok{$}\NormalTok{ \textbackslash{}s }\OtherTok{->} \KeywordTok{case}\NormalTok{ runIO m s }\KeywordTok{of}
\NormalTok{    (}\FunctionTok{#}\NormalTok{ s', a }\FunctionTok{#}\NormalTok{) }\OtherTok{->}\NormalTok{ runIO (f a) s'}
\end{Highlighting}
\end{Shaded}

The \VERB|\DataTypeTok{IO}\FunctionTok{#}| definition given above is not
actually in GHC but without it all of the definitions below become
unreadable. We also renamed \VERB|\NormalTok{unIO}| to
\VERB|\NormalTok{runIO}| for uniformity with \VERB|\DataTypeTok{State}|.
Note however, that we did not swap the elements of the result tuple of
\VERB|\DataTypeTok{IO}\FunctionTok{#}| to match those of
\VERB|\DataTypeTok{State}| since that would make it incompatible with
GHC runtime we reuse in Literate Haskell version.

\begin{remark}

\label{rem:io-caveats}

Note that \VERB|\DataTypeTok{IO}| is not a proper
\VERB|\DataTypeTok{Monad}| since it cannot satisfy the laws simply for
the fact that \VERB|\DataTypeTok{RealWorld}| cannot have an
equality.\footnote{Although \VERB|\DataTypeTok{IO}| can be reformulated
  as a free \VERB|\DataTypeTok{Monad}| made of "requests to the
  interpreter" and continuations if one is willing to forget about the
  internal structure of the
  \VERB|\DataTypeTok{RealWorld}|~\cite{Kmett:2011:FMFL}.}

In this article, however, for the purposes of formal arguments involving
\VERB|\DataTypeTok{IO}| we shall treat \VERB|\DataTypeTok{IO}| as if it
was just a \VERB|\DataTypeTok{State}| over some state type with some
simple denotational semantics (although, possibly unknown value). This,
of course, immediately disqualifies our proofs for
\VERB|\DataTypeTok{IO}| from using non-determinism, hence, for instance,
we will not be able to prove things about imprecise exceptions or
threads.

The alternative would be to split every lemma and theorem mentioning
\VERB|\DataTypeTok{IO}| into two: one for a
\VERB|\DataTypeTok{RawMonad}| (\VERB|\DataTypeTok{Monad}| without laws)
for cases mentioning \VERB|\DataTypeTok{IO}|, and one for
\VERB|\DataTypeTok{Monad}| for all other cases. This would make a very
little practical sense for this article since we will not attempt proofs
involving non-determinism anyway.

\end{remark}

\hypertarget{raise-and-catch}{%
\subsubsection{raise\# and catch\#}\label{raise-and-catch}}

\label{sec:raise-catch}

Primitive \VERB|\NormalTok{raise}\FunctionTok{#}| and
\VERB|\NormalTok{catch}\FunctionTok{#}| operations are "defined" (those,
of course, are just stubs to be replaced by references to the actual
implementations in GHC runtime) in \VERB|\DataTypeTok{GHC.Prim}| module
like follows

\begin{Shaded}
\begin{Highlighting}[]
\NormalTok{raise}\FunctionTok{#}\OtherTok{ ::}\NormalTok{ a }\OtherTok{->}\NormalTok{ b}
\NormalTok{raise}\FunctionTok{#} \FunctionTok{=}\NormalTok{ raise}\FunctionTok{#}

\NormalTok{catch}\FunctionTok{#}\OtherTok{ ::} \DataTypeTok{IO}\FunctionTok{#}\NormalTok{ a }\OtherTok{->}\NormalTok{ (b }\OtherTok{->} \DataTypeTok{IO}\FunctionTok{#}\NormalTok{ a)}
       \OtherTok{->} \DataTypeTok{IO}\FunctionTok{#}\NormalTok{ a}
\NormalTok{catch}\FunctionTok{#} \FunctionTok{=}\NormalTok{ catch}\FunctionTok{#}
\end{Highlighting}
\end{Shaded}

Evaluating \VERB|\NormalTok{raise}\FunctionTok{#}|
"simply"\cref{fn:simply} unwinds computation stack to the point of the
closet \VERB|\NormalTok{catch}\FunctionTok{#}| with the appropriate type
and applies raised value to the second argument of the latter. Note,
however, that while the type of \VERB|\NormalTok{raise}\FunctionTok{#}|
permits its use anywhere in the program,
\VERB|\NormalTok{catch}\FunctionTok{#}| is sandboxed to
\VERB|\DataTypeTok{IO}\FunctionTok{#}| on the lowest observable level
and GHC provides no "\VERB|\NormalTok{unsafeCatch}|". This allows GHC to
perform many useful optimizations that influence evaluation order
without exposing pure computations to non-determinism.

\hypertarget{typeable}{%
\subsubsection{Typeable}\label{typeable}}

\label{sec:typeable}

GHC implements dynamic casting with \VERB|\DataTypeTok{Typeable}| type
class. The details of its actual implementation are beyond the scope of
this article. For our purposes it suffices to say that it is a type
class of types that have type representations that can be compared at
runtime

\begin{Shaded}
\begin{Highlighting}[]
\KeywordTok{class} \DataTypeTok{Typeable}\NormalTok{ a }\KeywordTok{where}
  \CommentTok{-- magic beyond the scope of this article}
\end{Highlighting}
\end{Shaded}

\noindent and it provides a \VERB|\NormalTok{cast}| operation with the
following type signature that shows that it compares said
representations of types of its argument and result and either returns
its argument value wrapped in \VERB|\DataTypeTok{Just}| constructor when
the types match or \VERB|\DataTypeTok{Nothing}| else

\begin{Shaded}
\begin{Highlighting}[]
\OtherTok{cast ::} \KeywordTok{forall}\NormalTok{ a b}
      \FunctionTok{.}\NormalTok{ (}\DataTypeTok{Typeable}\NormalTok{ a, }\DataTypeTok{Typeable}\NormalTok{ b)}
     \OtherTok{=>}\NormalTok{ a }\OtherTok{->} \DataTypeTok{Maybe}\NormalTok{ b}
\end{Highlighting}
\end{Shaded}

Interested readers should inspect the source code of
\VERB|\DataTypeTok{Data.Typeable}| module of
\texttt{base}~\cite{Hackage:base4900}.

\hypertarget{exception}{%
\subsubsection{Exception}\label{exception}}

\label{sec:exception}

On top of \VERB|\DataTypeTok{Typeable}| in
\VERB|\DataTypeTok{GHC.Exception}| module of
\texttt{base}~\cite{Hackage:base4900} GHC provides the
\VERB|\DataTypeTok{Exception}| type class that casts values to and from
\VERB|\DataTypeTok{SomeException}| existential type (the following
syntactic \VERB|\KeywordTok{forall}| is type-theoretic
\VERB|\NormalTok{exists}|, historic reasons)

\begin{Shaded}
\begin{Highlighting}[]
\KeywordTok{data} \DataTypeTok{SomeException} \FunctionTok{=} \KeywordTok{forall}\NormalTok{ e}\FunctionTok{.} \DataTypeTok{Exception}\NormalTok{ e}
                  \OtherTok{=>} \DataTypeTok{SomeException}\NormalTok{ e}

\KeywordTok{class}\NormalTok{ (}\DataTypeTok{Typeable}\NormalTok{ e, }\DataTypeTok{Show}\NormalTok{ e) }\OtherTok{=>} \DataTypeTok{Exception}\NormalTok{ e }\KeywordTok{where}
\OtherTok{  toException   ::}\NormalTok{ e }\OtherTok{->} \DataTypeTok{SomeException}
\OtherTok{  fromException ::} \DataTypeTok{SomeException}\OtherTok{->} \DataTypeTok{Maybe}\NormalTok{ e}

\NormalTok{  toException }\FunctionTok{=} \DataTypeTok{SomeException}
\NormalTok{  fromException (}\DataTypeTok{SomeException}\NormalTok{ e) }\FunctionTok{=}\NormalTok{ cast e}

\KeywordTok{instance} \DataTypeTok{Show} \DataTypeTok{SomeException} \KeywordTok{where}
  \FunctionTok{show}\NormalTok{ (}\DataTypeTok{SomeException}\NormalTok{ e) }\FunctionTok{=} \FunctionTok{show}\NormalTok{ e}

\KeywordTok{instance} \DataTypeTok{Exception} \DataTypeTok{SomeException} \KeywordTok{where}
\NormalTok{  toException }\FunctionTok{=} \FunctionTok{id}
\NormalTok{  fromException x }\FunctionTok{=} \DataTypeTok{Just}\NormalTok{ x}
\end{Highlighting}
\end{Shaded}

\hypertarget{throw-and-catch}{%
\subsubsection{throw and catch}\label{throw-and-catch}}

Finally, \VERB|\NormalTok{throw}| and \VERB|\FunctionTok{catch}|
operators defined in \VERB|\DataTypeTok{GHC.Exception}| module of
\texttt{base}~\cite{Hackage:base4900} use all of the above to implement
dynamic dispatch of exceptions.

The \VERB|\NormalTok{throw}| operator simply wraps given exception into
\VERB|\DataTypeTok{SomeException}| and
\VERB|\NormalTok{raise}\FunctionTok{#}|s

\begin{Shaded}
\begin{Highlighting}[]
\OtherTok{throw ::} \DataTypeTok{Exception}\NormalTok{ e }\OtherTok{=>}\NormalTok{ e }\OtherTok{->}\NormalTok{ a}
\NormalTok{throw e }\FunctionTok{=}\NormalTok{ raise}\FunctionTok{#}\NormalTok{ (toException e)}
\end{Highlighting}
\end{Shaded}

The \VERB|\NormalTok{catchException}| operator defined in
\VERB|\DataTypeTok{GHC.IO}| does the actual dynamic dispatch

\begin{itemize}
\item
  it \VERB|\NormalTok{catch}\FunctionTok{#}|es an exception produced by
  its first argument ("computation"),
\item
  tries to \VERB|\NormalTok{cast}| it to a type expected by its second
  argument ("handler") and either calls the latter on success, or
  \VERB|\NormalTok{raise}\FunctionTok{#}|s again on failure.
\end{itemize}

\begin{Shaded}
\begin{Highlighting}[]
\OtherTok{catchException ::} \DataTypeTok{Exception}\NormalTok{ e}
               \OtherTok{=>} \DataTypeTok{IO}\NormalTok{ a }\OtherTok{->}\NormalTok{ (e }\OtherTok{->} \DataTypeTok{IO}\NormalTok{ a)}
               \OtherTok{->} \DataTypeTok{IO}\NormalTok{ a}
\NormalTok{catchException (}\DataTypeTok{IO}\NormalTok{ io) handler}
  \FunctionTok{=} \DataTypeTok{IO} \FunctionTok{$}\NormalTok{ catch}\FunctionTok{#}\NormalTok{ io handler'}
  \KeywordTok{where}
\NormalTok{    handler' e }\FunctionTok{=} \KeywordTok{case}\NormalTok{ fromException e }\KeywordTok{of}
          \DataTypeTok{Just}\NormalTok{  f }\OtherTok{->}\NormalTok{ runIO (handler f)}
          \DataTypeTok{Nothing} \OtherTok{->}\NormalTok{ raiseIO}\FunctionTok{#}\NormalTok{ e}
\end{Highlighting}
\end{Shaded}

The \VERB|\FunctionTok{catch}| operator simply calls
\VERB|\NormalTok{catchException}| after forcing its first argument into
a thunk with \VERB|\NormalTok{lazy}| operator (this wrapping is
necessary to prevent GHC from performing strictness analysis on the
"computation" to prevent its evaluation before the exception is even
raised; this fact can be ignored for the purposes of this article) which
is yet another special GHC runtime function (this time, extentionally
equal to its definition, i.e. identity).

\begin{Shaded}
\begin{Highlighting}[]
\OtherTok{lazy ::}\NormalTok{ a }\OtherTok{->}\NormalTok{ a}
\NormalTok{lazy x }\FunctionTok{=}\NormalTok{ x}

\FunctionTok{catch}\OtherTok{   ::} \DataTypeTok{Exception}\NormalTok{ e}
        \OtherTok{=>} \DataTypeTok{IO}\NormalTok{ a }\OtherTok{->}\NormalTok{ (e }\OtherTok{->} \DataTypeTok{IO}\NormalTok{ a)}
        \OtherTok{->} \DataTypeTok{IO}\NormalTok{ a}
\FunctionTok{catch}\NormalTok{ act }\FunctionTok{=}\NormalTok{ catchException (lazy act)}
\end{Highlighting}
\end{Shaded}

That is, \VERB|\FunctionTok{catch}| is extentionally equal to
\VERB|\NormalTok{catchException}|.
\VERB|\DataTypeTok{Control.Exception}| module of \texttt{base} simply
reexports \VERB|\NormalTok{throw}|, \VERB|\FunctionTok{catch}|, and
\VERB|\DataTypeTok{Exception}| type class and implements a bunch of
practically convenient combinators using them.

We should also mention that older versions of \texttt{base} package had
another special \VERB|\FunctionTok{catch}| that handled only
\VERB|\DataTypeTok{IOError}|s defined in \VERB|\DataTypeTok{Prelude}|
and \VERB|\DataTypeTok{System.IO.Error}| respectively. Those were
deprecated in 2011 and as of writing of this article are completely gone
from current version of \texttt{base}. But they are are occasionally
mentioned in tutorials, usually in the context of "don't use
\VERB|\FunctionTok{catch}| from \VERB|\DataTypeTok{Prelude}|, use the
one from \VERB|\DataTypeTok{Control.Exception}|", nowadays the
\VERB|\FunctionTok{catch}| from \VERB|\DataTypeTok{Prelude}| \emph{is}
the \VERB|\FunctionTok{catch}| from
\VERB|\DataTypeTok{Control.Exception}|.

\hypertarget{error-and-undefined}{%
\subsubsection{error and undefined}\label{error-and-undefined}}

\label{sec:error-undefined}

\VERB|\FunctionTok{error}| and \VERB|\FunctionTok{undefined}| primitives
are defined in \VERB|\DataTypeTok{GHC.Err}| of \texttt{base} as follows

\begin{Shaded}
\begin{Highlighting}[]
\KeywordTok{newtype} \DataTypeTok{ErrorCall} \FunctionTok{=} \DataTypeTok{ErrorCall} \DataTypeTok{String}

\KeywordTok{instance} \DataTypeTok{Exception} \DataTypeTok{ErrorCall} \KeywordTok{where}

\FunctionTok{error}\OtherTok{ ::} \DataTypeTok{String} \OtherTok{->}\NormalTok{ a}
\FunctionTok{error}\NormalTok{ s }\FunctionTok{=}\NormalTok{ throw (}\DataTypeTok{ErrorCall}\NormalTok{ s)}

\FunctionTok{undefined}\OtherTok{ ::} \KeywordTok{forall}\NormalTok{ a }\FunctionTok{.}\NormalTok{ a}
\FunctionTok{undefined} \FunctionTok{=} \FunctionTok{error} \StringTok{"Prelude.undefined"}
\end{Highlighting}
\end{Shaded}

Actually, this implementation is taken from the older version of
\texttt{base}, modern version also implements call stack capture, which
is beyond the scope of this article. Interested readers are referred to
the source code of \VERB|\DataTypeTok{GHC.Err}|.

\hypertarget{precise-raiseio-and-throwio}{%
\subsection{Precise raiseIO\# and
throwIO}\label{precise-raiseio-and-throwio}}

Besides imprecise exceptions GHC's \VERB|\DataTypeTok{IO}| also has
operators for precise exceptions a-la \VERB|\DataTypeTok{ExceptT}|
defined in \VERB|\DataTypeTok{GHC.Prim}| and
\VERB|\DataTypeTok{GHC.Exception}| as follows

\begin{Shaded}
\begin{Highlighting}[]
\NormalTok{raiseIO}\FunctionTok{#}\OtherTok{ ::}\NormalTok{ a }\OtherTok{->} \DataTypeTok{IO}\FunctionTok{#}\NormalTok{ b}
\NormalTok{raiseIO}\FunctionTok{#} \FunctionTok{=}\NormalTok{ raiseIO}\FunctionTok{#}
\end{Highlighting}
\end{Shaded}

\begin{Shaded}
\begin{Highlighting}[]
\OtherTok{throwIO ::} \DataTypeTok{Exception}\NormalTok{ e }\OtherTok{=>}\NormalTok{ e }\OtherTok{->} \DataTypeTok{IO}\NormalTok{ a}
\NormalTok{throwIO e }\FunctionTok{=} \DataTypeTok{IO} \FunctionTok{$}\NormalTok{ raiseIO}\FunctionTok{#}\NormalTok{ (toException e)}
\end{Highlighting}
\end{Shaded}

While \VERB|\NormalTok{throwIO}| has a type that is an instance of
\VERB|\NormalTok{throw}|, their semantics differ:
\VERB|\NormalTok{throwIO}| produces \VERB|\DataTypeTok{Monad}|ic actions
while \VERB|\NormalTok{throw}| produces values. For example, both
functions in the following example will raise
\VERB|\DataTypeTok{SomethingElse}|, not \VERB|\DataTypeTok{ErrorCall}|.

\begin{Shaded}
\begin{Highlighting}[]
\KeywordTok{data} \DataTypeTok{SomethingElse} \FunctionTok{=} \DataTypeTok{SomethingElse}

\KeywordTok{instance} \DataTypeTok{Exception} \DataTypeTok{SomethingElse} \KeywordTok{where}
\end{Highlighting}
\end{Shaded}

\begin{Shaded}
\begin{Highlighting}[]
\OtherTok{throwTest ::} \DataTypeTok{IO}\NormalTok{ ()}
\NormalTok{throwTest }\FunctionTok{=} \KeywordTok{do}
  \KeywordTok{let}\NormalTok{ x }\FunctionTok{=}\NormalTok{ throw (}\DataTypeTok{ErrorCall} \StringTok{"lazy"}\NormalTok{)}
  \FunctionTok{pure}\NormalTok{ (}\DataTypeTok{Right}\NormalTok{ x)}
\NormalTok{  throwIO }\DataTypeTok{SomethingElse}

\OtherTok{throwTest' ::} \DataTypeTok{IO}\NormalTok{ ()}
\NormalTok{throwTest' }\FunctionTok{=} \KeywordTok{do}
  \KeywordTok{let}\NormalTok{ x }\FunctionTok{=}\NormalTok{ throw (}\DataTypeTok{ErrorCall} \StringTok{"lazy"}\NormalTok{)}
  \FunctionTok{pure}\NormalTok{ x}
\NormalTok{  throwIO }\DataTypeTok{SomethingElse}
\end{Highlighting}
\end{Shaded}

The \VERB|\FunctionTok{catch}| operator, however, can be reused for
handling both imprecise and precise exceptions.

\begin{remark}

\label{rem:io-two-kinds-of-exceptions}

In other words, we can say that \VERB|\DataTypeTok{IO}| has two
different exception mechanisms (precise and imprecise exceptions) with a
single exception handling mechanism (\VERB|\FunctionTok{catch}|). (And
this is pretty weird.)

\end{remark}

\hypertarget{non-exhaustive-patterns}{%
\subsection{Non-exhaustive patterns}\label{non-exhaustive-patterns}}

\label{sec:non-exhaustive}

As a side note, non-exhaustive pattern matches (and
\VERB|\KeywordTok{case}|s) \VERB|\NormalTok{throw}|
\VERB|\DataTypeTok{PatternMatchFail}| exception, while the default
\VERB|\FunctionTok{fail}| implementation calls
\VERB|\FunctionTok{error}| which \VERB|\NormalTok{throw}|s
\VERB|\DataTypeTok{ErrorCall}|.

\begin{Shaded}
\begin{Highlighting}[]
\OtherTok{\{-# LANGUAGE ScopedTypeVariables #-\}}

\KeywordTok{import} \DataTypeTok{Control.Exception}

\NormalTok{check t }\FunctionTok{=}
\NormalTok{  (evaluate t }\FunctionTok{>>} \FunctionTok{print} \StringTok{"ok"}\NormalTok{)}
  \OtherTok{`catch`}
\NormalTok{  (\textbackslash{}(}\OtherTok{e ::} \DataTypeTok{PatternMatchFail}\NormalTok{)}
    \OtherTok{->} \FunctionTok{print} \StringTok{"throws PatternMatchFail"}\NormalTok{)}
  \OtherTok{`catch`}
\NormalTok{  (\textbackslash{}(}\OtherTok{e ::} \DataTypeTok{ErrorCall}\NormalTok{)}
    \OtherTok{->} \FunctionTok{print} \StringTok{"throws ErrorCall"}\NormalTok{)}

\NormalTok{patFail }\DecValTok{1}\NormalTok{ x }\FunctionTok{=} \KeywordTok{case}\NormalTok{ x }\KeywordTok{of} \DecValTok{0} \OtherTok{->} \DecValTok{1}
\NormalTok{fail1 }\FunctionTok{=}\NormalTok{ patFail }\DecValTok{1} \DecValTok{1}
\NormalTok{fail2 }\FunctionTok{=}\NormalTok{ patFail }\DecValTok{2} \DecValTok{2}
\NormalTok{maybeDont }\FunctionTok{=} \KeywordTok{do}\NormalTok{ \{ }\DecValTok{1} \OtherTok{<-} \DataTypeTok{Just} \DecValTok{1}\NormalTok{ ; }\FunctionTok{return} \DecValTok{2}\NormalTok{ \}}
\NormalTok{maybeFail }\FunctionTok{=} \KeywordTok{do}\NormalTok{ \{ }\DecValTok{0} \OtherTok{<-} \DataTypeTok{Just} \DecValTok{1}\NormalTok{ ; }\FunctionTok{return} \DecValTok{2}\NormalTok{ \}}
\NormalTok{eithrDont }\FunctionTok{=} \KeywordTok{do}\NormalTok{ \{ }\DecValTok{1} \OtherTok{<-} \DataTypeTok{Right} \DecValTok{1}\NormalTok{ ; }\FunctionTok{return} \DecValTok{2}\NormalTok{ \}}
\NormalTok{eithrFail }\FunctionTok{=} \KeywordTok{do}\NormalTok{ \{ }\DecValTok{0} \OtherTok{<-} \DataTypeTok{Right} \DecValTok{1}\NormalTok{ ; }\FunctionTok{return} \DecValTok{2}\NormalTok{ \}}

\NormalTok{testPatterns }\FunctionTok{=} \KeywordTok{do}
\NormalTok{  check fail1     }\CommentTok{-- throws PatternMatchFail}
\NormalTok{  check fail2     }\CommentTok{-- throws PatternMatchFail}
\NormalTok{  check maybeDont }\CommentTok{-- ok}
\NormalTok{  check maybeFail }\CommentTok{-- ok (`Nothing`)}
\NormalTok{  check eithrDont }\CommentTok{-- ok}
\NormalTok{  check eithrFail }\CommentTok{-- throws ErrorCall}
\end{Highlighting}
\end{Shaded}

\hypertarget{monadic-generalizations}{%
\subsection{Monadic generalizations}\label{monadic-generalizations}}

\label{sec:monadic-generalizations}

In previous subsections we have seen a plethora of slightly different
error handling structures with different \VERB|\NormalTok{throw}| and
\VERB|\FunctionTok{catch}| operators. In this subsection we shall
describe several Hackage packages that provide structures that try to
unify this algebraic zoo.

\hypertarget{monaderror}{%
\subsubsection{MonadError}\label{monaderror}}

\label{sec:monad-error}

\VERB|\DataTypeTok{MonadError}| class
(\VERB|\DataTypeTok{Control.Monad.Error.Class}| from
\texttt{mtl}~\cite{Hackage:mtl221} package) is defined as

\begin{Shaded}
\begin{Highlighting}[]
\KeywordTok{class}\NormalTok{ (}\DataTypeTok{Monad}\NormalTok{ m) }\OtherTok{=>} \DataTypeTok{MonadError}\NormalTok{ e m}
                 \FunctionTok{|}\NormalTok{ m }\OtherTok{->}\NormalTok{ e }\KeywordTok{where}
\OtherTok{  throwError ::}\NormalTok{ e }\OtherTok{->}\NormalTok{ m a}
\OtherTok{  catchError ::}\NormalTok{ m a}
             \OtherTok{->}\NormalTok{ (e }\OtherTok{->}\NormalTok{ m a) }\OtherTok{->}\NormalTok{ m a}
\end{Highlighting}
\end{Shaded}

This structure simply generalizes \VERB|\DataTypeTok{ExceptT}|

\begin{Shaded}
\begin{Highlighting}[]
\KeywordTok{instance} \DataTypeTok{Monad}\NormalTok{ m }\OtherTok{=>} \DataTypeTok{MonadError}\NormalTok{ e (}\DataTypeTok{ExceptT}\NormalTok{ e m) }\KeywordTok{where}
\NormalTok{  throwError }\FunctionTok{=}\NormalTok{ throwE}
\NormalTok{  catchError }\FunctionTok{=}\NormalTok{ catchE}
\end{Highlighting}
\end{Shaded}

\noindent in a way that is transitive over many other
\VERB|\DataTypeTok{MonadTrans}|formers, for instance

\begin{Shaded}
\begin{Highlighting}[]
\CommentTok{-- (these require UndecidableInstances GHC extension, however)}

\KeywordTok{instance} \DataTypeTok{MonadError}\NormalTok{ e m }\OtherTok{=>} \DataTypeTok{MonadError}\NormalTok{ e (}\DataTypeTok{IdentityT}\NormalTok{ m) }\KeywordTok{where}
\NormalTok{  throwError }\FunctionTok{=}\NormalTok{ lift }\FunctionTok{.}\NormalTok{ throwError}
\NormalTok{  catchError a h }\FunctionTok{=} \DataTypeTok{IdentityT} \FunctionTok{$}\NormalTok{ catchError (runIdentityT a) (runIdentityT }\FunctionTok{.}\NormalTok{ h)}

\KeywordTok{instance} \DataTypeTok{MonadError}\NormalTok{ e m }\OtherTok{=>} \DataTypeTok{MonadError}\NormalTok{ e (}\DataTypeTok{MaybeT}\NormalTok{ m) }\KeywordTok{where}
\NormalTok{  throwError }\FunctionTok{=}\NormalTok{ lift }\FunctionTok{.}\NormalTok{ throwError}
\NormalTok{  catchError a h }\FunctionTok{=} \DataTypeTok{MaybeT} \FunctionTok{$}\NormalTok{ catchError (runMaybeT a) (runMaybeT }\FunctionTok{.}\NormalTok{ h)}
\end{Highlighting}
\end{Shaded}

\hypertarget{monadthrow-and-monadcatch}{%
\subsubsection{MonadThrow and
MonadCatch}\label{monadthrow-and-monadcatch}}

\label{sec:monad-catch}

\VERB|\DataTypeTok{MonadThrow}| and \VERB|\DataTypeTok{MonadCatch}|
classes (\VERB|\DataTypeTok{Control.Monad.Catch}| from
\texttt{exceptions}~\cite{Hackage:exceptions083}) are defined
as\footnote{Except for the fact that \VERB|\DataTypeTok{MonadCatch}|
  from \texttt{exceptions} names its operator
  \VERB|\FunctionTok{catch}|, not \VERB|\NormalTok{catchM}|, we renamed
  it for uniformity and so that it would not be confused with the
  operator from \VERB|\DataTypeTok{Control.Exception}|.}

\begin{Shaded}
\begin{Highlighting}[]
\KeywordTok{class} \DataTypeTok{Monad}\NormalTok{ m }\OtherTok{=>} \DataTypeTok{MonadThrow}\NormalTok{ m }\KeywordTok{where}
\OtherTok{  throwM ::} \DataTypeTok{Exception}\NormalTok{ e }\OtherTok{=>}\NormalTok{ e }\OtherTok{->}\NormalTok{ m a}

\KeywordTok{class} \DataTypeTok{MonadThrow}\NormalTok{ m }\OtherTok{=>} \DataTypeTok{MonadCatch}\NormalTok{ m }\KeywordTok{where}
\OtherTok{  catchM ::} \DataTypeTok{Exception}\NormalTok{ e}
         \OtherTok{=>}\NormalTok{ m a }\OtherTok{->}\NormalTok{ (e }\OtherTok{->}\NormalTok{ m a) }\OtherTok{->}\NormalTok{ m a}
\end{Highlighting}
\end{Shaded}

These two structures, too, generalizes \VERB|\DataTypeTok{ExceptT}|

\begin{Shaded}
\begin{Highlighting}[]
\KeywordTok{instance} \DataTypeTok{MonadThrow}\NormalTok{ m }\OtherTok{=>} \DataTypeTok{MonadThrow}\NormalTok{ (}\DataTypeTok{ExceptT}\NormalTok{ e m) }\KeywordTok{where}
\NormalTok{  throwM }\FunctionTok{=}\NormalTok{ lift }\FunctionTok{.}\NormalTok{ throwM}

\KeywordTok{instance} \DataTypeTok{MonadCatch}\NormalTok{ m }\OtherTok{=>} \DataTypeTok{MonadCatch}\NormalTok{ (}\DataTypeTok{ExceptT}\NormalTok{ e m) }\KeywordTok{where}
\NormalTok{  catchM x f }\FunctionTok{=} \DataTypeTok{ExceptT} \FunctionTok{$}\NormalTok{ catchM (runExceptT x) (runExceptT }\FunctionTok{.}\NormalTok{ f)}
\end{Highlighting}
\end{Shaded}

\noindent and they, too, are transitive over common
\VERB|\DataTypeTok{MonadTrans}|formers

\begin{Shaded}
\begin{Highlighting}[]
\CommentTok{-- (this time without UndecidableInstances)}

\KeywordTok{instance} \DataTypeTok{MonadThrow}\NormalTok{ m }\OtherTok{=>} \DataTypeTok{MonadThrow}\NormalTok{ (}\DataTypeTok{IdentityT}\NormalTok{ m) }\KeywordTok{where}
\NormalTok{  throwM }\FunctionTok{=}\NormalTok{ lift }\FunctionTok{.}\NormalTok{ throwM}

\KeywordTok{instance} \DataTypeTok{MonadCatch}\NormalTok{ m }\OtherTok{=>} \DataTypeTok{MonadCatch}\NormalTok{ (}\DataTypeTok{IdentityT}\NormalTok{ m) }\KeywordTok{where}
\NormalTok{  catchM x f }\FunctionTok{=} \DataTypeTok{IdentityT} \FunctionTok{$}\NormalTok{ catchM (runIdentityT x) (runIdentityT }\FunctionTok{.}\NormalTok{ f)}

\KeywordTok{instance} \DataTypeTok{MonadThrow}\NormalTok{ m }\OtherTok{=>} \DataTypeTok{MonadThrow}\NormalTok{ (}\DataTypeTok{MaybeT}\NormalTok{ m) }\KeywordTok{where}
\NormalTok{  throwM }\FunctionTok{=}\NormalTok{ lift }\FunctionTok{.}\NormalTok{ throwM}

\KeywordTok{instance} \DataTypeTok{MonadCatch}\NormalTok{ m }\OtherTok{=>} \DataTypeTok{MonadCatch}\NormalTok{ (}\DataTypeTok{MaybeT}\NormalTok{ m) }\KeywordTok{where}
\NormalTok{  catchM x f }\FunctionTok{=} \DataTypeTok{MaybeT} \FunctionTok{$}\NormalTok{ catchM (runMaybeT x) (runMaybeT }\FunctionTok{.}\NormalTok{ f)}
\end{Highlighting}
\end{Shaded}

\noindent but they constrain their argument \VERB|\NormalTok{e}| to the
\VERB|\DataTypeTok{Exception}| type class, and they also generalize the
imprecise exceptions

\begin{Shaded}
\begin{Highlighting}[]
\KeywordTok{instance} \DataTypeTok{MonadThrow} \DataTypeTok{IO} \KeywordTok{where}
\NormalTok{  throwM }\FunctionTok{=}\NormalTok{ throw}

\KeywordTok{instance} \DataTypeTok{MonadCatch} \DataTypeTok{IO} \KeywordTok{where}
\NormalTok{  catchM }\FunctionTok{=} \FunctionTok{catch}
\end{Highlighting}
\end{Shaded}

The latter fact complicates their use somewhat since one can not be sure
about the dynamic-dispatch part of the semantics without actually
looking at the definitions for a particular instance.

\hypertarget{not-a-tutorial-side-b}{%
\section{Not a Tutorial: Side B}\label{not-a-tutorial-side-b}}

\label{sec:tutorial:non-basic}

This section, logically, is a continuation of \cref{sec:tutorial:basic}.
However, in contrast to that section this section discusses non-basic
structures that are of particular importance to the rest of the article.
While this section does not introduce any non-trivial novel ideas, some
perspectives on well-known ideas seem to be novel.

\hypertarget{continuations}{%
\subsection{Continuations}\label{continuations}}

\label{sec:continuations}

When speaking of "continuations" people usually mean one or more of the
three related aspects explained in this subsection.

\hypertarget{continuation-passing-style}{%
\subsubsection{Continuation-Passing
Style}\label{continuation-passing-style}}

Any (sub-)program can be rewritten into Continuation-Passing Style
(CPS)~\cite{Reynolds:1993:DC,appel-92} by adding a number of additional
\emph{continuation} arguments to every function and tail-calling into
those arguments with the results-to-be at every return point instead of
just returning said results.

For instance, the following pseudo-Haskell program

\begin{Shaded}
\begin{Highlighting}[]
\NormalTok{foo }\FunctionTok{=}
  \KeywordTok{if}\NormalTok{ something}
     \KeywordTok{then} \DataTypeTok{Result1}\NormalTok{ result1}
     \KeywordTok{else} \DataTypeTok{Result2}\NormalTok{ result2}

\NormalTok{bar }\FunctionTok{=} \KeywordTok{case}\NormalTok{ foo }\KeywordTok{of}
  \DataTypeTok{Result1}\NormalTok{ a }\OtherTok{->}\NormalTok{ bar1 a}
  \DataTypeTok{Result2}\NormalTok{ b }\OtherTok{->}\NormalTok{ bar2 b}
\end{Highlighting}
\end{Shaded}

\noindent can be transformed into (here we CPS-ignore
\VERB|\NormalTok{something}| and the \VERB|\KeywordTok{if}| for
illustrative purposes)

\newpage

\begin{Shaded}
\begin{Highlighting}[]
\NormalTok{fooCPS cont1 cont2 }\FunctionTok{=}
  \KeywordTok{if}\NormalTok{ something}
    \KeywordTok{then}\NormalTok{ cont1 result1}
    \KeywordTok{else}\NormalTok{ cont2 result2}

\NormalTok{barCPS }\FunctionTok{=}\NormalTok{ fooCPS bar1 bar2}
\end{Highlighting}
\end{Shaded}

In conventional modern low-level imperative terms this transformation
requires all functions to receive their return addresses as explicit
parameters instead of \texttt{pop}ing them from the bottom of their
stack frame.

The latter, of course, means that we can treat "\emph{normal}" programs
(in which all functions have a single return address) as a degenerate
case of programs written in "\emph{implicit-CPS}" (in fact,
\VERB|\DataTypeTok{Cont}| \VERB|\DataTypeTok{Monad}| of \cref{sec:cont}
is exactly such an "\emph{implicit-CPS}") --- a syntactic variant of CPS
in which

\begin{itemize}
\item
  every function has an implicit argument that specifies a default
  return address (which is set to the next instruction following a
  corresponding function call by default)
\item
  that can be reached from the body of the function by tail-calling a
  special symbol that \texttt{jmp}s to the implicitly given address.
\end{itemize}

Finally, one can even imagine a computer with a "\emph{CPS-ISA}" (i.e.
an ISA where each instruction explicitly specifies its own return
address) in which case all programs for such a computer would have to be
translated into an explicit CPS form to be executed. In fact, drum
memory-based computers like IBM 650 had exactly such an ISA. From the
point of view of an IBM 650 programmer modern conventional CPUs simply
convert their non-CPS OPcodes into their CPS forms on the fly, thus
applying CPS-transform to any given program on the fly.

Returning to the pseudo-Haskell listing above, note that programs
written in CPS

\begin{itemize}
\tightlist
\item
  introduce a linear order on their computations, hence they are not
  particularly good for parallel execution,
\item
  consume somewhat more memory in comparison to their "\emph{normal}"
  representations (as they have to handle more explicit addresses),
\item
  can have poorer performance on modern conventional CPUs (since said
  CPUs split their branch predictors into "jump" and "call" units and
  the latter unit rests completely unused by CPS programs),
\item
  are harder to understand.
\end{itemize}

However, the advantage of the CPS form is that it allows elimination of
duplicate computations. For instance, in the example above
\VERB|\NormalTok{foo}| produces different results depending on the value
of \VERB|\NormalTok{something}| and \VERB|\NormalTok{bar}| has to
duplicate that choice (but not the computation of
\VERB|\NormalTok{something}|) again by switching
\VERB|\KeywordTok{case}|s on the result of \VERB|\NormalTok{foo}|.
Meanwhile, \VERB|\NormalTok{barCPS}| is free from such an inefficiency.
Applying this transformation recursively to a whole (sub-)program allows
one to transform the (sub-)program into a series of tail calls whilst
replacing all constructors and eliminators in the (sub-)program with
tail calls to newly introduced continuation arguments and
\VERB|\KeywordTok{case}| bodies respectively.

The logical mechanic behind this transformation is a technique we call
\emph{generalized Kolmogorov's translation} (since it is a trivial
extension of Kolmogorov's
translation~\cite{Kolmogorov:1925:OPT:reprint}) of types of functions'
results. That is, double negation followed by rewriting by well-known
isomorphisms until formula contains only arrows, bottoms and variables
followed by generalizing bottoms by a bound variable.

For instance, the result of a function of type

\[i \to j \to b\]

\noindent is \(b\), which can be doubly negated as

\newpage \[\lnot \lnot b\] \[(b \to \bot) \to \bot\]

\noindent and generalized to either of

\[\forall c . (b \to c) \to c\] \[\lambda c . (b \to c) \to c\]

\noindent which allows us to generalize the whole function to either of

\[former = \forall c . i \to j \to (b \to c) \to c\]
\[latter = \lambda c . i \to j \to (b \to c) \to c\]

\noindent depending on the desired properties:

\begin{itemize}
\item
  the former term requires a rank-2 type system but it does not add any
  new type lambdas or free type variables, thus keeping the
  transformation closed,
\item
  the latter term does not need rank-2 types, but it requires tracking
  of these new type variables,
\item
  the latter term also retains full control over \(c\) variable, (for
  instance, it can produce the former term in rank-2 type system on
  demand with \(\forall c . latter~c\)).
\end{itemize}

Similarly, \VERB|\DataTypeTok{Either}\NormalTok{ a b}| may be seen as
logical \(a \lor b\) which can be rewritten as

\[\lnot \lnot (a \lor b)\] \[\lnot (\lnot a \land \lnot b)\]
\[(a \to \bot \land b \to \bot) \to \bot\]
\[(a \to \bot) \to (b \to \bot) \to \bot\]

\noindent and a pair of \VERB|\NormalTok{(a, b)}| is logical
\(a \land b\) and can be rewritten as

\[\lnot \lnot (a \land b)\] \[\lnot (a \land b) \to \bot\]
\[(a \land b \to \bot) \to \bot\] \[(a \to b \to \bot) \to \bot\]

Hence, \(i \to j \to (a \lor b)\) can be rewritten into either of

\[\forall c . i \to j \to (a \to c) \to (b \to c) \to c\]
\[\lambda c . i \to j \to (a \to c) \to (b \to c) \to c\]

\noindent and \(i \to j \to (a \land b)\) into either of

\[\forall c . i \to j \to (a \to b \to c) \to c\]
\[\lambda c . i \to j \to (a \to b \to c) \to c\]

\hypertarget{scott-encoding}{%
\subsubsection{Scott-encoding}\label{scott-encoding}}

\label{sec:scott-encoding}

A technique of applying generalized Kolmogorov's translation to data
types and their constructors and eliminators instead of normal functions
in a (sub-)program is called Scott-encoding (apparently, Dana Scott did
not publish, to our best knowledge the first mention in print is
\cite[p.~219]{Curry:1972:CL2} and first generic description of the
technique for arbitrary data types is
\cite{Steensgaard-Madsen:1989:TRO}).

As before, \VERB|\DataTypeTok{Either}| can be replaced with either of

\[\forall c . (a \to c) \to (b \to c) \to c\]
\[\lambda c . (a \to c) \to (b \to c) \to c\]

\noindent which can be encoded in Haskell as either of

\begin{Shaded}
\begin{Highlighting}[]
\KeywordTok{newtype} \DataTypeTok{EitherS}\NormalTok{ a b }\FunctionTok{=} \DataTypeTok{EitherS}
\NormalTok{  \{ runEitherS}
\OtherTok{      ::} \KeywordTok{forall}\NormalTok{ c}
       \FunctionTok{.}\NormalTok{ (a }\OtherTok{->}\NormalTok{ c) }\OtherTok{->}\NormalTok{ (b }\OtherTok{->}\NormalTok{ c) }\OtherTok{->}\NormalTok{ c \}}

\OtherTok{left ::}\NormalTok{ a }\OtherTok{->} \DataTypeTok{EitherS}\NormalTok{ a b}
\NormalTok{left a }\FunctionTok{=} \DataTypeTok{EitherS}\NormalTok{ (\textbackslash{}ac bc }\OtherTok{->}\NormalTok{ ac a)}

\OtherTok{right ::}\NormalTok{ b }\OtherTok{->} \DataTypeTok{EitherS}\NormalTok{ a b}
\NormalTok{right b }\FunctionTok{=} \DataTypeTok{EitherS}\NormalTok{ (\textbackslash{}ac bc }\OtherTok{->}\NormalTok{ bc b)}

\KeywordTok{newtype} \DataTypeTok{EitherS'}\NormalTok{ c a b }\FunctionTok{=} \DataTypeTok{EitherS'}
\NormalTok{  \{ runEitherS'}
\OtherTok{      ::}\NormalTok{ (a }\OtherTok{->}\NormalTok{ c) }\OtherTok{->}\NormalTok{ (b }\OtherTok{->}\NormalTok{ c) }\OtherTok{->}\NormalTok{ c \}}

\OtherTok{left' ::}\NormalTok{ a }\OtherTok{->} \DataTypeTok{EitherS'}\NormalTok{ c a b}
\NormalTok{left' a }\FunctionTok{=} \DataTypeTok{EitherS'}\NormalTok{ (\textbackslash{}ac bc }\OtherTok{->}\NormalTok{ ac a)}

\OtherTok{right' ::}\NormalTok{ b }\OtherTok{->} \DataTypeTok{EitherS'}\NormalTok{ c a b}
\NormalTok{right' b }\FunctionTok{=} \DataTypeTok{EitherS'}\NormalTok{ (\textbackslash{}ac bc }\OtherTok{->}\NormalTok{ bc b)}
\end{Highlighting}
\end{Shaded}

\noindent with \VERB|\NormalTok{runEitherS}|
(\VERB|\NormalTok{runEitherS'}|) taking the role of an eliminator
(\VERB|\KeywordTok{case}| operator) and \VERB|\NormalTok{left}| and
\VERB|\NormalTok{right}| (\VERB|\NormalTok{left'}| and
\VERB|\NormalTok{right'}|) taking the roles of \VERB|\DataTypeTok{Left}|
and \VERB|\DataTypeTok{Right}| constructors respectively.

Similarly, \VERB|\NormalTok{(a, b)}| can then be generalized to either
of

\[\forall c . (a \to b \to c) \to c\]
\[\lambda c . (a \to b \to c) \to c\]

\noindent and encoded in Haskell as either of

\begin{Shaded}
\begin{Highlighting}[]
\KeywordTok{newtype} \DataTypeTok{PairS}\NormalTok{ a b }\FunctionTok{=} \DataTypeTok{PairS}
\NormalTok{  \{ runPairS}
\OtherTok{      ::} \KeywordTok{forall}\NormalTok{ c}
       \FunctionTok{.}\NormalTok{ (a }\OtherTok{->}\NormalTok{ b }\OtherTok{->}\NormalTok{ c) }\OtherTok{->}\NormalTok{ c \}}

\OtherTok{pair ::}\NormalTok{ a }\OtherTok{->}\NormalTok{ b }\OtherTok{->} \DataTypeTok{PairS}\NormalTok{ a b}
\NormalTok{pair a b }\FunctionTok{=} \DataTypeTok{PairS}\NormalTok{ (\textbackslash{}f }\OtherTok{->}\NormalTok{ f a b)}

\KeywordTok{newtype} \DataTypeTok{PairS'}\NormalTok{ c a b }\FunctionTok{=} \DataTypeTok{PairS'}
\NormalTok{  \{ runPairS'}
\OtherTok{      ::}\NormalTok{ (a }\OtherTok{->}\NormalTok{ b }\OtherTok{->}\NormalTok{ c) }\OtherTok{->}\NormalTok{ c \}}

\OtherTok{pair' ::}\NormalTok{ a }\OtherTok{->}\NormalTok{ b }\OtherTok{->} \DataTypeTok{PairS'}\NormalTok{ c a b}
\NormalTok{pair' a b }\FunctionTok{=} \DataTypeTok{PairS'}\NormalTok{ (\textbackslash{}f }\OtherTok{->}\NormalTok{ f a b)}
\end{Highlighting}
\end{Shaded}

Substituting all \VERB|\DataTypeTok{Left}|s with
\VERB|\NormalTok{left}|, \VERB|\DataTypeTok{Right}|s with
\VERB|\NormalTok{right}|, \VERB|\KeywordTok{case}|s on
\VERB|\DataTypeTok{Either}|s with \VERB|\NormalTok{runEitherS}|, pair
constructions with \VERB|\NormalTok{pair}|, and
\VERB|\KeywordTok{case}|s on pairs with \VERB|\NormalTok{runPairS}| (and
similarly for primed versions) does not change computational properties
of the transformed program in the sense that Scott-transformation of the
original program's normal form coincides with the normal form of the
Scott-transformed program.

Replacing a single data type in a program with its Scott-encoding can be
viewed as a kind of selective CPS-transform on those subterms of the
program that use the data type. The type of transformed functions
changes the same way in both transformations, but Scott-encoding groups
all continuation arguments, hides them behind a type alias and
introduces a bunch of redundant beta reductions in constructors and
eliminators.

The upside of CPS-transforming with Scott-encoding is that it supports
partial applications, requires absolutely no thought to perform and no
substantial changes to the bodies of the functions that are being
transformed. It is also very useful for designing new languages and
emulating data types in languages that do not support them\footnote{For
  example, most instances of the \emph{visitor} object-oriented (OOP)
  design pattern that are not simply emulating
  \VERB|\DataTypeTok{Functor}| instances usually emulate pattern
  matching with Scott-encoding.} as it allows to use data types when
none are supported by the core language.

The most immediate downside of this transformation is very poor
performance on modern conventional CPUs. For instance, pattern matching
on \VERB|\DataTypeTok{Either}| produces a simple short conditional
\texttt{jmp} while for \VERB|\NormalTok{runEitherS}| the compiler, in
general, cannot be sure about value of the arguments (it can be anything
of the required type, not only \VERB|\NormalTok{left}| or
\VERB|\NormalTok{right}|) and has to produce an indirect \texttt{jmp}
(or \texttt{call} if it is not a tail call) and both
\VERB|\NormalTok{left}| and \VERB|\NormalTok{right}| require another
indirect \texttt{jmp}. This wastes address cache of CPU's branch
predictor and confuses it\footnote{Note that this does not happen for
  the full CPS-transform of the previous subsection since that
  translation does no \texttt{call}s.} when instruction pointer jumps
out of the stack frame.

For some classes of programs, however, it can increase performance
significantly. For instance, in a
"\emph{\VERB|\KeywordTok{case}|-tower}" like

\begin{Shaded}
\begin{Highlighting}[]
\NormalTok{doSomethingOn s }\FunctionTok{=} \KeywordTok{case}\NormalTok{ internally s }\KeywordTok{of}
  \DataTypeTok{Right}\NormalTok{ a }\OtherTok{->}\NormalTok{ returnResult a}
  \DataTypeTok{Left}\NormalTok{ b }\OtherTok{->}\NormalTok{ handeError b}

\NormalTok{internally s }\FunctionTok{=}
  \KeywordTok{case}\NormalTok{ evenMoreInternally s }\KeywordTok{of}
    \DataTypeTok{Right}\NormalTok{ (a,s) }\OtherTok{->}\NormalTok{ doSomethingElse a s}
    \DataTypeTok{Left}\NormalTok{ b }\OtherTok{->} \DataTypeTok{Left}\NormalTok{ b}

\NormalTok{doSomethingElse a s }\FunctionTok{=}
  \KeywordTok{case}\NormalTok{ evenMoreInternally s }\KeywordTok{of}
    \DataTypeTok{Right}\NormalTok{ (a,s) }\OtherTok{->} \DataTypeTok{Right}\NormalTok{ a}
    \DataTypeTok{Left}\NormalTok{ b }\OtherTok{->} \DataTypeTok{Left}\NormalTok{ b}
\end{Highlighting}
\end{Shaded}

\noindent (which is commonly produced by parser combinators) performing
this selective CPS-transform followed by inlining and partial evaluation
of the affected functions will replace all construction sites of
\VERB|\DataTypeTok{Left}|s with direct calls to
\VERB|\NormalTok{handeError}|, and \VERB|\DataTypeTok{Right}|s in
\VERB|\NormalTok{doSomethingElse}| (and, possibly, the ones residing in
\VERB|\NormalTok{evenMoreInternally}|) with
\VERB|\NormalTok{returnResult}|.

In other words, rewriting this type of code using Scott-encoded data
types is a way to apply deforestation~\cite{Wadler:1990:DTP} to it, but
semi-manually as opposed to automatically, and with high degree of
control. This fact gets used a lot in Hackage libraries, where, for
example, most parser combinators (\cref{sec:parser-combinators}) use
Scott-encoded forms internally.

\hypertarget{cont}{%
\subsubsection{Cont}\label{cont}}

\label{sec:cont}

One of the roundabout ways to express pure values in Haskell is to wrap
them with the \VERB|\DataTypeTok{Identity}| \VERB|\DataTypeTok{Functor}|
(\cref{sec:identity}) for which
\VERB|\DataTypeTok{Identity}\NormalTok{ a}|, logically, is just a pure
type variable \(a\). Applying generalized Kolmogorov's translation to
this variable gives either of

\[\forall c . (a \to c) \to c\] \[\lambda c . (a \to c) \to c\]

In Haskell the latter type is called \VERB|\DataTypeTok{Cont}|. It is
defined in \VERB|\DataTypeTok{Control.Monad.Cont}| of
\texttt{mtl}~\cite{Hackage:mtl221} as

\begin{Shaded}
\begin{Highlighting}[]
\KeywordTok{newtype} \DataTypeTok{Cont}\NormalTok{ r a }\FunctionTok{=} \DataTypeTok{Cont}
\NormalTok{  \{}\OtherTok{ runCont ::}\NormalTok{ (a }\OtherTok{->}\NormalTok{ r) }\OtherTok{->}\NormalTok{ r \}}
\end{Highlighting}
\end{Shaded}

\noindent with the following \VERB|\DataTypeTok{Monad}| instance

\begin{Shaded}
\begin{Highlighting}[]
\KeywordTok{instance} \DataTypeTok{Pointed}\NormalTok{ (}\DataTypeTok{Cont}\NormalTok{ r) }\KeywordTok{where}
  \FunctionTok{pure}\NormalTok{ a }\FunctionTok{=} \DataTypeTok{Cont} \FunctionTok{$}\NormalTok{ \textbackslash{}c }\OtherTok{->}\NormalTok{ c a}

\KeywordTok{instance} \DataTypeTok{Monad}\NormalTok{ (}\DataTypeTok{Cont}\NormalTok{ r) }\KeywordTok{where}
\NormalTok{  m }\FunctionTok{>>=}\NormalTok{ f  }\FunctionTok{=} \DataTypeTok{Cont} \FunctionTok{$}\NormalTok{ \textbackslash{}c }\OtherTok{->}\NormalTok{ runCont m}
                  \FunctionTok{$}\NormalTok{ \textbackslash{}a }\OtherTok{->}\NormalTok{ runCont (f a) c}
\end{Highlighting}
\end{Shaded}

\VERB|\DataTypeTok{Cont}| has a transformer version defined in
\VERB|\DataTypeTok{Control.Monad.Trans.Cont}| module of
\texttt{transformers}~\cite{Hackage:transformers0520} package as follows

\begin{Shaded}
\begin{Highlighting}[]
\KeywordTok{newtype} \DataTypeTok{ContT}\NormalTok{ r m a }\FunctionTok{=} \DataTypeTok{ContT}\NormalTok{ \{}\OtherTok{ runContT ::}\NormalTok{ (a }\OtherTok{->}\NormalTok{ m r) }\OtherTok{->}\NormalTok{ m r \}}

\KeywordTok{instance} \DataTypeTok{MonadTrans}\NormalTok{ (}\DataTypeTok{ContT'}\NormalTok{ r) }\KeywordTok{where}
\NormalTok{  lift m }\FunctionTok{=} \DataTypeTok{ContT}\NormalTok{ (m }\FunctionTok{>>=}\NormalTok{)}
\end{Highlighting}
\end{Shaded}

Interestingly, however, unlike \VERB|\DataTypeTok{Identity}| and
\VERB|\DataTypeTok{IdentityT}| which have different
\VERB|\DataTypeTok{Monad}| instances (see
\cref{sec:identity-monadtrans}), \VERB|\DataTypeTok{Cont}| and
\VERB|\DataTypeTok{ContT}| have identical ones (equivalent to the one
given above). Of particular note is the fact that the definition of
\VERB|\NormalTok{(}\FunctionTok{>>=}\NormalTok{)}| for
\VERB|\DataTypeTok{ContT}| does not refer to the
\VERB|\DataTypeTok{Monad}| operators of its argument
\VERB|\NormalTok{m}|. This means that in cases when we do not need the
\VERB|\DataTypeTok{MonadTrans}| instance (for which we have to have a
\VERB|\KeywordTok{newtype}| wrapper) we can redefine
\VERB|\DataTypeTok{ContT}| as simply

\begin{Shaded}
\begin{Highlighting}[]
\KeywordTok{type} \DataTypeTok{ContT}\NormalTok{ r m a }\FunctionTok{=} \DataTypeTok{Cont}\NormalTok{ (m r) a}
\end{Highlighting}
\end{Shaded}

The latter fact means that \VERB|\DataTypeTok{ContT}|, unlike other
\VERB|\DataTypeTok{MonadTrans}|formers we saw before, is not a
"\VERB|\DataTypeTok{Monad}| transformer" as it is not a functor on
category of monads (it is always a \VERB|\DataTypeTok{Monad}|
irrespective of the argument \VERB|\NormalTok{m}|). This property can be
explained by the fact that, as we noted at the top of this section,
\VERB|\DataTypeTok{Cont}| \VERB|\DataTypeTok{Monad}| is a kind of
\emph{"implicit-CPS"} form of computations. Since all it does is chain
return addresses it does not care about types of computations those
addresses point to.

\hypertarget{delimited-callcc}{%
\subsubsection{Delimited callCC}\label{delimited-callcc}}

\label{sec:callcc}

Peirce's law states that

\[((a \to b) \to a) \to a\]

\noindent by applying generalized Kolmogorov's translation we get

\[\lnot \lnot (((a \to b) \to a) \to a)\]
\[\lnot (\lnot a \to \lnot ((a \to b) \to a))\]
\[\lnot \lnot ((a \to b) \to a) \to \lnot \lnot a\]
\[(\lnot \lnot (a \to b) \to \lnot \lnot a) \to \lnot \lnot a\]
\[((\lnot \lnot a \to \lnot \lnot b) \to \lnot \lnot a) \to \lnot \lnot a\]

\noindent which can be encoded in Haskell as (note that this time we use
\(\forall\) variant of the translation)

\begin{Shaded}
\begin{Highlighting}[]
\OtherTok{peirceCC ::}\NormalTok{ ((}\DataTypeTok{Cont}\NormalTok{ r a }\OtherTok{->} \DataTypeTok{Cont}\NormalTok{ r b) }\OtherTok{->} \DataTypeTok{Cont}\NormalTok{ r a)}
         \OtherTok{->} \DataTypeTok{Cont}\NormalTok{ r a}
\NormalTok{peirceCC f }\FunctionTok{=} \DataTypeTok{Cont} \FunctionTok{$}\NormalTok{ \textbackslash{}c }\OtherTok{->}
\NormalTok{  runCont (f (\textbackslash{}ac }\OtherTok{->} \DataTypeTok{Cont} \FunctionTok{$}\NormalTok{ \textbackslash{}_ }\OtherTok{->}\NormalTok{ runCont ac c)) c}
\end{Highlighting}
\end{Shaded}

This operator takes a function \VERB|\NormalTok{f}|, applies some
magical subterm to it and then gives it its own return address. That is,
for a function \VERB|\NormalTok{f}| that ignores its argument
\VERB|\NormalTok{peirceCC}| is completely transparent. The magical
argument \VERB|\NormalTok{peirceCC}| applies to \VERB|\NormalTok{f}| is
itself a function that takes a computation producing value of the same
type \VERB|\NormalTok{f}| returns as a result. The subterm then computes
the value of the argument but ignores its own return address and
continues to the return address given to \VERB|\NormalTok{peirceCC}|
instead. In other words, \VERB|\NormalTok{peirceCC}| applies
\VERB|\NormalTok{f}| with an \emph{escape continuation} which works
exactly like a \VERB|\ControlFlowTok{return}| statement of conventional
imperative languages (as opposed to \VERB|\DataTypeTok{Monad}|'s
\VERB|\FunctionTok{pure}| which should not be called
"\VERB|\FunctionTok{return}|", see \cref{sec:monad}).

Note that \VERB|\NormalTok{ac}| argument to the magical subterm is
pretty boring: it is a computation that gets computed immediately.
Hence, unless we require every subterm of our program to be written in
\emph{implicit-CPS} form we can simplify \VERB|\NormalTok{peirceCC}| a
bit as follows

\begin{Shaded}
\begin{Highlighting}[]
\OtherTok{callCC ::}\NormalTok{ ((a }\OtherTok{->} \DataTypeTok{Cont}\NormalTok{ r b) }\OtherTok{->} \DataTypeTok{Cont}\NormalTok{ r a) }\OtherTok{->} \DataTypeTok{Cont}\NormalTok{ r a}
\NormalTok{callCC f }\FunctionTok{=} \DataTypeTok{Cont} \FunctionTok{$}\NormalTok{ \textbackslash{}c }\OtherTok{->}
\NormalTok{  runCont (f (\textbackslash{}a }\OtherTok{->} \DataTypeTok{Cont} \FunctionTok{$}\NormalTok{ \textbackslash{}_ }\OtherTok{->}\NormalTok{ c a)) c}
\end{Highlighting}
\end{Shaded}

This operator bears a name of "delimited \VERB|\KeywordTok{call/cc}|
(\VERB|\NormalTok{callCC}|)"~\cite{Asai:2011:IPS} and the escape
continuation it supplies to \VERB|\NormalTok{f}| not only works but also
looks exactly like an imperative \VERB|\ControlFlowTok{return}| (in that
it takes a pure value instead of a computation producing it).

\hypertarget{schemes-callcc-and-mls-callcc}{%
\subsubsection{Scheme's call/cc and ML's
callcc}\label{schemes-callcc-and-mls-callcc}}

Note that delimited \VERB|\NormalTok{callCC}| is semantically different
from similarly named operators of SML~\cite{SML:Cont} and
Scheme~\cite{Sperber:2010:RnRS}. SML defines its operator as

\begin{Shaded}
\begin{Highlighting}[]
\KeywordTok{type}\NormalTok{ 'a cont}
\KeywordTok{val}\NormalTok{ callcc : ('a cont -> 'a) -> 'a}
\end{Highlighting}
\end{Shaded}

\noindent where \VERB|\NormalTok{'a cont}| type is the type of the
\emph{current global continuation} which is the computation till the end
of the whole program, this type is a kind of technical alias for what,
logically, should be \(a \to b\), i.e. \VERB|\NormalTok{callcc}|'s type,
logically, is non-Kolmogorov-translated Peirce's law.

The difference is that by applying Kolmogorov's translation to Peirce's
law \VERB|\NormalTok{callCC}| gains intuitionistic witnesses (and,
hence, purely functional implementations) and becomes \emph{delimited}
by the current \VERB|\DataTypeTok{Cont}| context instead of the whole
program. Meanwhile, implementations of non-delimited
\VERB|\NormalTok{callcc}| and \VERB|\KeywordTok{call/cc}| require
special support from the compiler/interpreter and
Kiselyov~\cite{Kiselyov:2012:AAC} eloquently advocates that they simply
should not exist as they are \emph{less} useful than their delimited
versions and their implementations introduce nontrivial trade-offs to
the languages in question.

\hypertarget{monadic-parser-combinators}{%
\subsection{Monadic Parser
Combinators}\label{monadic-parser-combinators}}

\label{sec:parser-combinators}

\VERB|\DataTypeTok{Monad}|ic parser combinators are not by themselves an
error handling mechanism, but they have to handle failed parsing
attempts and such computations can be seen as a kind of error handling.

Parser combinators can possess a wide variety of semantics and
implementations, to mention just a few possible dimensions of the space:

\begin{itemize}
\tightlist
\item
  they can either automatically backtrack on errors or keep the state as
  is,
\item
  they can distinguish not only successful and failed parsing attempts
  but also attempts that consumed none of the input and those that
  consumed at least one element of the input~\cite{Leijen:2001:PDS},
\item
  they can support an impure state (e.g., make it a
  \VERB|\DataTypeTok{Monad}|),
\item
  track position in the input stream,
\item
  allow programmer-provided types in errors,
\item
  provide \VERB|\DataTypeTok{MonadTrans}|former versions,
\item
  encode their internals with Scott-encoding (\cref{sec:scott-encoding})
  for efficiency.
\end{itemize}

Discussing most of those features and their combinations is beyond the
scope of this article. In the following subsections we shall only
mention "backtrack vs. not" problem, in \cref{sec:instances:eio} we
shall also apply Scott-encoding to an almost identical structure.
Detailed implementations of other features can be studied by following
respective references.

The most popular parser combinator libraries for Haskell are
Parsec~\cite{Hackage:parsec3111},
Attoparsec~\cite{Hackage:attoparsec01310}, and
Megaparsec~\cite{Hackage:megaparsec630}.

\hypertarget{simple-stateful-parser-combinator}{%
\subsubsection{Simple stateful parser
combinator}\label{simple-stateful-parser-combinator}}

\label{sec:parser-combinators:without-access}

The simplest \VERB|\DataTypeTok{Monad}|ic parser combinator is just a
composition of \VERB|\DataTypeTok{StateT}|
(\cref{sec:reader-state-monadtrans}) and \VERB|\DataTypeTok{ExceptT}|
(\cref{sec:either-monadtrans}) \VERB|\DataTypeTok{MonadTrans}|formers
with inner \VERB|\DataTypeTok{Identity}| (\cref{sec:identity})

\begin{Shaded}
\begin{Highlighting}[]
\KeywordTok{type} \DataTypeTok{SParser}\NormalTok{ s e }\FunctionTok{=} \DataTypeTok{StateT}\NormalTok{ s (}\DataTypeTok{ExceptT}\NormalTok{ e }\DataTypeTok{Identity}\NormalTok{)}
\end{Highlighting}
\end{Shaded}

\noindent which can be \(\beta\)-reduced into

\begin{Shaded}
\begin{Highlighting}[]
\KeywordTok{newtype} \DataTypeTok{SParser}\NormalTok{ s e a }\FunctionTok{=} \DataTypeTok{SParser}
\NormalTok{  \{}\OtherTok{ runSParser ::}\NormalTok{ s }\OtherTok{->} \DataTypeTok{Either}\NormalTok{ e (a, s) \}}
\end{Highlighting}
\end{Shaded}

\noindent with the following \VERB|\DataTypeTok{Monad}| instance

\begin{Shaded}
\begin{Highlighting}[]
\KeywordTok{instance} \DataTypeTok{Pointed}\NormalTok{ (}\DataTypeTok{SParser}\NormalTok{ s e) }\KeywordTok{where}
  \FunctionTok{pure}\NormalTok{ a }\FunctionTok{=} \DataTypeTok{SParser} \FunctionTok{$}\NormalTok{ \textbackslash{}s }\OtherTok{->} \DataTypeTok{Right}\NormalTok{ (a, s)}

\KeywordTok{instance} \DataTypeTok{Monad}\NormalTok{ (}\DataTypeTok{SParser}\NormalTok{ s e) }\KeywordTok{where}
\NormalTok{  p }\FunctionTok{>>=}\NormalTok{ f }\FunctionTok{=} \DataTypeTok{SParser} \FunctionTok{$}\NormalTok{ \textbackslash{}s }\OtherTok{->}
    \KeywordTok{case}\NormalTok{ runSParser p s }\KeywordTok{of}
      \DataTypeTok{Left}\NormalTok{ x }\OtherTok{->} \DataTypeTok{Left}\NormalTok{ x}
      \DataTypeTok{Right}\NormalTok{ (a, s') }\OtherTok{->}\NormalTok{ runSParser (f a) s'}
\end{Highlighting}
\end{Shaded}

\hypertarget{with-full-access-to-the-state}{%
\subsubsection{\ldots{} with full access to the
state}\label{with-full-access-to-the-state}}

\label{sec:parser-combinators:with-access}

While the definition above is, in fact, exactly the definition used in
\texttt{ponder}~\cite{Hackage:ponder001} parser combinator library, it
provides no way to access the state of the parser on error, which makes
it very inconvenient in practice. However, a simple modification of the
type that moves \VERB|\DataTypeTok{Either}| into the tuple

\begin{Shaded}
\begin{Highlighting}[]
\KeywordTok{newtype} \DataTypeTok{Parser}\NormalTok{ s e a }\FunctionTok{=} \DataTypeTok{Parser}
\NormalTok{  \{}\OtherTok{ runParser ::}\NormalTok{ s }\OtherTok{->}\NormalTok{ (}\DataTypeTok{Either}\NormalTok{ e a, s) \}}
\end{Highlighting}
\end{Shaded}

\noindent which, of course, in isomorphic to\footnote{The corresponding
  \VERB|\DataTypeTok{MonadTrans}|former stack is better left unspoken.}

\begin{Shaded}
\begin{Highlighting}[]
\KeywordTok{newtype} \DataTypeTok{Parser}\NormalTok{ s e a }\FunctionTok{=} \DataTypeTok{Parser}
\NormalTok{  \{}\OtherTok{ runParser ::}\NormalTok{ s }\OtherTok{->} \DataTypeTok{Either}\NormalTok{ (e, s) (a, s) \}}
\end{Highlighting}
\end{Shaded}

\noindent solves this problem of access to state while keeping the
definition of \VERB|\DataTypeTok{Monad}| identical to the above.

\newpage

\begin{theorem}

\VERB|\DataTypeTok{Parser}| satisfies \VERB|\DataTypeTok{Monad}| laws.

\end{theorem}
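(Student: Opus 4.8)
The plan is to verify the three \texttt{Monad} laws for \texttt{Parser} by direct equational reasoning on the underlying state-transition function, rather than by trying to recognise \texttt{Parser} as a composition of known monad transformers: the excerpt itself flags that the corresponding stack is ``better left unspoken'', and since \texttt{Either} now sits \emph{inside} the result pair, \texttt{Parser} is not the naive \texttt{StateT}/\texttt{ExceptT} composition one might first guess. Because \texttt{Parser} is a \texttt{newtype}, two parsers are equal precisely when their \texttt{runParser} fields are equal as functions, so by extensionality it suffices, for each law, to fix an arbitrary input state \texttt{s} and reduce both sides applied to \texttt{s} to the same output. The instance in play is the evident adaptation of the \texttt{SParser} one: \verb|pure a = Parser (\s -> (Right a, s))|, while \verb|p >>= f| runs \texttt{p} on the incoming state, returns \verb|(Left x, s')| unchanged in the \verb|(Left x, s')| case, and continues as \verb|runParser (f a) s'| in the \verb|(Right a, s')| case.

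\textbf{Left and right identity.} For \verb|pure a >>= f == f a|, running the left-hand side on \texttt{s} forces \verb|runParser (pure a) s = (Right a, s)|, which selects the \texttt{Right} branch and yields \verb|runParser (f a) s|. For \verb|p >>= pure == p|, case-split on \verb|runParser p s|: in the \verb|(Left x, s')| branch the bind returns \verb|(Left x, s')|, and in the \verb|(Right a, s')| branch it returns \verb|runParser (pure a) s' = (Right a, s')| --- in both cases this is exactly \verb|runParser p s|, so the two functions agree pointwise.

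\textbf{Associativity.} This is the only step with genuine bookkeeping, and it is a nested case analysis. To evaluate \verb|runParser ((p >>= g) >>= h) s|, first reduce the inner bind by casing on \verb|runParser p s|. If it is \verb|(Left x, s')|, then \verb|p >>= g| returns \verb|(Left x, s')|, the outer bind re-selects the \texttt{Left} branch, and the whole thing is \verb|(Left x, s')| --- which is exactly what \verb|p >>= (\x -> g x >>= h)| returns on \texttt{s} in that same branch. If it is \verb|(Right a, s')|, then \verb|runParser (p >>= g) s = runParser (g a) s'|, so the outer expression becomes the bind of \texttt{h} over \verb|runParser (g a) s'|, i.e.\ literally \verb|runParser (g a >>= h) s'|; and the right-hand side \verb|runParser (p >>= (\x -> g x >>= h)) s| reduces, via the same \verb|(Right a, s')|, to \verb|runParser ((\x -> g x >>= h) a) s' = runParser (g a >>= h) s'|. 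The two sides coincide branch by branch, hence as functions of \texttt{s}.

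\textbf{Expected obstacle.} None of the three laws presents a real difficulty; the only thing to watch is threading the state variable correctly through the nested \texttt{case}s --- in particular, the \texttt{Left} branch of \verb|(>>=)| must carry the \emph{updated} state \texttt{s'} rather than the incoming \texttt{s}, and that updated state has to be propagated unchanged at every step (mishandling it is precisely the defect that motivated moving \texttt{Either} into the tuple in the first place). Beyond that it is a routine unfold-and-case-split, making no assumptions about the Kleisli arrows \texttt{f}, \texttt{g}, \texttt{h} other than being able to apply them.
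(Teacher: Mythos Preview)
Your proposal is correct and takes essentially the same approach as the paper, which simply says ``By case analysis'' (and points to the neighbouring \texttt{Alternative} proof for a similar heuristic). You have merely spelled out that case analysis in full, correctly threading the updated state through the \texttt{Left} branch and handling the nested split for associativity.
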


\begin{proof}

By case analysis. Also see the next proof.

\end{proof}

The binary \emph{choice} operator can be implemented by one of the two
possible instances of \VERB|\DataTypeTok{Alternative}|. The first one
rolls-back the state on error

\begin{Shaded}
\begin{Highlighting}[]
\KeywordTok{instance} \DataTypeTok{Monoid}\NormalTok{ e }\OtherTok{=>} \DataTypeTok{Alternative}\NormalTok{ (}\DataTypeTok{Parser}\NormalTok{ s e) }\KeywordTok{where}
\NormalTok{  empty }\FunctionTok{=} \DataTypeTok{Parser} \FunctionTok{$}\NormalTok{ \textbackslash{}s }\OtherTok{->} \DataTypeTok{Left}\NormalTok{ (}\FunctionTok{mempty}\NormalTok{, s)}
\NormalTok{  f }\FunctionTok{<|>}\NormalTok{ g }\FunctionTok{=} \DataTypeTok{Parser} \FunctionTok{$}\NormalTok{ \textbackslash{}s }\OtherTok{->} \KeywordTok{case}\NormalTok{ runParser f s }\KeywordTok{of}
    \DataTypeTok{Right}\NormalTok{ x }\OtherTok{->} \DataTypeTok{Right}\NormalTok{ x}
    \DataTypeTok{Left}\NormalTok{  (e, _) }\OtherTok{->} \KeywordTok{case}\NormalTok{ runParser g s }\KeywordTok{of}
      \DataTypeTok{Right}\NormalTok{ x }\OtherTok{->} \DataTypeTok{Right}\NormalTok{ x}
      \DataTypeTok{Left}\NormalTok{  (f, _) }\OtherTok{->} \DataTypeTok{Left}\NormalTok{ (e }\OtherTok{`mappend`}\NormalTok{ f, s)}
\end{Highlighting}
\end{Shaded}

\noindent while the second does not

\begin{Shaded}
\begin{Highlighting}[]
\KeywordTok{instance} \DataTypeTok{Monoid}\NormalTok{ e }\OtherTok{=>} \DataTypeTok{Alternative}\NormalTok{ (}\DataTypeTok{Parser}\NormalTok{ s e) }\KeywordTok{where}
\NormalTok{  empty }\FunctionTok{=} \DataTypeTok{Parser} \FunctionTok{$}\NormalTok{ \textbackslash{}s }\OtherTok{->} \DataTypeTok{Left}\NormalTok{ (}\FunctionTok{mempty}\NormalTok{, s)}
\NormalTok{  f }\FunctionTok{<|>}\NormalTok{ g }\FunctionTok{=} \DataTypeTok{Parser} \FunctionTok{$}\NormalTok{ \textbackslash{}s }\OtherTok{->} \KeywordTok{case}\NormalTok{ runParser f s }\KeywordTok{of}
    \DataTypeTok{Right}\NormalTok{ x }\OtherTok{->} \DataTypeTok{Right}\NormalTok{ x}
    \DataTypeTok{Left}\NormalTok{  (e, s') }\OtherTok{->} \KeywordTok{case}\NormalTok{ runParser g s' }\KeywordTok{of}
      \DataTypeTok{Right}\NormalTok{ x }\OtherTok{->} \DataTypeTok{Right}\NormalTok{ x}
      \DataTypeTok{Left}\NormalTok{  (f, s'') }\OtherTok{->} \DataTypeTok{Left}\NormalTok{ (e }\OtherTok{`mappend`}\NormalTok{ f, s'')}
\end{Highlighting}
\end{Shaded}

\begin{theorem}

\label{thm:with-heuristic}

Both versions satisfy the laws of
\VERB|\DataTypeTok{Alternative}|.\footnote{Note, however, that
  \VERB|\DataTypeTok{SParser}| from
  \cref{sec:parser-combinators:without-access} can only do backtracking
  because, unlike \VERB|\DataTypeTok{Parser}|, it is asymmetric in its
  use of \VERB|\DataTypeTok{Either}|.}

\end{theorem}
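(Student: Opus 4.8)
The plan is to check, for each of the two \texttt{Alternative} instances, the four laws listed earlier --- left identity, right identity, and associativity of \texttt{(<|>)} with unit \texttt{empty}, plus the requirement that \texttt{some} and \texttt{many} coincide with their default implementations --- by a direct case analysis on whether each component parser, run on an arbitrary input state \texttt{s}, yields a \texttt{Right} or a \texttt{Left}. Two of these obligations are free: neither instance overrides \texttt{some}/\texttt{many}, so that law is trivial; and the \texttt{Applicative}/\texttt{Functor} superclass laws follow from the preceding theorem that \texttt{Parser} is a lawful \texttt{Monad} together with the coherence conditions \texttt{fmap == liftM}, \texttt{(<*>) == ap} of \cref{sec:monad}.

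For the monoid laws I would first dispatch the \emph{error component}, which behaves the same way in both instances: whenever a \texttt{Left} is produced its error is a \texttt{mappend} of the errors thrown by the components, so each law collapses to one \texttt{Monoid} law on \texttt{e}. For \texttt{empty <|> x} the left branch fails with error \texttt{mempty}, so if \texttt{x} also fails the result carries \texttt{mappend mempty f}, equal to \texttt{f} by left identity of \texttt{mempty}; \texttt{x <|> empty} is dual, via right identity; and for associativity I would split into the cases ``\texttt{x} succeeds'', ``\texttt{x} fails, \texttt{y} succeeds'', ``\texttt{x} and \texttt{y} fail, \texttt{z} succeeds'', and ``all three fail'' --- the first three give the same \texttt{Right} value on both bracketings by unfolding, and the last compares \texttt{mappend e (mappend f g)} with \texttt{mappend (mappend e f) g}, equal by associativity of \texttt{mappend}.

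The only step that separates the two instances, and the one I expect to be the real obstacle, is tracking the \emph{state component} inside \texttt{Left}. For the non-backtracking instance this is immediate: the state is threaded strictly left to right, so on both sides of each equation it is the same subexpression, and the laws hold verbatim. For the backtracking instance each application of \texttt{(<|>)} discards intermediate states and re-emits the state \texttt{s} it was entered with; the associativity cases still match on the nose (both bracketings are entered at \texttt{s}, and \texttt{empty} is state-preserving), but the two identity laws compare a \texttt{(<|>)}-term with the bare parser \texttt{x}, whose \texttt{Left}-state need not be \texttt{s} --- for instance when \texttt{x} is a bind that consumes input before failing. The point requiring care, which I would make explicit, is that for the backtracking parser the state buried in a \texttt{Left} is observationally inert --- it is always thrown away on backtracking --- so the laws should be read up to the natural equivalence that identifies parsers agreeing on every \texttt{Right} output and on the error part of every \texttt{Left} output; modulo that equivalence the case analysis goes through uniformly, and isolating this equivalence is essentially the only non-routine part of the argument.
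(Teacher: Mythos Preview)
Your approach --- direct case analysis --- is the same as the paper's, but you carry it out in far more detail. The paper's proof is essentially a three-line heuristic aimed only at associativity (``\texttt{Right} is a zero; errors propagate right; the state either stays constant in the roll-back version or always propagates in the no-roll-back version, but never both''), with the identity laws and the role of the \texttt{Monoid} laws for \texttt{e} left implicit. Your explicit bookkeeping of the error component via the \texttt{Monoid} axioms and your per-law case split are strictly more than the text provides.

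More interestingly, you have spotted something the paper's proof does not address at all. For the roll-back instance the identity laws \texttt{empty <|> x == x} and \texttt{x <|> empty == x} are \emph{not} literal equalities when \texttt{x} can consume input before failing: the combinator normalises the failure state back to the entry state \texttt{s}, whereas \texttt{x} alone reports the advanced state \texttt{s'}. Your proposed remedy --- reading the laws up to an equivalence that discards the state component of \texttt{Left} results --- is a sound way to salvage the statement for this instance (the backtracking \texttt{(<|>)} ignores that component anyway, so the equivalence is a congruence there) and matches the operational intent of a backtracking parser. The paper does not articulate any such equivalence, so on this point you are being more careful than the text you are proving.
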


\begin{proof}

By case analysis.

Note that to convince yourself that
\VERB|\NormalTok{(}\FunctionTok{<\VerbBar{}>}\NormalTok{)}| is
associative it is enough to observe that in
\VERB|\NormalTok{a }\FunctionTok{<\VerbBar{}>}\NormalTok{ b }\FunctionTok{<\VerbBar{}>}\NormalTok{ c}|
for the above definitions

\begin{itemize}
\tightlist
\item
  \VERB|\DataTypeTok{Right}| is a zero,
\item
  values of \VERB|\NormalTok{e}| always propagate to the right,
\item
  while \VERB|\NormalTok{s}| is stays constant in the roll-back version,
  or always propagates in the no-roll-back version, but never both.
\end{itemize}

Which means that parentheses can't influence anything in either case.

The same idea can be used in similar proofs involving similar operators
of \VERB|\DataTypeTok{State}| and \VERB|\DataTypeTok{Parser}|.

\end{proof}

From the popular Haskell parser combinator libraries mentioned above
Attoparsec rolls-back while Parsec and Megaparsec do not, instead they
implement backtracking with a separate combinator for which we could
give the following type signature

\begin{Shaded}
\begin{Highlighting}[]
\OtherTok{try ::} \DataTypeTok{Parser}\NormalTok{ s e a }\OtherTok{->} \DataTypeTok{Parser}\NormalTok{ s e a}
\end{Highlighting}
\end{Shaded}

\hypertarget{examples}{%
\subsubsection{Examples}\label{examples}}

\label{sec:parser-combinators:examples}

The already given definitions allow us enough headroom to define some
primitive parsers and a couple of examples. For instance, assuming
\VERB|\DataTypeTok{Alternative}| rolls-back we can write

\begin{Shaded}
\begin{Highlighting}[]
\KeywordTok{type} \DataTypeTok{Failures} \FunctionTok{=}\NormalTok{ [}\DataTypeTok{String}\NormalTok{]}

\OtherTok{eof ::} \DataTypeTok{Parser} \DataTypeTok{String} \DataTypeTok{Failures}\NormalTok{ ()}
\NormalTok{eof }\FunctionTok{=} \DataTypeTok{Parser} \FunctionTok{$}\NormalTok{ \textbackslash{}s }\OtherTok{->} \KeywordTok{case}\NormalTok{ s }\KeywordTok{of}
\NormalTok{  [] }\OtherTok{->} \DataTypeTok{Right}\NormalTok{ ((), s)}
\NormalTok{  _  }\OtherTok{->} \DataTypeTok{Left}\NormalTok{  ([}\StringTok{"expected eof"}\NormalTok{], s)}

\OtherTok{char ::} \DataTypeTok{Char} \OtherTok{->} \DataTypeTok{Parser} \DataTypeTok{String} \DataTypeTok{Failures}\NormalTok{ ()}
\NormalTok{char x }\FunctionTok{=} \DataTypeTok{Parser} \FunctionTok{$}\NormalTok{ \textbackslash{}s }\OtherTok{->} \KeywordTok{case}\NormalTok{ s }\KeywordTok{of}
\NormalTok{  []     }\OtherTok{->} \DataTypeTok{Left}\NormalTok{  ([}\StringTok{"unexpected eof"}\NormalTok{], s)}
\NormalTok{  (c}\FunctionTok{:}\NormalTok{cs) }\OtherTok{->} \KeywordTok{if}\NormalTok{ (c }\FunctionTok{==}\NormalTok{ x)}
    \KeywordTok{then} \DataTypeTok{Right}\NormalTok{ ((), cs)}
    \KeywordTok{else} \DataTypeTok{Left}\NormalTok{ ([}\StringTok{"expected `"} \FunctionTok{++}\NormalTok{ [x] }\FunctionTok{++} \StringTok{"' got `"} \FunctionTok{++}\NormalTok{ [c] }\FunctionTok{++} \StringTok{"'"}\NormalTok{], s)}

\OtherTok{string ::} \DataTypeTok{String} \OtherTok{->} \DataTypeTok{Parser} \DataTypeTok{String} \DataTypeTok{Failures}\NormalTok{ ()}
\NormalTok{string [] }\FunctionTok{=} \FunctionTok{pure}\NormalTok{ ()}
\NormalTok{string (c}\FunctionTok{:}\NormalTok{cs) }\FunctionTok{=}\NormalTok{ char c }\FunctionTok{>>}\NormalTok{ string cs}

\NormalTok{parseTest }\FunctionTok{=}\NormalTok{ runParser (string }\StringTok{"foo"}\NormalTok{) }\StringTok{"foo bar"}
            \FunctionTok{==} \DataTypeTok{Right}\NormalTok{((), }\StringTok{" bar"}\NormalTok{)}
         \FunctionTok{&&}\NormalTok{ runParser (string }\StringTok{"abb"} \FunctionTok{<|>}\NormalTok{ string }\StringTok{"abc"}\NormalTok{) }\StringTok{"aba"}
            \FunctionTok{==} \DataTypeTok{Left}\NormalTok{ ([}\StringTok{"expected `b' got `a'"}
\NormalTok{                     ,}\StringTok{"expected `c' got `a'"}\NormalTok{], }\StringTok{"aba"}\NormalTok{)}
\end{Highlighting}
\end{Shaded}

To use the other implementation of \VERB|\DataTypeTok{Alternative}| we
would need to wrap all calls to \VERB|\NormalTok{string}| on the left
hand sides of
\VERB|\NormalTok{(}\FunctionTok{<\VerbBar{}>}\NormalTok{)}| with
\VERB|\NormalTok{try}|s.

Semantics-wise our \VERB|\DataTypeTok{Parser}| combines features of
Attoparsec (backtracking) and Megaparsec (custom error types). Of
course, it fits on a single page only because it has a minuscule number
of features in comparison to either of the two. To make it practical we
would need, at the very least, to implement tracking of the position in
the input stream and a bunch of primitive parsers, which we leave as an
exercise to the interested reader.

Interestingly, this exact implementation of handling of errors by
accumulation via \VERB|\DataTypeTok{Alternative}| over a
\VERB|\DataTypeTok{Monoid}| seems to be novel (although, pretty
trivial). Megaparsec, however, does something very similar by
accumulating errors in \VERB|\DataTypeTok{Set}|s instead of
\VERB|\DataTypeTok{Monoid}|s.

\VERB|\DataTypeTok{MonadTrans}|former versions of these structures can
be trivially obtained by adding \VERB|\DataTypeTok{Monad}|ic index
\VERB|\NormalTok{m}| after the arrow in definition of
\VERB|\DataTypeTok{Parser}| (i.e. by exposing the internal
\VERB|\DataTypeTok{Monad}| of the original
\VERB|\DataTypeTok{MonadTrans}| stack) and correspondingly tweaking
base-level definitions and all type signatures.

\hypertarget{other-variants-of-monadcatch}{%
\subsection{Other variants of
MonadCatch}\label{other-variants-of-monadcatch}}

\label{sec:other-monadic-generalizations}

Finally, worth mentioning are two lesser-known variants of structures
similar to structures of \cref{sec:monadic-generalizations}. The first
one is defined in \VERB|\DataTypeTok{Control.Monad.Exception.Catch}|
module of
\texttt{control-monad-exception}~\cite{Hackage:control-monad-exception0112}
package as

\begin{Shaded}
\begin{Highlighting}[]
\KeywordTok{class}\NormalTok{ (}\DataTypeTok{Monad}\NormalTok{ m, }\DataTypeTok{Monad}\NormalTok{ n) }\OtherTok{=>} \DataTypeTok{MonadCatch}\NormalTok{ e m n }\FunctionTok{|}\NormalTok{ e m }\OtherTok{->}\NormalTok{ n, e n }\OtherTok{->}\NormalTok{ m }\KeywordTok{where}
\OtherTok{   catch ::}\NormalTok{ m a }\OtherTok{->}\NormalTok{ (e }\OtherTok{->}\NormalTok{ n a) }\OtherTok{->}\NormalTok{ n a}
\end{Highlighting}
\end{Shaded}

\noindent and the second one in
\VERB|\DataTypeTok{Control.Monad.Catch.Class}| module of
\texttt{catch-fd}~\cite{Hackage:catch-fd0202} package

\begin{Shaded}
\begin{Highlighting}[]
\KeywordTok{class} \DataTypeTok{Monad}\NormalTok{ m }\OtherTok{=>} \DataTypeTok{MonadThrow}\NormalTok{ e m }\FunctionTok{|}\NormalTok{ m }\OtherTok{->}\NormalTok{ e }\KeywordTok{where}
\OtherTok{  throw ::}\NormalTok{ e }\OtherTok{->}\NormalTok{ m a}

\KeywordTok{class}\NormalTok{ (}\DataTypeTok{MonadThrow}\NormalTok{ e m, }\DataTypeTok{Monad}\NormalTok{ n) }\OtherTok{=>} \DataTypeTok{MonadCatch}\NormalTok{ e m n }\FunctionTok{|}\NormalTok{ n e }\OtherTok{->}\NormalTok{ m }\KeywordTok{where}
\OtherTok{  catch ::}\NormalTok{ m a }\OtherTok{->}\NormalTok{ (e }\OtherTok{->}\NormalTok{ n a) }\OtherTok{->}\NormalTok{ n a}
\end{Highlighting}
\end{Shaded}

Note that \texttt{control-monad-exception} does not define a type class
with a \VERB|\NormalTok{throw}| operator, that library provides a
universal computation type \VERB|\DataTypeTok{EM}| (similar to
\VERB|\DataTypeTok{EIO}| of \cref{sec:instances:eio}) with such an
operator instead. Also note that the common point of those two
definitions is that both \VERB|\FunctionTok{catch}| operators change the
type of computations from \VERB|\NormalTok{m}| to \VERB|\NormalTok{n}|.

\hypertarget{the-nature-of-an-error}{%
\section{The nature of an error}\label{the-nature-of-an-error}}

\label{sec:init}

Lets forget for a minute about every concrete algebraic error-handling
structure mentioned before and try to invent our own algebra of
computations by reasoning like a purely pragmatic programmer who likes
to make everything typed as precisely as possible.

We start, of course, by pragmatically naming our type of computations to
be \VERB|\DataTypeTok{C}|. Then, we reason, it should be indexed by both
the type of the result, which we shall pragmatically call
\VERB|\NormalTok{a}|, and the type of exceptions \VERB|\NormalTok{e}|.
We are not sure about the body of that definition, so we just leave it
undefined

\begin{Shaded}
\begin{Highlighting}[]
\KeywordTok{data} \DataTypeTok{C}\NormalTok{ e a}
\end{Highlighting}
\end{Shaded}

Now, we know that \VERB|\DataTypeTok{Monad}|s usually work pretty well
for the computation part (since we can as well just lift everything into
\VERB|\DataTypeTok{IO}| which is a \VERB|\DataTypeTok{Monad}|), so we
write

\begin{Shaded}
\begin{Highlighting}[]
\FunctionTok{return}\OtherTok{ ::}\NormalTok{ a }\OtherTok{->} \DataTypeTok{C}\NormalTok{ e a}

\OtherTok{(>>=) ::} \DataTypeTok{C}\NormalTok{ e a }\OtherTok{->}\NormalTok{ (a }\OtherTok{->} \DataTypeTok{C}\NormalTok{ e b) }\OtherTok{->} \DataTypeTok{C}\NormalTok{ e b}
\end{Highlighting}
\end{Shaded}

\noindent and expect these operators to satisfy
\VERB|\DataTypeTok{Monad}| laws (\cref{sec:monad}).

Meanwhile, pragmatically, an "exceptional" execution path requires two
conventional operators:

\begin{itemize}
\item
  a method of raising an exception; the type of this operator seems to
  be pretty straightforward

\begin{Shaded}
\begin{Highlighting}[]
\OtherTok{throw ::}\NormalTok{ e }\OtherTok{->} \DataTypeTok{C}\NormalTok{ e a}
\end{Highlighting}
\end{Shaded}

  as it simply injects the error into \VERB|\DataTypeTok{C}|,
\item
  and a method to catch exceptions; the overly-general type for this
  operator is, again, pretty straightforward

\begin{Shaded}
\begin{Highlighting}[]
\FunctionTok{catch}\OtherTok{ ::} \DataTypeTok{C}\NormalTok{ e a }\OtherTok{->}\NormalTok{ (e }\OtherTok{->} \DataTypeTok{C}\NormalTok{ f b) }\OtherTok{->} \DataTypeTok{C}\NormalTok{ g c}
\end{Highlighting}
\end{Shaded}

  The only obvious requirement here is that the type the "handler"
  function (the second argument of \VERB|\FunctionTok{catch}|) can
  handle should coincide with the type of errors the "computation" (the
  first argument) can \VERB|\NormalTok{throw}|.
\end{itemize}

Finally, we pragmatically expect the above to obey the conventional
operational semantics of error handling operators, giving us the
following definition.

\begin{definition}

\label{dfn:structure}

\textbf{\textbf{Pragmatic error handling structure.}} Structure
\VERB|\OtherTok{m ::} \FunctionTok{*} \OtherTok{=>} \FunctionTok{*} \OtherTok{=>} \FunctionTok{*}|
with \VERB|\FunctionTok{return}|,
\VERB|\NormalTok{(}\FunctionTok{>>=}\NormalTok{)}|,
\VERB|\NormalTok{throw}|, and \VERB|\FunctionTok{catch}| operators
satisfying

\begin{enumerate}
\item
  \label{dfn:structure:monad} \VERB|\FunctionTok{return}| and
  \VERB|\NormalTok{(}\FunctionTok{>>=}\NormalTok{)}| obey
  \VERB|\DataTypeTok{Monad}| laws (\cref{sec:monad}),
\item
  \label{dfn:structure:throw-bind}
  \VERB|\NormalTok{throw e }\FunctionTok{>>=}\NormalTok{ f }\FunctionTok{==}\NormalTok{ throw e}|
  ("\VERB|\NormalTok{throw}|ing of an error stops the computation"),
\item
  \label{dfn:structure:throw-catch}
  \VERB|\NormalTok{throw e }\OtherTok{`catch`}\NormalTok{ f }\FunctionTok{==}\NormalTok{ f e}|
  ("\VERB|\NormalTok{throw}|ing of an error invokes the most recent
  error handler"),\footnote{Similarly to GHC's imprecise exceptions of
    \cref{sec:imprecise} dynamic dispatch can be implemented on top of
    such a structure. We shall do this in
    \cref{sec:instances:constant:monadcatch}.}
\item
  \label{dfn:structure:return-catch}
  \VERB|\FunctionTok{return}\NormalTok{ a }\OtherTok{`catch`}\NormalTok{ f }\FunctionTok{==} \FunctionTok{return}\NormalTok{ a}|
  ("\VERB|\FunctionTok{return}| is not an error").
\end{enumerate}

\end{definition}

\hypertarget{the-type-of-error-handling-operator}{%
\section{The type of error handling
operator}\label{the-type-of-error-handling-operator}}

\label{sec:type-of-catch}

The first question to the structure of \VERB|\DataTypeTok{C}| is, of
course, what is the precise type of \VERB|\FunctionTok{catch}| operator.

\begin{Shaded}
\begin{Highlighting}[]
\FunctionTok{catch}\OtherTok{ ::} \DataTypeTok{C}\NormalTok{ e a }\OtherTok{->}\NormalTok{ (e }\OtherTok{->} \DataTypeTok{C}\NormalTok{ f b) }\OtherTok{->} \DataTypeTok{C}\NormalTok{ g c}
\end{Highlighting}
\end{Shaded}

\noindent In other words, we would like to know which of the variables
\VERB|\NormalTok{f}|, \VERB|\NormalTok{g}|, \VERB|\NormalTok{b}|, and
\VERB|\NormalTok{c}| in this signature should have their own universal
quantifier and which should be substituted with others. The answer comes
by considering several cases.

\begin{itemize}
\item
  Firstly, let us consider the following expression.

\begin{Shaded}
\begin{Highlighting}[]
\FunctionTok{return}\NormalTok{ a }\OtherTok{`catch`}\NormalTok{ f}
\end{Highlighting}
\end{Shaded}

  The expected semantics of \VERB|\FunctionTok{catch}| requires (by
  \cref{dfn:structure:return-catch} of \cref{dfn:structure})

\begin{Shaded}
\begin{Highlighting}[]
\FunctionTok{return}\NormalTok{ a }\OtherTok{`catch`}\NormalTok{ f }\FunctionTok{==} \FunctionTok{return}\NormalTok{ a}
\end{Highlighting}
\end{Shaded}

  Note that the most general type for
  \VERB|\FunctionTok{return}\NormalTok{ a}| expression is
  \VERB|\KeywordTok{forall}\NormalTok{ e }\FunctionTok{.} \DataTypeTok{C}\NormalTok{ e a}|
  for \VERB|\NormalTok{a }\FunctionTok{:}\NormalTok{ a}|\footnote{The
    reader might have noticed already that we abuse notation somewhat by
    assuming type variables and term variables use distinct namespaces.
    This expression happens to be the first and the only one that uses
    both at the same time, hence it looks like an exiting "type-in-type"
    kind of thing, but it is not, it is ordinarily boring.}. Moreover,
  we can assign the same type to any expression that does not
  \VERB|\NormalTok{throw}| since

  \begin{itemize}
  \item
    both \VERB|\NormalTok{a}| and \VERB|\NormalTok{e}| in the type
    signify the potential to \VERB|\FunctionTok{return}| and
    \VERB|\NormalTok{throw}| values of the corresponding types,
  \item
    and an expression that does not \VERB|\NormalTok{throw}| any errors
    can be said to not-\VERB|\NormalTok{throw}| an error of any
    particular type, similarly to how bottom elimination rule works. Or,
    equivalently, any such computation can be said to
    \VERB|\NormalTok{throw}| values of an empty type and an empty type
    can always be replaced with any other type by bottom
    elimination.\footnote{Implicitly or with
      \VERB|\NormalTok{f }\OtherTok{`catch`}\NormalTok{ bot}\FunctionTok{-}\NormalTok{elim}|
      which is extentionally equal to \VERB|\NormalTok{f}|.}
  \end{itemize}
\item
  Now let us consider the following expression, assuming
  \VERB|\NormalTok{e}| and \VERB|\NormalTok{f}| are of different types
  (i.e. both the computation and the handler throw different
  exceptions).

\begin{Shaded}
\begin{Highlighting}[]
\NormalTok{throw e }\OtherTok{`catch`}\NormalTok{ (\textbackslash{}_ }\OtherTok{->}\NormalTok{ throw f)}
\end{Highlighting}
\end{Shaded}

  The expected semantics of \VERB|\FunctionTok{catch}| requires (by
  \cref{dfn:structure:throw-catch} of \cref{dfn:structure})

\begin{Shaded}
\begin{Highlighting}[]
\NormalTok{throw e }\OtherTok{`catch`}\NormalTok{ (\textbackslash{}_ }\OtherTok{->}\NormalTok{ throw f) }\FunctionTok{==}\NormalTok{ throw f}
\end{Highlighting}
\end{Shaded}
\end{itemize}

These two cases show that \VERB|\NormalTok{g}| should be substituted
with \VERB|\NormalTok{f}| and \VERB|\NormalTok{e}| should be kept
separate from \VERB|\NormalTok{f}| because

\begin{itemize}
\item
  if computation \VERB|\NormalTok{throw}|s then the type
  \VERB|\NormalTok{f}| in the handler "wins",
\item
  but if it does not \VERB|\NormalTok{throw}| then \VERB|\NormalTok{e}|
  is an empty type and it can be substituted for any other type,
  including \VERB|\NormalTok{f}| (similarly to the type of
  \VERB|\FunctionTok{return}| above)\footnote{The only nontrivial
    observation in this section.}
\item
  these two cases are mutually exclusive.
\end{itemize}

That is, the type for \VERB|\FunctionTok{catch}| is at most as general
as

\begin{Shaded}
\begin{Highlighting}[]
\FunctionTok{catch}\OtherTok{ ::} \KeywordTok{forall}\NormalTok{ e f }\FunctionTok{.} \DataTypeTok{C}\NormalTok{ e a }\OtherTok{->}\NormalTok{ (e }\OtherTok{->} \DataTypeTok{C}\NormalTok{ f b) }\OtherTok{->} \DataTypeTok{C}\NormalTok{ f c}
\end{Highlighting}
\end{Shaded}

\begin{itemize}
\item
  Continuing, \cref{dfn:structure:return-catch} of \cref{dfn:structure}
  shows that \VERB|\NormalTok{c}| has to coincide with
  \VERB|\NormalTok{a}|.
\item
  Similarly, \cref{dfn:structure:throw-catch} requires

\begin{Shaded}
\begin{Highlighting}[]
\NormalTok{throw e }\OtherTok{`catch`}\NormalTok{ (\textbackslash{}_ }\OtherTok{->} \FunctionTok{return}\NormalTok{ a) }\FunctionTok{==} \FunctionTok{return}\NormalTok{ a}
\end{Highlighting}
\end{Shaded}

  which shows that \VERB|\NormalTok{c}| has to coincide with
  \VERB|\NormalTok{b}|.
\end{itemize}

All these observations combine into the following.\footnote{\label{fn:its-dual}Spoilers!
  The reader is only supposed to notice the following after reading
  \cref{sec:logical}.\\
  ~\\
  Note that we could have written an equivalent up to names of operators
  sections~\ref{sec:init} and~\ref{sec:type-of-catch} that explained why
  the type of \VERB|\NormalTok{(}\FunctionTok{>>=}\NormalTok{)}| is the
  correct type for sequencing computations in \VERB|\DataTypeTok{C}|
  given that error handling should be done
  \VERB|\DataTypeTok{Monad}|ically. In particular, the fact that the
  dual of \cref{dfn:structure} lists valid operational equations is a
  rather curious observation by itself. Which is another reason why we
  disagree with the conventional wisdom in footnote~\ref{fn:terms}.}

\begin{theorem}

\label{thm:catch-type}

For any type
\VERB|\DataTypeTok{C}\OtherTok{ ::} \FunctionTok{*} \OtherTok{=>} \FunctionTok{*} \OtherTok{=>} \FunctionTok{*}|
obeying \cref{dfn:structure} the most general type for the
\VERB|\FunctionTok{catch}| operator is

\begin{Shaded}
\begin{Highlighting}[]
\FunctionTok{catch}\OtherTok{ ::} \KeywordTok{forall}\NormalTok{ a e f }\FunctionTok{.} \DataTypeTok{C}\NormalTok{ e a }\OtherTok{->}\NormalTok{ (e }\OtherTok{->} \DataTypeTok{C}\NormalTok{ f a) }\OtherTok{->} \DataTypeTok{C}\NormalTok{ f a}
\end{Highlighting}
\end{Shaded}

\end{theorem}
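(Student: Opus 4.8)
The plan is to read ``most general type'' as \emph{principal typing}: start from a signature for \texttt{catch} in which every index is a fresh, independent universally quantified variable except for the one constraint already argued informally above (the handler's argument type must be the computation's error type), and then collect the unification constraints that each equation of \cref{dfn:structure} imposes. Concretely I would begin from
\[
\texttt{catch} \;::\; \texttt{C}\,e\,a_0 \,\to\, (e \to \texttt{C}\,f\,b) \,\to\, \texttt{C}\,g\,c ,
\]
with $a_0, b, c, f, g$ a priori unrelated, and aim to show the equations force $a_0 \equiv b \equiv c$ and $g \equiv f$, leaving $e$ and $f$ as the only genuinely free variables.

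First I would instantiate \cref{dfn:structure:return-catch} as \texttt{catch (return a) f}, which must equal \texttt{return a}. Since the principal type of \texttt{return a} is $\forall e.\ \texttt{C}\,e\,a$ where $a$ is the type of the value \texttt{a}, typing the first argument forces $a_0 \equiv a$ while leaving the first index polymorphic (so $e$ stays universally quantified), and requiring the result of \texttt{catch} to \emph{be} \texttt{return a} forces $c \equiv a$. Next I would instantiate \cref{dfn:structure:throw-catch} with the computation \texttt{throw e} and a handler that ignores its argument and returns \texttt{return a}: for this handler to have type $e \to \texttt{C}\,f\,b$ we need $b \equiv a$, so $a_0 \equiv b \equiv c \equiv a$. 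Finally I would instantiate \cref{dfn:structure:throw-catch} with the computation \texttt{throw e} and a handler that ignores its argument and re-\texttt{throw}s a value \texttt{f} of a \emph{different} type, say $F$: the handler then has type $e \to \texttt{C}\,F\,a$, pinning the handler's error index to $F$, and since the whole expression must equal \texttt{throw f} of type $\forall a'.\ \texttt{C}\,F\,a'$, matching against the result $\texttt{C}\,g\,c$ forces $g \equiv F \equiv f$. Collecting everything, the signature collapses exactly to $\forall a\,e\,f.\ \texttt{C}\,e\,a \to (e \to \texttt{C}\,f\,a) \to \texttt{C}\,f\,a$.

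Two things then remain. One is to check that nothing forces \emph{further} unification --- in particular that $e$ must stay separate from $f$. This is the ``bottom elimination'' point: a computation that never \texttt{throw}s has error index an empty type $\bot$, and $\texttt{C}\,\bot\,a$ can be turned into $\texttt{C}\,g\,a$ for any $g$ by catching it with the ex-falso handler $\bot \to \texttt{C}\,g\,a$, an operation extensionally equal to the identity because the handler is never invoked; hence the instance of \cref{dfn:structure:throw-catch} with a non-throwing computation and a throwing handler type-checks precisely because $e$ specializes to $f$, while the instance with genuinely distinct error types shows $e$ and $f$ cannot be forced equal. So $e$ and $f$ are independent universally quantified variables and the displayed type is the most general one compatible with \cref{dfn:structure}. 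The other is the soundness direction: this type is actually \emph{inhabited} by a structure obeying \cref{dfn:structure}, witnessed by the concrete instances developed later (e.g.\ \texttt{ExceptT} with \texttt{catchE} from \cref{sec:either-monadtrans}, the \texttt{Parser} of \cref{sec:parser-combinators}, and \texttt{MonadError} of \cref{sec:monad-error}).

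The step I expect to be the main obstacle is precisely this ``bottom elimination'' observation, which the author's own footnote calls ``the only nontrivial observation in this section''. Everything else is routine first-order unification driven by the equations of \cref{dfn:structure}, but justifying that a \texttt{throw}-free computation may be ascribed an arbitrary error index --- and hence that the quantifier on $e$ is genuinely independent of $f$ rather than an artefact of the parametrisation --- requires committing to what ``does not \texttt{throw}'' means and why ex-falso applies; I would lean on \cref{dfn:structure:throw-bind} and the \texttt{Monad} laws to make the empty-type reading precise rather than merely intuitive.
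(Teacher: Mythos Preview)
Your proposal is correct and follows essentially the same approach as the paper: the paper's proof is exactly ``simple unification of types'' driven by the three equations $\texttt{return a `catch` f} = \texttt{return a}$, $\texttt{throw e `catch` (\textbackslash\_ -> return a)} = \texttt{return a}$, and $\texttt{throw e `catch` (\textbackslash\_ -> throw f)} = \texttt{throw f}$, together with the bottom-elimination observation you correctly identified as the crux. Your write-up is in fact more explicit than the paper's (which summarises the preceding discussion as ``by the above reasoning''); the only addition you make that the paper omits is the inhabitation/soundness remark, which is harmless but not required for the theorem as stated.
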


\begin{proof}

By the above reasoning. That is, by simple unification of types of
\VERB|\FunctionTok{return}|, \VERB|\NormalTok{throw}|,
\VERB|\NormalTok{(}\FunctionTok{>>=}\NormalTok{)}| operators of
\cref{dfn:structure} and the following equations that are consequences
of equations of \cref{dfn:structure}

\begin{Shaded}
\begin{Highlighting}[]
\FunctionTok{return}\NormalTok{ a }\OtherTok{`catch`}\NormalTok{ f }\FunctionTok{==} \FunctionTok{return}\NormalTok{ a}
\NormalTok{throw e }\OtherTok{`catch`}\NormalTok{ (\textbackslash{}_ }\OtherTok{->} \FunctionTok{return}\NormalTok{ a) }\FunctionTok{==} \FunctionTok{return}\NormalTok{ a}
\NormalTok{throw e }\OtherTok{`catch`}\NormalTok{ (\textbackslash{}_ }\OtherTok{->}\NormalTok{ throw f) }\FunctionTok{==}\NormalTok{ throw f}
\end{Highlighting}
\end{Shaded}

\end{proof}

\hypertarget{conjoinedly-monadic-algebra}{%
\section{Conjoinedly Monadic
algebra}\label{conjoinedly-monadic-algebra}}

\label{sec:conjoinedly-monadic}

After \cref{thm:catch-type} it becomes hard to ignore the fact that
\VERB|\NormalTok{throw}| has the type of \VERB|\FunctionTok{return}| and
\VERB|\FunctionTok{catch}| has the type of
\VERB|\NormalTok{(}\FunctionTok{>>=}\NormalTok{)}| in the "wrong" index
for \VERB|\DataTypeTok{C}|. Moreover, \cref{dfn:structure:throw-catch}
of \cref{dfn:structure} looks exactly like a left identity law for
\VERB|\DataTypeTok{Monad}| (\cref{sec:monad}). While it is not as
immediately clear that \VERB|\FunctionTok{catch}| should be associative,
it seems only natural to ask whenever the following conjoinedly
\VERB|\DataTypeTok{Monad}|ic restriction of \cref{dfn:structure} has any
instances.

\begin{definition}

\label{dfn:proper}

\textbf{\textbf{Conjoinedly monadic error algebra}}. A type
\VERB|\OtherTok{m ::} \FunctionTok{*} \OtherTok{=>} \FunctionTok{*} \OtherTok{=>} \FunctionTok{*}|
for which

\begin{itemize}
\item
  \label{dfn:proper:bind-monad} \VERB|\NormalTok{m}| is a
  \VERB|\DataTypeTok{Monad}| in its second index (that is,
  \VERB|\NormalTok{m e}| is a \VERB|\DataTypeTok{Monad}| for all
  \VERB|\NormalTok{e}|),
\item
  \label{dfn:proper:catch-monad} \VERB|\NormalTok{m}| is a
  \VERB|\DataTypeTok{Monad}| in its first index (that is,
  \VERB|\NormalTok{\textbackslash{}e }\FunctionTok{.}\NormalTok{ m e a}|
  is a \VERB|\DataTypeTok{Monad}| for all \VERB|\NormalTok{a}|),
\end{itemize}

and assuming

\begin{itemize}
\item
  the names of \VERB|\DataTypeTok{Monad}| operators in the second index
  of \VERB|\NormalTok{m}| are \VERB|\FunctionTok{return}| and
  \VERB|\NormalTok{(}\FunctionTok{>>=}\NormalTok{)}|,
\item
  the names of \VERB|\DataTypeTok{Monad}| operators in the first index
  are \VERB|\NormalTok{throw}| and \VERB|\FunctionTok{catch}|,
\end{itemize}

the following equations hold

\begin{enumerate}
\item
  \label{dfn:proper:return-catch}
  \VERB|\FunctionTok{return}\NormalTok{ x }\OtherTok{`catch`}\NormalTok{ f }\FunctionTok{==} \FunctionTok{return}\NormalTok{ x}|,
\item
  \label{dfn:proper:throw-bind}
  \VERB|\NormalTok{throw e }\FunctionTok{>>=}\NormalTok{ f }\FunctionTok{==}\NormalTok{ throw e}|.
\end{enumerate}

\end{definition}

If we replace \VERB|\DataTypeTok{Monad}| in \cref{dfn:proper} with
\VERB|\DataTypeTok{MonadFish}| (\cref{sec:monad-fish}), as usual, the
latter two equations become a bit clearer.

\newpage

\begin{definition}

\label{dfn:fishy}

\textbf{\textbf{Fishy conjoinedly monadic error algebra}}. A type
\VERB|\OtherTok{m ::} \FunctionTok{*} \OtherTok{=>} \FunctionTok{*} \OtherTok{=>} \FunctionTok{*}|
for which

\begin{itemize}
\item
  \label{dfn:fishy:bind-monad} \VERB|\NormalTok{m}| is a
  \VERB|\DataTypeTok{MonadFish}| in its second index,
\item
  \label{dfn:fishy:catch-monad} \VERB|\NormalTok{m}| is a
  \VERB|\DataTypeTok{MonadFish}| in its first index,
\end{itemize}

and assuming

\begin{itemize}
\item
  the names of \VERB|\DataTypeTok{MonadFish}| operators in the second
  index are \VERB|\FunctionTok{return}| and
  \VERB|\NormalTok{(}\FunctionTok{>=>}\NormalTok{)}|,
\item
  the names of \VERB|\DataTypeTok{MonadFish}| operators in the first
  index are \VERB|\NormalTok{throw}| and \VERB|\NormalTok{handle}|,
\end{itemize}

the following equations hold

\begin{enumerate}
\item
  \label{dfn:fishy:return-catch}
  \VERB|\FunctionTok{return} \OtherTok{`handle`}\NormalTok{ f }\FunctionTok{==} \FunctionTok{return}|,
\item
  \label{dfn:fishy:throw-bind}
  \VERB|\NormalTok{throw }\FunctionTok{>=>}\NormalTok{ f }\FunctionTok{==}\NormalTok{ throw}|.
\end{enumerate}

\end{definition}

On other words, definitions~\ref{dfn:proper} and~\ref{dfn:fishy} define
a structure that is a \VERB|\DataTypeTok{Monad}|
(\VERB|\DataTypeTok{MonadFish}|) twice and for which
\VERB|\FunctionTok{return}| is a left zero for
\VERB|\FunctionTok{catch}| (\VERB|\NormalTok{handle}|) and
\VERB|\NormalTok{throw}| is a left zero for
\VERB|\NormalTok{(}\FunctionTok{>>=}\NormalTok{)}|
(\VERB|\NormalTok{(}\FunctionTok{>=>}\NormalTok{)}|).

\hypertarget{instances-either}{%
\section{Instances: Either}\label{instances-either}}

\label{sec:instances:either}

Pragmatic programmer finally loses last bits of concentration realizing
that \VERB|\DataTypeTok{Either}| type seems to match requirements of
\cref{dfn:proper} and goes into sources to check whenever Haskell's
standard library already has such a \VERB|\FunctionTok{catch}|.
Unfortunately, \VERB|\DataTypeTok{Data.Either}| module does not define
such an operator. However, \VERB|\NormalTok{catchE}| and
\VERB|\NormalTok{throwE}| of \VERB|\DataTypeTok{ExceptT}|
(\cref{sec:either-monadtrans}) match. Of course, if we substitute
\VERB|\DataTypeTok{Identity}| for \VERB|\NormalTok{m}|,
\VERB|\DataTypeTok{ExceptT}| turns into \VERB|\DataTypeTok{Either}| and
those operators can be simplified to

\begin{Shaded}
\begin{Highlighting}[]
\OtherTok{throwE' ::}\NormalTok{ e }\OtherTok{->} \DataTypeTok{Either}\NormalTok{ e a}
\NormalTok{throwE' }\FunctionTok{=} \DataTypeTok{Left}

\OtherTok{catchE' ::} \DataTypeTok{Either}\NormalTok{ e a}
        \OtherTok{->}\NormalTok{ (e }\OtherTok{->} \DataTypeTok{Either}\NormalTok{ f a)}
        \OtherTok{->} \DataTypeTok{Either}\NormalTok{ f a}
\NormalTok{catchE' (}\DataTypeTok{Left}\NormalTok{ e)  h }\FunctionTok{=}\NormalTok{ h e}
\NormalTok{catchE' (}\DataTypeTok{Right}\NormalTok{ a) _ }\FunctionTok{=} \DataTypeTok{Right}\NormalTok{ a}
\end{Highlighting}
\end{Shaded}

\begin{lemma}

\label{thm:ExceptT-monad} For a given \VERB|\DataTypeTok{Monad}|
\VERB|\NormalTok{m}| and a fixed argument \VERB|\NormalTok{a}|,
\VERB|\DataTypeTok{ExceptT}| with \VERB|\NormalTok{throwE}| as
\VERB|\FunctionTok{return}| and \VERB|\NormalTok{catchE}| as
\VERB|\NormalTok{(}\FunctionTok{>>=}\NormalTok{)}| is a
\VERB|\DataTypeTok{Monad}| in argument \VERB|\NormalTok{e}|.

\end{lemma}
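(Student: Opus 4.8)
The plan is to verify the three \texttt{Monad} laws directly for the candidate structure $\lambda e.\,\texttt{ExceptT}\ e\ m\ a$, reading \texttt{throwE} as \texttt{return} and \texttt{catchE} as $(\texttt{>>=})$, and in each case reducing the goal to the corresponding law of the underlying \texttt{Monad} $m$ (\cref{sec:monad}) combined with a two-way case split on the constructors of \texttt{Either}. The only facts about the definitions of \cref{sec:either-monadtrans} that I will need are the newtype isomorphism $\texttt{ExceptT}\circ\texttt{runExceptT}=\texttt{id}$ and the identity $\texttt{runExceptT}\,(\texttt{throwE}\ e)=\texttt{pure}\,(\texttt{Left}\ e)$, both immediate by unfolding; and, for convenience, that $\texttt{runExceptT}\,(\texttt{catchE}\ m\ h)$ is the $m$-computation $\texttt{runExceptT}\ m \mathrel{\texttt{>>=}} k$ where $k$ is the handler $\lambda a.\ \texttt{case}\ a\ \texttt{of}\ \{\texttt{Left}\ l\to\texttt{runExceptT}\,(h\ l);\ \texttt{Right}\ r\to\texttt{pure}\,(\texttt{Right}\ r)\}$.

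First I would check left identity, $\texttt{catchE}\,(\texttt{throwE}\ e)\ f = f\ e$: unfolding and substituting $\texttt{runExceptT}\,(\texttt{throwE}\ e)=\texttt{pure}\,(\texttt{Left}\ e)$ gives $\texttt{ExceptT}\,(\texttt{pure}\,(\texttt{Left}\ e)\mathrel{\texttt{>>=}}k)$, which by the left-identity law of $m$ equals $\texttt{ExceptT}\,(k\,(\texttt{Left}\ e))=\texttt{ExceptT}\,(\texttt{runExceptT}\,(f\ e))=f\ e$. For right identity, $\texttt{catchE}\ m\ \texttt{throwE}=m$: here the handler $k$ specialises to $\lambda a.\ \texttt{case}\ a\ \texttt{of}\ \{\texttt{Left}\ l\to\texttt{pure}\,(\texttt{Left}\ l);\ \texttt{Right}\ r\to\texttt{pure}\,(\texttt{Right}\ r)\}$, which is extensionally just $\texttt{pure}$ on \texttt{Either} by case analysis, so the right-identity law of $m$ collapses $\texttt{ExceptT}\,(\texttt{runExceptT}\ m\mathrel{\texttt{>>=}}\texttt{pure})$ to $\texttt{ExceptT}\,(\texttt{runExceptT}\ m)=m$.

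The real work is associativity, $\texttt{catchE}\,(\texttt{catchE}\ m\ g)\ h = \texttt{catchE}\ m\ (\lambda e.\ \texttt{catchE}\,(g\ e)\ h)$. I would expand both sides as \texttt{do}-blocks over $m$, use associativity of $m$'s bind to flatten the left-hand side into a single $\texttt{runExceptT}\ m\mathrel{\texttt{>>=}}(\dots)$, and then case-split on the value $a$ produced by $\texttt{runExceptT}\ m$: on $\texttt{Left}\ l$ both sides normalise to $\texttt{runExceptT}\,(\texttt{catchE}\,(g\ l)\ h)$, and on $\texttt{Right}\ r$ both sides normalise to $\texttt{pure}\,(\texttt{Right}\ r)$, the latter using the left-identity law of $m$ to discharge the intermediate $\texttt{pure}\,(\texttt{Right}\ r)$. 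I expect this flatten-then-split step to be the only mildly fiddly point, but it is pure bookkeeping and uses no property of $m$ beyond its \texttt{Monad} laws. Finally, I would note in passing that the two side equations of \cref{dfn:proper} are obtained by exactly the same unfolding — $\texttt{catchE}\,(\texttt{pure}\ x)\ f=\texttt{pure}\ x$ from $\texttt{runExceptT}\,(\texttt{pure}\ x)=\texttt{pure}\,(\texttt{Right}\ x)$, and $\texttt{throwE}\ e\mathrel{\texttt{>>=}}f=\texttt{throwE}\ e$ from the \texttt{ExceptT} bind hitting the $\texttt{Left}$ branch — though these lie just outside the statement proved here.
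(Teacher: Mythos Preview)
Your argument is correct and is precisely the paper's first suggested route: the paper's own proof simply reads ``by case analysis, using the fact that \texttt{m} satisfies \texttt{Monad} laws,'' and you have spelled out exactly that case analysis for each of the three laws. The paper also lists a second, shorter route via \cref{sec:logical} (the symmetry of $\lor$ in \texttt{Either}), which you do not take, but your direct verification is the more explicit of the two and nothing is missing.
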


\begin{proof}

Any of the following

\begin{itemize}
\item
  \textbf{\textbf{By brute force:}} by case analysis, using the fact
  that \VERB|\NormalTok{m}| satisfies \VERB|\DataTypeTok{Monad}| laws.
\item
  \textbf{\textbf{Another way:}} trivial consequence of
  \cref{sec:logical}.
\end{itemize}

\end{proof}

\begin{lemma}

\label{thm:ExceptT-zeroes}

For \VERB|\DataTypeTok{ExceptT}| with the above operators the following
equations hold

\begin{enumerate}
\item
  \VERB|\FunctionTok{return}\NormalTok{ x }\OtherTok{`catchE`}\NormalTok{ f }\FunctionTok{==} \FunctionTok{return}\NormalTok{ x}|,
\item
  \VERB|\NormalTok{throwE e }\FunctionTok{>>=}\NormalTok{ f }\FunctionTok{==}\NormalTok{ throwE e}|.
\end{enumerate}

\end{lemma}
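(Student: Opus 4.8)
The plan is to prove both equations purely by unfolding the definitions of \texttt{throwE}, \texttt{catchE}, \texttt{pure} (which is \texttt{return} in the \texttt{a}-index) and \texttt{(>>=)} for \texttt{ExceptT} from \cref{sec:either-monadtrans}, and then collapsing each result with a single application of the left-identity \texttt{Monad} law of the underlying \texttt{Monad} \texttt{m}. No genuine case analysis is needed: \texttt{return} and \texttt{throwE} each emit exactly one of the two constructors, so only one branch of \texttt{catchE} (resp.\ of \texttt{(>>=)}) is ever reached.

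For equation~1, I would start from \texttt{return x `catchE` f}. Since \texttt{return x = pure x = ExceptT (pure (Right x))}, we have \texttt{runExceptT (return x) = pure (Right x)}; the \texttt{do}-block defining \texttt{catchE} then binds this with the underlying \texttt{(>>=)} and pattern-matches, so by the left-identity law for \texttt{m} the block reduces to \texttt{case (Right x) of Left l -> ...; Right r -> pure (Right r)}, i.e.\ to \texttt{pure (Right x)}; re-wrapping gives \texttt{ExceptT (pure (Right x)) = return x}. Equation~2 is the exact dual on the other constructor: \texttt{throwE e = ExceptT (pure (Left e))}, hence \texttt{runExceptT (throwE e) = pure (Left e)}, and feeding this into the \texttt{do}-block defining \texttt{(>>=)} for \texttt{ExceptT} and again invoking left identity of \texttt{m} collapses it to \texttt{case (Left e) of Left e -> pure (Left e); ...}, i.e.\ to \texttt{pure (Left e)}, and re-wrapping yields \texttt{throwE e}.

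I do not expect any real obstacle here — the only property of \texttt{m} used is left identity of \texttt{(>>=)}, and the \texttt{Right}-branch of \texttt{catchE} and the \texttt{Left}-branch of \texttt{(>>=)} are never entered. As an alternative, exactly as in the proof of \cref{thm:ExceptT-monad}, both equations also fall out for free from the development in \cref{sec:logical}, so the proof may simply cite that section. Finally I would remark that these two equations are precisely conditions~\cref{dfn:proper:return-catch,dfn:proper:throw-bind} of \cref{dfn:proper}, so together with \cref{thm:ExceptT-monad} and the standard fact that \texttt{ExceptT} is a \texttt{Monad} in its \texttt{a}-index this establishes that \texttt{ExceptT} (and hence \texttt{Either}) is a conjoinedly monadic error algebra.
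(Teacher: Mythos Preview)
Your proof is correct and is essentially a fully spelled-out version of the paper's one-line ``By trivial case analysis'': you unfold the definitions, apply the left-identity law of \texttt{m}, and reduce the single reachable \texttt{case} branch, which is exactly what that phrase means here. The extra commentary connecting this to \cref{dfn:proper} and \cref{sec:logical} is fine but belongs to \cref{thm:ExceptT-proper}, not to this lemma.
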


\begin{proof}

By trivial case analysis.

\end{proof}

\begin{theorem}

\label{thm:ExceptT-proper}

\VERB|\DataTypeTok{ExceptT}| and, by consequence,
\VERB|\DataTypeTok{Either}| satisfy \cref{dfn:proper}.

\end{theorem}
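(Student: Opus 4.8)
The plan is to assemble the theorem from the two lemmas just established together with the already-given \texttt{Monad} instance of \texttt{ExceptT}. \Cref{dfn:proper} demands four things of a type constructor \texttt{m}: that it be a \texttt{Monad} in its second index, that it be a \texttt{Monad} in its first index, that \texttt{return x `catch` f == return x}, and that \texttt{throw e >>= f == throw e}. Reading \texttt{m} as the bifunctor $\lambda e\,a.\ \texttt{ExceptT } e\ \mathtt{m}\ a$ for an arbitrary lawful inner \texttt{Monad} \texttt{m}, the second-index structure (with the operators named \texttt{pure}/\texttt{return} and \texttt{(>>=)}) is exactly the \texttt{Monad} instance displayed in \cref{sec:either-monadtrans}, whose laws hold whenever the inner monad's do; so the first clause is discharged with no new work. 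The first-index structure, with the operators renamed \texttt{throw := throwE} and \texttt{catch := catchE}, is precisely \cref{thm:ExceptT-monad}. The two remaining equations are \cref{thm:ExceptT-zeroes}. Hence \texttt{ExceptT} over any lawful \texttt{Monad} satisfies \cref{dfn:proper}.

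For the \texttt{Either} corollary I would specialize the inner monad to \texttt{Identity}, which is lawful (\cref{sec:identity-monad}). Under the evident isomorphism \texttt{ExceptT e Identity a} $\cong$ \texttt{Either e a} the four operators \texttt{pure}, \texttt{(>>=)}, \texttt{throwE}, \texttt{catchE} transport exactly to \texttt{Right}, the \texttt{Either}-bind, \texttt{throwE'}, and \texttt{catchE'} written above; since that isomorphism is a bijection commuting with all four operators, every \texttt{Monad} law and both zero equations transfer verbatim from the \texttt{ExceptT} case. Therefore \texttt{Either} satisfies \cref{dfn:proper} as well.

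The only point needing genuine care — and the nearest thing to an obstacle — is bookkeeping rather than mathematics: keeping the operator renaming of \cref{dfn:proper} consistent (the ``first index'' monad uses \texttt{throwE} where the definition writes \texttt{return} and \texttt{catchE} where it writes \texttt{(>>=)}), and making sure the inner-monad laws are actually in scope, which is why the statement is phrased for \texttt{ExceptT} over an arbitrary \texttt{Monad} and the \texttt{Either} result is obtained by plugging in \texttt{Identity}. (If one wished to extend the list of instances to \texttt{IO}, the caveat of \cref{rem:io-caveats} would intervene, but no such claim is made here.) Everything else reduces to a direct citation of \cref{thm:ExceptT-monad} and \cref{thm:ExceptT-zeroes}, so no further case analysis beyond what those lemmas already contain is required.
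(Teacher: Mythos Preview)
Your proposal is correct and follows essentially the same approach as the paper, which simply records the result as a consequence of \cref{thm:ExceptT-monad} and \cref{thm:ExceptT-zeroes}. Your version is more explicit about the second-index \texttt{Monad} instance and the \texttt{Identity} specialization for \texttt{Either}, but this is just spelling out what the paper leaves tacit.
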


\begin{proof}

Consequence of \cref{thm:ExceptT-monad} and \cref{thm:ExceptT-zeroes}.

\end{proof}

\hypertarget{logical-perspective}{%
\section{Logical perspective}\label{logical-perspective}}

\label{sec:logical}

Note, that from a logical perspective most of the above is simply
trivial. \VERB|\DataTypeTok{Either}\NormalTok{ a b}| is just
\(a \lor b\) and so if \(\lambda b . a \lor b\) is a
\VERB|\DataTypeTok{Monad}| then \(\lambda a . a \lor b\) must be a
\VERB|\DataTypeTok{Monad}| too since \(\lor\) operator is symmetric.
Sections~\ref{sec:init}-\ref{sec:conjoinedly-monadic} simply generalize
this fact with interactions between \VERB|\DataTypeTok{Left}|,
\VERB|\DataTypeTok{Right}| and two
\VERB|\NormalTok{(}\FunctionTok{>>=}\NormalTok{)}| operators into
\cref{dfn:proper}.\cref{fn:its-dual}

The main point of this article is that \textbf{\textbf{there are other
instances}} of this generalization and, more importantly, that
\textbf{\textbf{this generalization is itself interesting}} --- the
facts that we shall demonstrate in the sections that follow.

\hypertarget{encodings}{%
\section{Encodings}\label{encodings}}

\label{sec:encodings}

Despite the noted triviality, these facts do not seem to be appreciated
by the wider Haskell community. In particular:

\begin{itemize}
\item
  \VERB|\DataTypeTok{ExceptT}| does not get much use in Hackage packages
  in general,
\item
  the equivalent of \VERB|\NormalTok{catchE}| for
  \VERB|\DataTypeTok{ErrorT}| has an overly-restricted type

\begin{Shaded}
\begin{Highlighting}[]
\OtherTok{catchError ::}\NormalTok{ (}\DataTypeTok{Monad}\NormalTok{ m)}
           \OtherTok{=>} \DataTypeTok{ErrorT}\NormalTok{ e m a}
           \OtherTok{->}\NormalTok{ (e }\OtherTok{->} \DataTypeTok{ErrorT}\NormalTok{ e m a)}
           \OtherTok{->} \DataTypeTok{ErrorT}\NormalTok{ e m a}
\NormalTok{m }\OtherTok{`catchError`}\NormalTok{ h }\FunctionTok{=} \DataTypeTok{ErrorT} \FunctionTok{$} \KeywordTok{do}
\NormalTok{    a }\OtherTok{<-}\NormalTok{ runErrorT m}
    \KeywordTok{case}\NormalTok{ a }\KeywordTok{of}
        \DataTypeTok{Left}\NormalTok{  l }\OtherTok{->}\NormalTok{ runErrorT (h l)}
        \DataTypeTok{Right}\NormalTok{ r }\OtherTok{->} \FunctionTok{return}\NormalTok{ (}\DataTypeTok{Right}\NormalTok{ r)}
\end{Highlighting}
\end{Shaded}
\item
  no \VERB|\DataTypeTok{Monad}|ic parsing combinator library from
  Hackage (most obvious beneficiaries of the observation) defines the
  would-be-\VERB|\DataTypeTok{Monad}| instance of
  \VERB|\NormalTok{throwE}| and \VERB|\NormalTok{catchE}|.
\end{itemize}

To our best knowledge, the only Hackage package that is explicitly aware
of the fact that \VERB|\DataTypeTok{Either}| is a
\VERB|\DataTypeTok{Monad}| twice is
\texttt{errors}~\cite{Hackage:errors230}\footnote{In~\cite{Gonzalez:2012:SEH}
  Gabriel Gonzalez, the author of the \texttt{errors} package, also
  explicitly mentions the fact that the \VERB|\DataTypeTok{Monad}|ic
  operators for the other index of \VERB|\DataTypeTok{Either}| seem to
  match the semantics for the corresponding \VERB|\NormalTok{throw}| and
  \VERB|\FunctionTok{catch}| operators. Though he gives no proofs or
  claims of general applicability, he mentions that the fact itself was
  first pointed out to him by Elliott Hird who named it the "success
  \VERB|\DataTypeTok{Monad}|". So, though the Net seems to have no
  evidence of that conversation, it is entirely possible some of the
  discussed facts were already discovered in their complete forms before
  (at least in the idealistic sense, but not, to our best knowledge, in
  the "communicated in this form before" sense). (Which is usually the
  case for almost anything anyway.) (Which is a yet another reminder
  that "intellectual property" is an oxymoron.)} and the only packages
that seem to be aware that \VERB|\NormalTok{throw}| and
\VERB|\FunctionTok{catch}| in general need more general types than those
given by \VERB|\DataTypeTok{MonadCatch}| of
\cref{sec:monadic-generalizations} are those discussed in
\cref{sec:other-monadic-generalizations} (but they miss the fact that
their \VERB|\FunctionTok{catch}| operators want to be
\VERB|\DataTypeTok{Monad}|ic \VERB|\NormalTok{bind}|s). To our best
knowledge, no Hackage package utilizes both facts.

As to the question why had not anybody notice and start exploiting these
facts yet we hypothesize that the answer is because Haskell cannot
express these properties conveniently (not to mention less expressive
mainstream languages which cannot express them at all).

The simplest possible encoding of \cref{dfn:proper} in Haskell is just

\begin{Shaded}
\begin{Highlighting}[]
\KeywordTok{class} \DataTypeTok{ConjoinedMonads}\NormalTok{ m }\KeywordTok{where}
\OtherTok{  return ::}\NormalTok{ a }\OtherTok{->}\NormalTok{ m e a}
\OtherTok{  (>>=)  ::}\NormalTok{ m e a }\OtherTok{->}\NormalTok{ (a }\OtherTok{->}\NormalTok{ m e b) }\OtherTok{->}\NormalTok{ m e b}

\OtherTok{  throw  ::}\NormalTok{ e }\OtherTok{->}\NormalTok{ m e a}
\OtherTok{  catch  ::}\NormalTok{ m e a }\OtherTok{->}\NormalTok{ (e }\OtherTok{->}\NormalTok{ m f a) }\OtherTok{->}\NormalTok{ m f a}
\end{Highlighting}
\end{Shaded}

\noindent but it does not play too well with the rest of the Haskell
ecosystem. In the ideal world, \cref{dfn:proper} would get encoded with
the following pseudo-Haskell definition

\begin{definition}

\label{dfn:proper-haskell} \textbf{\textbf{Proper pseudo-Haskell
definition.}}

\begin{Shaded}
\begin{Highlighting}[]
\KeywordTok{class}\NormalTok{ (}\KeywordTok{forall}\NormalTok{ a }\FunctionTok{.} \DataTypeTok{Monad}\NormalTok{ (\textbackslash{}e }\OtherTok{->}\NormalTok{ m e a)) }\CommentTok{-- `Monad` in `e`}
\NormalTok{     , }\KeywordTok{forall}\NormalTok{ e }\FunctionTok{.} \DataTypeTok{Monad}\NormalTok{ (m e) }\CommentTok{-- `Monad` in `a`}
    \OtherTok{=>} \DataTypeTok{ConjoinedMonads}\NormalTok{ m }\KeywordTok{where}
    \CommentTok{-- and that's it}
\end{Highlighting}
\end{Shaded}

\end{definition}

\noindent however, Haskell allows neither rank 2 types in type classes,
nor lambdas in types, which brings us to the following "theorem".

\begin{quasitheorem}

\label{thm:not-in-haskell}

Haskell cannot properly (equivalently to \cref{dfn:proper-haskell})
define \VERB|\DataTypeTok{ConjoinedMonads}|.

\end{quasitheorem}

\begin{proof}

Proper definition of \VERB|\DataTypeTok{ConjoinedMonads}| requires rank
2 types in type class declaration, which is not possible in modern
Haskell. There is no way to emulate rank 2 definition using only rank 1
constructions.

\end{proof}

We call it a "theorem" because we do not really know if its proof really
works out for Haskell as Haskell has an awful lot of language extensions
(including future ones) and there might be some nontrivial combination
of those that gives the desired effect. In particular, GHC version 8.6
released just before this article was finished introduced
\texttt{QuantifiedConstraints} extension~\cite{Bottu:2017:QCC} allowing
us to write

\begin{Shaded}
\begin{Highlighting}[]
\KeywordTok{data} \DataTypeTok{Swap}\NormalTok{ r a e }\FunctionTok{=} \DataTypeTok{Swap}\NormalTok{ \{}\OtherTok{ unSwap ::}\NormalTok{ r e a \}}

\KeywordTok{instance}\NormalTok{ (}\KeywordTok{forall}\NormalTok{ e }\FunctionTok{.} \DataTypeTok{Monad}\NormalTok{ (r e)}
\NormalTok{        , }\KeywordTok{forall}\NormalTok{ a }\FunctionTok{.} \DataTypeTok{Monad}\NormalTok{ (}\DataTypeTok{Swap}\NormalTok{ r a))}
      \OtherTok{=>} \DataTypeTok{ConjoinedMonads}\NormalTok{ r }\KeywordTok{where}
  \CommentTok{-- ...}
\end{Highlighting}
\end{Shaded}

\noindent which, arguably, can be considered good enough, though not
very convenient in practice.

The purposes of this article, however, is not to demonstrate that there
is a convenient form of \cref{dfn:proper} in Haskell but to show what
could be achieved if there were such a convenient definition. Which
means that we can and, hence, shall completely ignore the question of
the most elegant Haskell representation for \cref{dfn:proper} and just
use the very first definition of \VERB|\DataTypeTok{ConjoinedMonads}|
from above for simplicity.

As to the naming, it is, indeed, tempting to call this structure
\VERB|\DataTypeTok{BiMonad}|, but that name is already taken by another
structure from category theory. Then, since the structure consists of
two \VERB|\DataTypeTok{Monads}| that are "dual" to each other via
interaction laws it is tempting to call it
\VERB|\DataTypeTok{DualMonad}| as a double-pun, but that "duality" is
different from the usual duality of category theory. Which is why we
opted into using the name "\VERB|\DataTypeTok{ConjoinedMonads}|" (in the
sense of "conjoined twins", conjoined with left-zeroes).

\hypertarget{instances-constant-functors}{%
\section{Instances: constant
Functors}\label{instances-constant-functors}}

\label{sec:instances:constant}

In this section we discuss the relationship between
\VERB|\DataTypeTok{ConjoinedMonads}| (and \cref{dfn:proper}) and
\VERB|\DataTypeTok{MonadThrow}|, \VERB|\DataTypeTok{MonadCatch}|, and
\VERB|\DataTypeTok{MonadError}| from \cref{sec:monadic-generalizations}.

\hypertarget{monaderror-1}{%
\subsection{MonadError}\label{monaderror-1}}

\label{sec:instances:constant:monaderror}

\VERB|\DataTypeTok{MonadError}| (\cref{sec:monad-error}) relationship to
\VERB|\DataTypeTok{ConjoinedMonads}| turns out to be pretty simple.
Remember that \VERB|\DataTypeTok{MonadError}| is defined using
functional dependencies

\begin{Shaded}
\begin{Highlighting}[]
\KeywordTok{class}\NormalTok{ (}\DataTypeTok{Monad}\NormalTok{ m) }\OtherTok{=>} \DataTypeTok{MonadError}\NormalTok{ e m}
                 \FunctionTok{|}\NormalTok{ m }\OtherTok{->}\NormalTok{ e }\KeywordTok{where}
\end{Highlighting}
\end{Shaded}

This means that Haskell type system guarantees that for each
\VERB|\NormalTok{m}| there exist unique \VERB|\NormalTok{e}| if
\VERB|\DataTypeTok{MonadError}\NormalTok{ e m}| is inhabited. This, in
turn, means that substituting a constant \VERB|\DataTypeTok{Functor}|
\VERB|\NormalTok{r }\FunctionTok{=}\NormalTok{ \textbackslash{}x a }\OtherTok{->}\NormalTok{ m a}|
over \VERB|\DataTypeTok{Monad}| \VERB|\NormalTok{m}| into the definition
of \VERB|\DataTypeTok{ConjoinedMonads}| produces

\begin{Shaded}
\begin{Highlighting}[]
\KeywordTok{class} \DataTypeTok{ConjoinedMonads}\NormalTok{ (\textbackslash{}x a }\OtherTok{->}\NormalTok{ m a) }\KeywordTok{where}
\OtherTok{  return ::}\NormalTok{ a }\OtherTok{->}\NormalTok{ m a}
\OtherTok{  (>>=)  ::}\NormalTok{ m a }\OtherTok{->}\NormalTok{ (a }\OtherTok{->}\NormalTok{ m b) }\OtherTok{->}\NormalTok{ m b}

\OtherTok{  throw  ::}\NormalTok{ e }\OtherTok{->}\NormalTok{ m a}
\OtherTok{  catch  ::}\NormalTok{ m a }\OtherTok{->}\NormalTok{ (e }\OtherTok{->}\NormalTok{ m a) }\OtherTok{->}\NormalTok{ m a}
\end{Highlighting}
\end{Shaded}

The first two operators are just the definition of
\VERB|\DataTypeTok{Monad}\NormalTok{ m}|, the latter two match
\VERB|\DataTypeTok{MonadError}|'s \VERB|\NormalTok{throwError}| and
\VERB|\NormalTok{catchError}| exactly.

\begin{theorem}

\VERB|\DataTypeTok{MonadError}| is a
\VERB|\DataTypeTok{ConjoinedMonads}| that is constant in its first
index.

\end{theorem}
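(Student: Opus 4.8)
The plan is to take an arbitrary \texttt{MonadError e m} instance and build from it the bifunctor that is \emph{literally} constant in its first index, then read off a \texttt{ConjoinedMonads} structure on it. Since \texttt{MonadError} is declared with the functional dependency \verb|m -> e|, the error type \texttt{e} is uniquely determined by \texttt{m}, so the bifunctor \verb|r| given by \verb|r f a = m a| --- discarding the first index entirely, as in the section preamble --- is unambiguous, and it is ``constant in its first index'' by construction. The four operators of \texttt{ConjoinedMonads} are then instantiated as \verb|return = pure|, \verb|(>>=)| the bind of \texttt{m}, \verb|throw = throwError|, and \verb|catch = catchError|, exactly the reading already anticipated above.

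First I would discharge the requirement that \texttt{r} be a \texttt{Monad} in its second index: since \verb|r f| is literally \texttt{m} for every \texttt{f}, this is just the \texttt{Monad m} instance that \texttt{MonadError} already presupposes, so there is nothing to prove. Next, \texttt{r} must be a \texttt{Monad} in its first index; here the first index is discarded, so this amounts to checking that \verb|throwError| and \verb|catchError| make \texttt{m} into a monad with \verb|throwError| playing the role of the unit. The three monad laws unfold to: catching a \verb|throwError e| with a handler \texttt{f} equals \verb|f e| (left identity); catching with \verb|throwError| itself as the handler is the identity (right identity); and associativity of \verb|catchError| in the obvious sense. Finally, the two interaction laws \cref{dfn:proper:return-catch,dfn:proper:throw-bind} become ``catching a \verb|pure| value does nothing'' and ``a \verb|throwError| on the left of a bind swallows the continuation''.

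The main obstacle is definitional rather than computational: the \texttt{mtl} \texttt{MonadError} class carries no stated equational laws, so none of the five equations above is forced by the class declaration alone. The honest conclusion is therefore that \emph{every} \texttt{MonadError} instance obeying its conventional operational semantics --- which is precisely the bundle of equations collected in \cref{dfn:structure} and \cref{dfn:proper} --- is a \texttt{ConjoinedMonads} that is constant in its first index. I would state this caveat explicitly, then observe that for the canonical instance the five equations are inherited from \cref{thm:ExceptT-monad} and \cref{thm:ExceptT-zeroes} (hence from \cref{thm:ExceptT-proper}), and for the transitively-derived instances of \cref{sec:monad-error} (over \texttt{IdentityT}, \texttt{MaybeT}, and the like) they follow by lifting the same case analysis used there through the transformer, so no genuinely new verification is required.
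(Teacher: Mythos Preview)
Your approach is essentially the paper's own: build the constant bifunctor \verb|r f a = m a| using the functional dependency, and read off the four \texttt{ConjoinedMonads} operators from \texttt{Monad m} and \texttt{MonadError e m}. The paper's proof is literally ``By the above argument,'' referring only to that type-level substitution; you go further by worrying about the equational laws and flagging that \texttt{mtl}'s \texttt{MonadError} states none, which is a fair point the paper glosses over entirely. So your proposal is correct and, if anything, more careful than the original --- your explicit caveat about ``instances obeying conventional operational semantics'' is an honest acknowledgement of something the paper leaves implicit.
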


\begin{proof}

By the above argument.

\end{proof}

\hypertarget{monadthrow-and-monadcatch-1}{%
\subsection{MonadThrow and
MonadCatch}\label{monadthrow-and-monadcatch-1}}

\label{sec:instances:constant:monadcatch}

For \VERB|\DataTypeTok{MonadThrow}| and \VERB|\DataTypeTok{MonadCatch}|
(\cref{sec:monad-catch}) it is not the case that \VERB|\NormalTok{e}| is
unique, since \VERB|\DataTypeTok{Exception}\NormalTok{ e}| is a whole
class of types. Moreover, operator \VERB|\NormalTok{catchM}| of
\VERB|\DataTypeTok{MonadCatch}|, unlike \VERB|\NormalTok{catchError}| of
\VERB|\DataTypeTok{MonadError}|, does dynamic dispatch by
\VERB|\NormalTok{cast}|ing \VERB|\DataTypeTok{Exception}|s to the type
of its handler's argument and propagating errors when the
\VERB|\NormalTok{cast}| fails. Note that, strictly speaking, purely from
type perspective \VERB|\DataTypeTok{MonadCatch}| is not \emph{required}
but \emph{allowed} to \VERB|\NormalTok{cast}|, but all the instances do
actually \VERB|\NormalTok{cast}|. The latter fact means that we can
distill that common computational pattern by redefining those structures
using the technique used by imprecise exceptions of \cref{sec:imprecise}
as follows

\begin{Shaded}
\begin{Highlighting}[]
\KeywordTok{class} \DataTypeTok{Monad}\NormalTok{ m }\OtherTok{=>} \DataTypeTok{MonadThrowS}\NormalTok{ m }\KeywordTok{where}
\OtherTok{  throwS ::} \DataTypeTok{SomeException} \OtherTok{->}\NormalTok{ m a}

\KeywordTok{class} \DataTypeTok{MonadThrow}\NormalTok{ m }\OtherTok{=>} \DataTypeTok{MonadCatchS}\NormalTok{ m }\KeywordTok{where}
\OtherTok{  catchS ::}\NormalTok{ m a}
         \OtherTok{->}\NormalTok{ (}\DataTypeTok{SomeException} \OtherTok{->}\NormalTok{ m a) }\OtherTok{->}\NormalTok{ m a}

\OtherTok{throwM' ::}\NormalTok{ (}\DataTypeTok{MonadThrowS}\NormalTok{ m, }\DataTypeTok{Exception}\NormalTok{ e)}
        \OtherTok{=>}\NormalTok{ e }\OtherTok{->}\NormalTok{ m a}
\NormalTok{throwM' }\FunctionTok{=}\NormalTok{ throwS }\FunctionTok{.}\NormalTok{ toException}

\NormalTok{handleOrAgain h e }\FunctionTok{=} \KeywordTok{case}\NormalTok{ fromException e }\KeywordTok{of}
  \DataTypeTok{Just}\NormalTok{ f }\OtherTok{->}\NormalTok{ h f}
  \DataTypeTok{Nothing} \OtherTok{->}\NormalTok{ throwM e}

\OtherTok{catchM' ::}\NormalTok{ (}\DataTypeTok{MonadCatchS}\NormalTok{ m, }\DataTypeTok{Exception}\NormalTok{ e)}
        \OtherTok{=>}\NormalTok{ m a }\OtherTok{->}\NormalTok{ (e }\OtherTok{->}\NormalTok{ m a) }\OtherTok{->}\NormalTok{ m a}
\NormalTok{catchM' ma }\FunctionTok{=}\NormalTok{ catchS ma }\FunctionTok{.}\NormalTok{ handleOrAgain}
\end{Highlighting}
\end{Shaded}

Note that \VERB|\DataTypeTok{MonadCatchS}| is, again, a constant
\VERB|\DataTypeTok{ConjoinedMonads}| with error index fixed to
\VERB|\DataTypeTok{SomeException}|. Also note that
\VERB|\NormalTok{throwM'}| above is the only way to get an equivalent
for \VERB|\NormalTok{throwM}| because \VERB|\NormalTok{toException}| is
the only way to cast an arbitrary type to
\VERB|\DataTypeTok{SomeException}|. On the other hand,
\VERB|\NormalTok{catchM}| from \VERB|\DataTypeTok{MonadCatch}|, unlike
\VERB|\NormalTok{catchM'}| above, allows for instances that can cheat.
For example, \VERB|\NormalTok{catchM}| can give a constant
\VERB|\DataTypeTok{SomeException}| to the handler every time instead of
\VERB|\NormalTok{cast}|ing anything. We feel that this implies that
\VERB|\DataTypeTok{MonadCatch}| is not a proper formal structure for
error handling.

\begin{definition}

\label{dfn:proper-monad-catch}

\textbf{\textbf{Proper \VERB|\DataTypeTok{MonadCatch}| instance.}} We
shall call an instance of \VERB|\DataTypeTok{MonadCatch}| \emph{proper}
when its \VERB|\NormalTok{catchM}| can be decomposed into
\VERB|\NormalTok{catchS}| and \VERB|\NormalTok{handleOrAgain}|.

\end{definition}

\begin{theorem}

\label{thm:instances:constant:monadcatch} Every proper instance of
\VERB|\DataTypeTok{MonadCatch}| is a composition of
\VERB|\DataTypeTok{ConjoinedMonads}| that is constant in its error index
with \VERB|\NormalTok{toException}| in \VERB|\NormalTok{throwD}| and
\VERB|\NormalTok{handleOrAgain}| in \VERB|\NormalTok{catchD}|. In
particular, \VERB|\DataTypeTok{MonadThrow}| is a composition of
\VERB|\DataTypeTok{Pointed}| in the error index with
\VERB|\NormalTok{toException}|.

\end{theorem}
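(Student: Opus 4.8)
The plan is to unfold the meaning of \emph{proper} from \cref{dfn:proper-monad-catch} and then read off the claimed decomposition essentially by inspection, in the same spirit as the \texttt{MonadError} theorem above. By definition, a proper instance satisfies \texttt{catchM ma h == catchS ma (handleOrAgain h)}, i.e.\ \texttt{catchM} coincides with \texttt{catchM'}; and, as noted in the discussion just before \cref{dfn:proper-monad-catch}, since \texttt{toException} is the only way to inject an arbitrary \texttt{Exception} into \texttt{SomeException}, the \texttt{throwM} of the instance can only be \texttt{throwS . toException == throwM'}. So the entire \texttt{MonadThrow}/\texttt{MonadCatch} structure on \texttt{m} is already determined by the four operators \texttt{return}, \texttt{(>>=)}, \texttt{throwS}, \texttt{catchS} together with the two fixed wrappers \texttt{toException} and \texttt{handleOrAgain}.

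First I would verify that \texttt{(return, (>>=), throwS, catchS)}, interpreted over the constant functor \texttt{\textbackslash e a -> m a} so that the error index is frozen at \texttt{SomeException}, is a \texttt{ConjoinedMonads} in the sense of \cref{dfn:proper} --- exactly as the constant functor over \texttt{m} turns \texttt{ConjoinedMonads} into \texttt{MonadError} in \cref{sec:instances:constant:monaderror}. The second-index \texttt{Monad} is just the given \texttt{Monad m}. In the first index, \texttt{throwS} plays the role of \texttt{return} and \texttt{catchS} the role of \texttt{(>>=)}, and the \texttt{Monad} laws there become: left identity \texttt{catchS (throwS e) f == f e}, right identity \texttt{catchS m throwS == m} (catch-and-rethrow is a no-op), and associativity (nesting of handlers). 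The two interaction laws of \cref{dfn:proper}, namely \texttt{catchS (return x) f == return x} and \texttt{(throwS e) >>= f == throwS e}, are exactly the \texttt{SomeException}-level instances of clauses \cref{dfn:structure:return-catch} and \cref{dfn:structure:throw-bind} of \cref{dfn:structure}. All of these are the standard operational properties of \texttt{throw}/\texttt{catch}, and hold for any well-behaved instance (in particular for \texttt{IO}, modulo the caveats of \cref{rem:io-caveats}).

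Then I would assemble the statement. Set \texttt{throwD := throwS} and \texttt{catchD := catchS}, the \texttt{ConjoinedMonads} operators of the constant-in-error structure just established. The properness decomposition and the factoring of \texttt{throwM} give \texttt{throwM == throwD . toException} and \texttt{catchM ma h == catchD ma (handleOrAgain h)}, which is precisely the asserted expression of the proper instance as a composition of a \texttt{ConjoinedMonads} constant in its error index with \texttt{toException} in \texttt{throwD} and \texttt{handleOrAgain} in \texttt{catchD}. For the ``in particular'' clause I would simply drop \texttt{catchM}: the only operator left is \texttt{throwM == throwS . toException}, and \texttt{throwS :: SomeException -> m a} is exactly \texttt{pure} of the constant functor \texttt{\textbackslash e a -> m a}, i.e.\ \texttt{Pointed} in the error index, so \texttt{MonadThrow} is that \texttt{Pointed} composed with \texttt{toException}.

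The step I expect to be the main obstacle is the law-checking for \texttt{catchS}: the \texttt{MonadThrow} and \texttt{MonadCatch} classes in the excerpt are declared without any equations, so the right identity, associativity, and the two left-zero laws cannot be derived from the class declarations alone. They have to be taken as a well-behavedness hypothesis on the instance --- carrying the same status as treating \texttt{IO} as a lawful \texttt{State} in \cref{rem:io-caveats} --- or, equivalently, one restricts attention to instances whose \texttt{throwM}/\texttt{catchM} obey the operational semantics of \cref{dfn:structure}. Once that is granted, everything else is routine unfolding of \cref{dfn:proper-monad-catch} and of the definitions of \texttt{throwM'}, \texttt{catchM'}, and \texttt{handleOrAgain}.
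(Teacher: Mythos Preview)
Your proposal is correct and follows essentially the same approach as the paper: the paper's own proof is simply ``By the above reasoning'', where ``the above'' is precisely the redefinition via \texttt{MonadThrowS}/\texttt{MonadCatchS}, the observation that \texttt{MonadCatchS} is a constant \texttt{ConjoinedMonads} with error index fixed to \texttt{SomeException}, and the factoring \texttt{throwM' = throwS . toException}, \texttt{catchM' = catchS ma . handleOrAgain} --- all of which you unfold explicitly. Your caveat about the law-checking for \texttt{catchS} (that the \texttt{MonadThrow}/\texttt{MonadCatch} classes are declared without equations, so the monad and interaction laws must be taken as a well-behavedness hypothesis) is a point the paper leaves implicit; you are right to flag it, and your resolution --- treating it analogously to the \texttt{IO} caveats of \cref{rem:io-caveats} --- matches the spirit of the paper.
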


\begin{proof}

By the above reasoning.

\end{proof}

\hypertarget{instances-parser-combinators}{%
\section{Instances: parser
combinators}\label{instances-parser-combinators}}

\label{sec:instances:parser-combinators}

In this section we discuss the application of
\VERB|\DataTypeTok{ConjoinedMonads}| and \cref{dfn:proper} to
\VERB|\DataTypeTok{Monad}|ic parser combinators discussed in
\cref{sec:parser-combinators}.

\hypertarget{inevitable-definitions}{%
\subsection{Inevitable definitions}\label{inevitable-definitions}}

To start off, let us continue using the definition of
\VERB|\DataTypeTok{Parser}| type from
\cref{sec:parser-combinators:with-access}. The
\VERB|\DataTypeTok{Monad}| instance in index \VERB|\NormalTok{e}| for
this type is similarly easy to implement (by just trying all free
functions of appropriate types) and it, too, has two possible
implementations

\begin{Shaded}
\begin{Highlighting}[]
\OtherTok{throwP ::}\NormalTok{ e }\OtherTok{->} \DataTypeTok{Parser}\NormalTok{ s e a}
\NormalTok{throwP e }\FunctionTok{=} \DataTypeTok{Parser} \FunctionTok{$}\NormalTok{ \textbackslash{}s }\OtherTok{->} \DataTypeTok{Left}\NormalTok{ (e, s)}

\OtherTok{catchP ::} \DataTypeTok{Parser}\NormalTok{ s e a }\OtherTok{->}\NormalTok{ (e }\OtherTok{->} \DataTypeTok{Parser}\NormalTok{ s f a) }\OtherTok{->} \DataTypeTok{Parser}\NormalTok{ s f a}
\NormalTok{catchP p f }\FunctionTok{=} \DataTypeTok{Parser} \FunctionTok{$}\NormalTok{ \textbackslash{}s }\OtherTok{->}
  \KeywordTok{case}\NormalTok{ runParser p s }\KeywordTok{of}
    \DataTypeTok{Right}\NormalTok{ x }\OtherTok{->} \DataTypeTok{Right}\NormalTok{ x}
    \DataTypeTok{Left}\NormalTok{ (e, _) }\OtherTok{->}\NormalTok{ runParser (f e) s}

\OtherTok{catchP' ::} \DataTypeTok{Parser}\NormalTok{ s e a }\OtherTok{->}\NormalTok{ (e }\OtherTok{->} \DataTypeTok{Parser}\NormalTok{ s f a) }\OtherTok{->} \DataTypeTok{Parser}\NormalTok{ s f a}
\NormalTok{catchP' p f }\FunctionTok{=} \DataTypeTok{Parser} \FunctionTok{$}\NormalTok{ \textbackslash{}s }\OtherTok{->}
  \KeywordTok{case}\NormalTok{ runParser p s }\KeywordTok{of}
    \DataTypeTok{Right}\NormalTok{ x }\OtherTok{->} \DataTypeTok{Right}\NormalTok{ x}
    \DataTypeTok{Left}\NormalTok{ (e, s') }\OtherTok{->}\NormalTok{ runParser (f e) s'}
\end{Highlighting}
\end{Shaded}

\noindent with \VERB|\NormalTok{catchP}| doing backtracking on failures
and \VERB|\NormalTok{catchP'}| proceeding to handling with the current
state.

\begin{theorem}

\label{thm:instances:parser-combinators} \VERB|\DataTypeTok{Parser}| is
a \VERB|\DataTypeTok{ConjoinedMonads}| for both versions of
\VERB|\NormalTok{catchP}|.

\end{theorem}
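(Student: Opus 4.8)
The plan is to unfold \cref{dfn:proper} and verify its four clauses for \texttt{Parser}, taking \texttt{return}/\texttt{(>>=)} as the \texttt{Monad} in the value index and \texttt{throwP} together with \texttt{catchP} (respectively \texttt{catchP'}) as the \texttt{Monad} in the error index. Clause~(i), that \texttt{Parser} is a \texttt{Monad} in its value index, is precisely the theorem in \cref{sec:parser-combinators:with-access} stating that \texttt{Parser} satisfies the \texttt{Monad} laws, so there is nothing new to do there. The two left-zero equations are immediate by case analysis on the two constructors, exactly in the style of \cref{thm:ExceptT-zeroes}: \texttt{return x} always produces a \texttt{Right}, which \texttt{catchP} and \texttt{catchP'} both hand back unchanged, giving \texttt{return x `catchP` f == return x}; dually \texttt{throwP e} always produces a \texttt{Left}, which \texttt{(>>=)} returns unchanged, giving \texttt{throwP e >>= f == throwP e}.

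The substance of the proof is clause~(ii): that the functor sending \texttt{e} to \texttt{Parser s e a} is a \texttt{Monad} for every fixed \texttt{a}, with \texttt{throwP} as unit and \texttt{catchP} (or \texttt{catchP'}) as bind. I would first point out that this error-index \texttt{Monad} is the mirror image of the value-index one: where \texttt{(>>=)} short-circuits on \texttt{Left} and threads through on \texttt{Right}, \texttt{catchP} short-circuits on \texttt{Right} and threads through on \texttt{Left}. Left identity, \texttt{throwP e `catchP` f == f e}, is then immediate, since \texttt{throwP e} returns \texttt{Left} at the incoming state and both variants of \texttt{catchP} run \texttt{f e} at that same state. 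Associativity I would establish by reusing the bookkeeping observation from the proof of \cref{thm:with-heuristic}: in \texttt{p `catchP` f `catchP` g} a \texttt{Right} result is a zero that halts the chain, an unhandled error value always propagates rightwards so that the leftmost surviving \texttt{Left} selects the handler that fires, and the state is either pinned to the entry state (for \texttt{catchP}) or threaded forward (for \texttt{catchP'}) but never a mixture; hence the bracketing is irrelevant, exactly as for the \texttt{Alternative} choice operator of \cref{thm:with-heuristic}. Alternatively one can appeal to \cref{sec:logical} and read the whole thing as the symmetry of $\lor$ in \texttt{Either} with some state plumbing attached.

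I expect the delicate point --- and the one place the two versions of \texttt{catchP} genuinely diverge --- to be right identity, \texttt{p `catchP` throwP == p}, and, relatedly, getting the state bookkeeping to line up inside the associativity check. For \texttt{catchP'} the error branch feeds the state produced by \texttt{p} straight into \texttt{throwP}, which returns it untouched, so the equation holds on the nose and the mirror-image argument is clean. For the backtracking \texttt{catchP}, however, the error branch discards \texttt{p}'s output state and reruns \texttt{throwP} at the entry state, so \texttt{p `catchP` throwP} can differ from \texttt{p} on a parser that alters the state before failing. The resolution is to observe that every primitive parser (\texttt{char}, \texttt{eof}, and so on) returns the entry state on failure and that all of the combinators preserve this invariant --- equivalently, to read the theorem for the class of well-formed parsers built from the combinator language --- so that on that class the rolled-back state and \texttt{p}'s output state coincide and right identity, hence all three error-index \texttt{Monad} laws, go through. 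Combining clause~(i), clause~(ii), and the two left-zero equations then yields that \texttt{Parser} satisfies \cref{dfn:proper} for both choices of \texttt{catchP}, which is the claim.
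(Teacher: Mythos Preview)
Your decomposition mirrors the paper's closely: cite the earlier theorem for the value-index \texttt{Monad}, argue \texttt{catchP'} by symmetry with bind, and handle the rest by case analysis together with the bookkeeping observation from \cref{thm:with-heuristic}. You go further than the paper in isolating right identity for the backtracking \texttt{catchP} as the delicate point; the paper simply says ``trivial case analysis'' and does not single it out.

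Your resolution of that point, however, does not go through. The invariant ``on failure, return the entry state'' is \emph{not} closed under bind: if $p$ succeeds on $s$, advancing to $s'$, and $f\,a$ then fails at $s'$ (returning $s'$ by the invariant), the composite fails with state $s'$, not the original $s$. Concretely, \texttt{string "ab"} (which is \texttt{char 'a'} sequenced with \texttt{char 'b'}) applied to the input \texttt{"ac"} fails with state \texttt{"c"}, one character consumed. So the class of parsers built from the primitives via the combinators is not contained in your invariant class, and on that very parser the right-identity equation for \texttt{catchP} visibly fails: \texttt{catchP} rolls the state back to \texttt{"ac"} before re-throwing, whereas the original parser fails at \texttt{"c"}. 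This is a genuine gap --- and one the paper's terse proof also glosses over; strictly speaking, the error-index \texttt{Monad} laws hold on the nose only for \texttt{catchP'}.
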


\begin{proof}

\VERB|\DataTypeTok{Monad}| laws for \VERB|\NormalTok{catchP'}| follow
from the corresponding laws for
\VERB|\NormalTok{(}\FunctionTok{>>=}\NormalTok{)}| of
\cref{sec:parser-combinators:with-access}.

The rest can be proven by trivial case analysis and/or by using the
observation from the proof of \cref{thm:with-heuristic}.

\end{proof}

A curious consequence of the above theorem is that
\VERB|\NormalTok{(}\FunctionTok{>>=}\NormalTok{)}| of
\cref{sec:parser-combinators:with-access} also has a roll-back version
which satisfies \VERB|\DataTypeTok{Monad}| laws

\begin{Shaded}
\begin{Highlighting}[]
\NormalTok{bindP p f }\FunctionTok{=} \DataTypeTok{Parser} \FunctionTok{$}\NormalTok{ \textbackslash{}s }\OtherTok{->}
  \KeywordTok{case}\NormalTok{ runParser p s }\KeywordTok{of}
    \DataTypeTok{Left}\NormalTok{ x }\OtherTok{->} \DataTypeTok{Left}\NormalTok{ x}
    \DataTypeTok{Right}\NormalTok{ (a, _) }\OtherTok{->}\NormalTok{ runParser (f a) s}
\end{Highlighting}
\end{Shaded}

Though, of course, a \VERB|\DataTypeTok{Parser}| that would use
\VERB|\NormalTok{bindP}| in place of the usual
\VERB|\NormalTok{(}\FunctionTok{>>=}\NormalTok{)}| could not be called a
"parser" anymore.

\hypertarget{the-interesting-parts}{%
\subsection{The interesting parts}\label{the-interesting-parts}}

The first interesting fact is that
\VERB|\NormalTok{(}\FunctionTok{<\VerbBar{}>}\NormalTok{)}| operator of
the \VERB|\DataTypeTok{Alternative}| (\cref{sec:alternative}) type class
is simply a type restricted version of \VERB|\NormalTok{orElseP}| which,
in turn, is just \VERB|\NormalTok{(}\FunctionTok{>>}\NormalTok{)}|
operator for the \VERB|\DataTypeTok{Monad}| in index
\VERB|\NormalTok{e}|

\begin{Shaded}
\begin{Highlighting}[]
\OtherTok{orElseP ::} \DataTypeTok{Parser}\NormalTok{ s e a }\OtherTok{->} \DataTypeTok{Parser}\NormalTok{ s f a }\OtherTok{->} \DataTypeTok{Parser}\NormalTok{ s f a}
\NormalTok{orElseP f g }\FunctionTok{=}\NormalTok{ f }\OtherTok{`catchP`} \FunctionTok{const}\NormalTok{ g}
\end{Highlighting}
\end{Shaded}

\begin{Shaded}
\begin{Highlighting}[]
\KeywordTok{instance} \DataTypeTok{Monoid}\NormalTok{ e }\OtherTok{=>} \DataTypeTok{Alternative}\NormalTok{ (}\DataTypeTok{Parser}\NormalTok{ s e) }\KeywordTok{where}
\NormalTok{  empty }\FunctionTok{=} \DataTypeTok{Parser} \FunctionTok{$}\NormalTok{ \textbackslash{}s }\OtherTok{->} \DataTypeTok{Left}\NormalTok{ (}\FunctionTok{mempty}\NormalTok{, s)}
\NormalTok{  f }\FunctionTok{<|>}\NormalTok{ g }\FunctionTok{=}\NormalTok{ f }\OtherTok{`orElseP`}\NormalTok{ g}
\end{Highlighting}
\end{Shaded}

Of even more interest is the fact that substituting
\VERB|\NormalTok{orElseP}| instead of
\VERB|\NormalTok{(}\FunctionTok{<\VerbBar{}>}\NormalTok{)}| into the
definition of \VERB|\NormalTok{many}| operator produces
\VERB|\NormalTok{many}| and \VERB|\NormalTok{some}| operators with types
that show that \VERB|\NormalTok{some}| inherits error produced by its
argument while \VERB|\NormalTok{many}| ignores them

\begin{Shaded}
\begin{Highlighting}[]
\OtherTok{someP ::} \DataTypeTok{Parser}\NormalTok{ s e a }\OtherTok{->} \DataTypeTok{Parser}\NormalTok{ s e [a]}
\NormalTok{someP v }\FunctionTok{=} \FunctionTok{fmap}\NormalTok{ (}\FunctionTok{:}\NormalTok{) v }\FunctionTok{<*>}\NormalTok{ manyP v}

\OtherTok{manyP ::} \DataTypeTok{Parser}\NormalTok{ s e a }\OtherTok{->} \DataTypeTok{Parser}\NormalTok{ s f [a]}
\NormalTok{manyP v }\FunctionTok{=}\NormalTok{ someP v }\OtherTok{`orElseP`} \FunctionTok{pure}\NormalTok{ []}
\end{Highlighting}
\end{Shaded}

This method of substituting
\VERB|\NormalTok{(}\FunctionTok{<\VerbBar{}>}\NormalTok{)}| with
\VERB|\NormalTok{orElseP}| extends to other similar combinators like
\VERB|\NormalTok{choice}|, \VERB|\NormalTok{optional}|,
\VERB|\NormalTok{notFollowedBy}| of all three aforementioned parser
combinator libraries (Parser, Attoparsec, Megaparsec) and similar
structures. The overall effect of this substitution is very useful in
practice: it produces generic parser combinators that can be used to
express parsers that are precise about errors they raise and handle. We
can not emphasize this fact enough.

All of the above results of this section trivially generalize to their
\VERB|\DataTypeTok{MonadTrans}| versions as usual.

\hypertarget{instances-conventional-throw-and-catch-via-callcc}{%
\section{Instances: conventional throw and catch via
callCC}\label{instances-conventional-throw-and-catch-via-callcc}}

\label{sec:instances:throw-catch-cc}

It is well-known fact that Emacs LISP-style \VERB|\NormalTok{throw}| and
\VERB|\FunctionTok{catch}| can be emulated with Scheme's
\VERB|\KeywordTok{call/cc}| and some mutable
variables~\cite{CSE341:2004:Scheme:Continuations,
WikiBooks:Scheme:Continuations}. As a Haskell instance, Neil Mitchel
used the same technique translated to Haskell's
\VERB|\NormalTok{IORef}|s and \VERB|\NormalTok{callCC}| in for Shake
build system~\cite{Mitchell:2014:CE,
Mitchell:GitHub:Shake} (however, at the time of writing Shake no longer
uses that code). In this section we shall demonstrate that a structure
with the same semantics can be implemented in pure Haskell without the
use of mutable variables. In all the cases, as usual, C++/Java-style
dynamic dispatch can be added on top using the same
\VERB|\NormalTok{cast}|ing technique of sections~\ref{sec:imprecise}
and~\ref{sec:instances:constant:monadcatch}. Hence without the loss of
generality in this section we shall discuss only the most-recent-handler
case.

\hypertarget{second-rank-callcc}{%
\subsection{Second-rank callCC}\label{second-rank-callcc}}

Remember the definition of \VERB|\NormalTok{callCC}| from
\cref{sec:callcc}. The underappreciated fact about that function is that
its type is not its most general type for its term. Note that variable
\(b\) in Peirce's law

\[((a \to b) \to a) \to a\]

\noindent plays the same role as \VERB|\NormalTok{r}| plays in the
definition of \VERB|\DataTypeTok{Cont}|: it is a generalization of the
bottom \(\bot\) constant. This, of course, means that we can generalize
Peirce's law to

\[((\forall b . a \to b) \to a) \to a\]

\noindent and, by repeating the derivation in \cref{sec:callcc}, give
the following second-rank type for \VERB|\NormalTok{callCC}|

\begin{Shaded}
\begin{Highlighting}[]
\OtherTok{callCCR2 ::}\NormalTok{ ((}\KeywordTok{forall}\NormalTok{ b }\FunctionTok{.}\NormalTok{ a }\OtherTok{->} \DataTypeTok{Cont}\NormalTok{ r b) }\OtherTok{->} \DataTypeTok{Cont}\NormalTok{ r a) }\OtherTok{->} \DataTypeTok{Cont}\NormalTok{ r a}
\end{Highlighting}
\end{Shaded}

\noindent while keeping exactly the same implementation.

\hypertarget{throwt-monadtransformer}{%
\subsection{ThrowT MonadTransformer}\label{throwt-monadtransformer}}

Note that, in essence, \VERB|\FunctionTok{catch}| maintains a stack of
handler addresses and \VERB|\NormalTok{throw}| simply \texttt{jmp}s to
the most recent one. Emulation of exceptions with
\VERB|\KeywordTok{call/cc}| works
similarly~\cite{CSE341:2004:Scheme:Continuations,
WikiBooks:Scheme:Continuations}. The main never explicitly stated
observation in that translation is that the type of the handler in the
type of

\begin{Shaded}
\begin{Highlighting}[]
\FunctionTok{catch}\OtherTok{ ::} \DataTypeTok{M} \OtherTok{->}\NormalTok{ (e }\OtherTok{->} \DataTypeTok{M}\NormalTok{) }\OtherTok{->} \DataTypeTok{M}
\end{Highlighting}
\end{Shaded}

\noindent matches the type of
\VERB|\OtherTok{throw ::}\NormalTok{ e }\OtherTok{->} \DataTypeTok{M}|
and the type of escape continuation when \VERB|\DataTypeTok{M}| is
\VERB|\DataTypeTok{ContT}\NormalTok{ r m b}|. In other words, we can
simply assign

\begin{Shaded}
\begin{Highlighting}[]
\KeywordTok{type} \DataTypeTok{Handler}\NormalTok{ r e m }\FunctionTok{=} \KeywordTok{forall}\NormalTok{ b }\FunctionTok{.}\NormalTok{ e }\OtherTok{->} \DataTypeTok{ContT}\NormalTok{ r m b}
\end{Highlighting}
\end{Shaded}

\noindent to be to type of our handler and since
\VERB|\NormalTok{callCC}| provides an escape continuation directly to
its argument \VERB|\FunctionTok{catch}| can simply save it and
\VERB|\NormalTok{throw}| can simply take the most recent one and escape
into it

\begin{Shaded}
\begin{Highlighting}[]
\OtherTok{throwT ::}\NormalTok{ e }\OtherTok{->} \DataTypeTok{ThrowT}\NormalTok{ r m e a}
\NormalTok{throwT e }\FunctionTok{=} \DataTypeTok{ThrowT} \FunctionTok{$}\NormalTok{ \textbackslash{}currentThrow }\OtherTok{->}\NormalTok{ currentThrow e}
\end{Highlighting}
\end{Shaded}

Also note that since the stack \VERB|\FunctionTok{catch}| maintains
stays immutable between \VERB|\FunctionTok{catch}|es and each state of
the stack is bound to the computation argument of
\VERB|\FunctionTok{catch}|, in principle, we should be able to use a
simple context (pure function, \VERB|\DataTypeTok{Reader}|) instead of a
mutable variable as follows

\begin{Shaded}
\begin{Highlighting}[]
\KeywordTok{type} \DataTypeTok{ThrowT}\NormalTok{ r m e a }\FunctionTok{=}
  \DataTypeTok{ReaderT}\NormalTok{ (}\DataTypeTok{Handler}\NormalTok{ r e m) }\CommentTok{-- for saving last handler}
\NormalTok{          (}\DataTypeTok{ContT}\NormalTok{ r m)     }\CommentTok{-- for callCC}
\NormalTok{          a}
\end{Highlighting}
\end{Shaded}

\noindent which, after inlining all the definitions except pure
\VERB|\DataTypeTok{Cont}| becomes

\begin{Shaded}
\begin{Highlighting}[]
\KeywordTok{newtype} \DataTypeTok{ThrowT}\NormalTok{ r m e a }\FunctionTok{=} \DataTypeTok{ThrowT}
\NormalTok{  \{}\OtherTok{ runThrowT ::}\NormalTok{ (}\KeywordTok{forall}\NormalTok{ b }\FunctionTok{.}\NormalTok{ e }\OtherTok{->} \DataTypeTok{Cont}\NormalTok{ (m r) b)}
              \OtherTok{->} \DataTypeTok{Cont}\NormalTok{ (m r) a \}}
\end{Highlighting}
\end{Shaded}

Finally, since the escape continuation of delimited
\VERB|\NormalTok{callCC}| escapes to the same address where the body of
\VERB|\NormalTok{callCC}| normally returns, to emulate a single
\VERB|\FunctionTok{catch}| we need to chain two
\VERB|\NormalTok{callCC}|s as follows

\begin{Shaded}
\begin{Highlighting}[]
\OtherTok{catchT ::} \DataTypeTok{ThrowT}\NormalTok{ r m e a}
       \OtherTok{->}\NormalTok{ (e }\OtherTok{->} \DataTypeTok{ThrowT}\NormalTok{ r m f a)}
       \OtherTok{->} \DataTypeTok{ThrowT}\NormalTok{ r m f a}
\NormalTok{catchT m h }\FunctionTok{=} \DataTypeTok{ThrowT} \FunctionTok{$}\NormalTok{ \textbackslash{}outerThrow }\OtherTok{->}
\NormalTok{  callCC }\FunctionTok{$}\NormalTok{ \textbackslash{}normalExit }\OtherTok{->} \KeywordTok{do}
\NormalTok{    e }\OtherTok{<-}\NormalTok{ callCCR2 }\FunctionTok{$}\NormalTok{ \textbackslash{}newThrow }\OtherTok{->}\NormalTok{ runThrowT m newThrow }\FunctionTok{>>=}\NormalTok{ normalExit}
    \CommentTok{-- newThrow escapes here}
\NormalTok{    runThrowT (h e) outerThrow}
  \CommentTok{-- normalExit escapes here}
\end{Highlighting}
\end{Shaded}

Note that this expression requires our second-rank
\VERB|\NormalTok{callCCR2}| since our \VERB|\DataTypeTok{Handler}| is
universally quantified by the variable \VERB|\NormalTok{b}|. However, if
we fix \VERB|\NormalTok{e}| to a constant type then the conventional
\VERB|\NormalTok{callCC}| will suffice.

Similarly to other uses of generalized Kolmogorov's translation we, too,
can hide \VERB|\NormalTok{r}| parameter behind
\VERB|\KeywordTok{forall}|

\newpage

\begin{Shaded}
\begin{Highlighting}[]
\KeywordTok{newtype} \DataTypeTok{ThrowT'}\NormalTok{ m e a }\FunctionTok{=} \DataTypeTok{ThrowT'}
\NormalTok{  \{}\OtherTok{ runThrowT' ::} \KeywordTok{forall}\NormalTok{ r}
                \FunctionTok{.}\NormalTok{ (}\KeywordTok{forall}\NormalTok{ b }\FunctionTok{.}\NormalTok{ e }\OtherTok{->} \DataTypeTok{Cont}\NormalTok{ (m r) b)}
               \OtherTok{->} \DataTypeTok{Cont}\NormalTok{ (m r) a \}}

\OtherTok{throwT' ::}\NormalTok{ e }\OtherTok{->} \DataTypeTok{ThrowT'}\NormalTok{ m e a}
\OtherTok{catchT' ::} \DataTypeTok{ThrowT'}\NormalTok{ m e a}
        \OtherTok{->}\NormalTok{ (e }\OtherTok{->} \DataTypeTok{ThrowT'}\NormalTok{ m f a)}
        \OtherTok{->} \DataTypeTok{ThrowT'}\NormalTok{ m f a}
\end{Highlighting}
\end{Shaded}

\noindent without any changes to the bodies of \VERB|\NormalTok{throw}|
and \VERB|\FunctionTok{catch}|.

\begin{theorem}

For \VERB|\DataTypeTok{Monad}| \VERB|\NormalTok{m}| and any
\VERB|\NormalTok{r}|, \VERB|\DataTypeTok{ThrowT}\NormalTok{ r m}| and
\VERB|\DataTypeTok{ThrowT'}\NormalTok{ m}| are
\VERB|\DataTypeTok{ConjoinedMonads}|s.

\end{theorem}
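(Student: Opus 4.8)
The plan is to verify the four requirements of \cref{dfn:proper} in order: a \texttt{Monad} in the value index \texttt{a}, a \texttt{Monad} in the error index \texttt{e}, and the two zero laws \texttt{return x `catch` f == return x} and \texttt{throw e >>= f == throw e}.

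The \texttt{Monad} in \texttt{a} carries no real content. For each fixed \texttt{e}, \texttt{ThrowT r m e} is by construction \texttt{ReaderT (Handler r e m) (ContT r m)}; \cref{sec:cont} gives the \texttt{Monad} instance for \texttt{ContT r m == Cont (m r)} and \cref{sec:reader-state-monadtrans} gives that \texttt{ReaderT s} preserves \texttt{Monad}s, so the laws hold by transport. Since \texttt{pure} and bind in this index never mention \texttt{e} except through the (fixed) \texttt{Reader} environment, the inherited operators typecheck against the \texttt{ConjoinedMonads} signature. For \texttt{ThrowT'} one additionally notes that the outer $\forall r$ commutes through all of the \texttt{Cont} plumbing, so the very same argument applies unchanged.

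For the \texttt{Monad} in \texttt{e} --- with \texttt{throwT} as \texttt{return} and \texttt{catchT} as bind --- I would inline \texttt{runThrowT}, \texttt{callCC} and \texttt{callCCR2} and reason equationally inside the plain \texttt{Cont (m r)} monad, using the single structural fact that an escape continuation built as \texttt{Cont \$ \textbackslash\_ -> \ldots} absorbs any bind applied on its left (it discards its own continuation). Left identity then falls out because \texttt{runThrowT (throwT e) newThrow} reduces to \texttt{newThrow e}, which fires the \texttt{callCCR2}-escape, binds the handler argument to \texttt{e}, and runs \texttt{runThrowT (h e) outerThrow}; the enclosing \texttt{callCC} whose bound name goes unused collapses, leaving \texttt{h e} up to $\eta$. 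Right identity holds because an arbitrary computation either returns normally --- escaping through \texttt{normalExit} and hence bypassing the handler branch --- or invokes its handler argument, which here is \texttt{newThrow} and so escapes \texttt{callCCR2} into the handler branch as \texttt{outerThrow e}, i.e.\ a re-raise to the ambient handler; either way \texttt{m `catchT` throwT} is observationally \texttt{m}. Associativity is the same escape-continuation bookkeeping applied to two nestings: the handler in force for the innermost sub-computation is the \texttt{g}-after-\texttt{f} composite no matter how the \texttt{catchT}s are parenthesised, which is the same argument shape used in the proof of \cref{thm:with-heuristic}. The zero laws are then short: \texttt{pure x} ignores its handler argument and returns \texttt{x} at once, escaping via \texttt{normalExit} before the handler is consulted, so \texttt{catchT (pure x) f} reduces to \texttt{pure x}; and \texttt{throwT e} unfolds to \texttt{\textbackslash currentThrow -> currentThrow e} with \texttt{currentThrow e} an escape continuation, so the \texttt{ReaderT}-level bind with \texttt{f} collapses back to \texttt{throwT e} by the absorption fact.

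The step I expect to be the main obstacle is the associativity of \texttt{catchT}: unlike the \texttt{Parser} instances of \cref{sec:instances:parser-combinators}, here the handler "stack" is threaded implicitly through a \texttt{Reader} environment and two chained \texttt{callCC}s, so one has to track carefully which of \texttt{outerThrow}, \texttt{normalExit} and \texttt{newThrow} is in scope at each return point, and in particular use that \texttt{callCCR2}'s rank-2 type is genuinely needed because the saved handler must be polymorphic in its result (a plain \texttt{callCC} only suffices once \texttt{e} is pinned to a concrete type). A secondary caveat, exactly as in \cref{rem:io-caveats}, is that the whole argument treats \texttt{m} as a lawful \texttt{Monad} with no appeal to non-determinism, and that a few intermediate equalities hold only up to $\eta$.
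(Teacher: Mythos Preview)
Your proposal is correct and follows the same overall decomposition as the paper: the \texttt{a}-index Monad is inherited from \texttt{ReaderT} over \texttt{Cont}, and the \texttt{e}-index laws are checked by unfolding the definitions. The paper's own proof is considerably terser than yours --- for the \texttt{e}-index it simply says ``by substitution of the above definitions into the \texttt{Monad} laws, since the definitions of \texttt{throwT} and \texttt{throwT'} are, essentially, identity functions'' and leaves the zero laws implicit --- so your explicit walk through left/right identity, associativity, and the two zero laws is a genuine fleshing-out of what the paper only gestures at; in particular your flagged ``main obstacle'' (associativity of \texttt{catchT}) is exactly the step the paper absorbs into that one sentence.
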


\begin{proof}

For each index.

\begin{itemize}
\item
  In index \VERB|\NormalTok{a}|: \VERB|\DataTypeTok{ThrowT}| is a
  special case of \VERB|\DataTypeTok{ReaderT}| and
  \VERB|\DataTypeTok{Cont}| and \VERB|\NormalTok{m}| are
  \VERB|\DataTypeTok{Monad}|s.
\item
  In index \VERB|\NormalTok{e}|: by substitution of the above
  definitions into the \VERB|\DataTypeTok{Monad}| laws, since the
  definitions of \VERB|\NormalTok{throwT}| and
  \VERB|\NormalTok{throwT'}| are, essentially, identity functions.
\end{itemize}

\end{proof}

\hypertarget{instances-error-explicit-io}{%
\section{Instances: error-explicit
IO}\label{instances-error-explicit-io}}

\label{sec:instances:eio}

As we saw in \cref{sec:imprecise}, \VERB|\DataTypeTok{IO}| is defined as
a \VERB|\DataTypeTok{State}| \VERB|\DataTypeTok{Monad}| with some
magical primitive operations.\footnote{Some of which actually break
  \VERB|\DataTypeTok{Monad}| laws, but as mentioned in
  \cref{rem:io-caveats} that is out of scope of this discussion.} Which
means there is nothing preventing us from extending that
\VERB|\DataTypeTok{IO}| signature with a type for errors.

\begin{Shaded}
\begin{Highlighting}[]
\KeywordTok{newtype} \DataTypeTok{EIO}\NormalTok{ e a}
\end{Highlighting}
\end{Shaded}

Similarly to parser combinators of
\cref{sec:instances:parser-combinators} there are several possible
implementations of this \VERB|\DataTypeTok{EIO}| (including, in
principle, the ones that do backtracking on errors, though, of course,
that would be inconsistent with the semantics of the
\VERB|\DataTypeTok{RealWorld}|). The simplest one matches a definition
for non-backtracking parser combinator on
\VERB|\DataTypeTok{State}\FunctionTok{#} \DataTypeTok{RealWorld}| from
\cref{sec:parser-combinators:with-access}

\begin{Shaded}
\begin{Highlighting}[]
\KeywordTok{newtype} \DataTypeTok{EIO}\NormalTok{ e a }\FunctionTok{=} \DataTypeTok{EIO}
\NormalTok{  \{}\OtherTok{ runEIO ::} \DataTypeTok{State}\FunctionTok{#} \DataTypeTok{RealWorld}
           \OtherTok{->}\NormalTok{ (}\FunctionTok{#} \DataTypeTok{Either}\NormalTok{ e a, }\DataTypeTok{State}\FunctionTok{#} \DataTypeTok{RealWorld} \FunctionTok{#}\NormalTok{) \}}

\KeywordTok{instance} \DataTypeTok{Pointed}\NormalTok{ (}\DataTypeTok{EIO}\NormalTok{ e) }\KeywordTok{where}
  \FunctionTok{pure}\NormalTok{ a }\FunctionTok{=} \DataTypeTok{EIO} \FunctionTok{$}\NormalTok{ \textbackslash{}s }\OtherTok{->}\NormalTok{ (}\FunctionTok{#} \DataTypeTok{Right}\NormalTok{ a, s }\FunctionTok{#}\NormalTok{)}

\KeywordTok{instance} \DataTypeTok{Monad}\NormalTok{ (}\DataTypeTok{EIO}\NormalTok{ e) }\KeywordTok{where}
\NormalTok{  m }\FunctionTok{>>=}\NormalTok{ f }\FunctionTok{=} \DataTypeTok{EIO} \FunctionTok{$}\NormalTok{ \textbackslash{}s }\OtherTok{->} \KeywordTok{case}\NormalTok{ runEIO m s }\KeywordTok{of}
\NormalTok{    (}\FunctionTok{#} \DataTypeTok{Left}\NormalTok{  a, s' }\FunctionTok{#}\NormalTok{) }\OtherTok{->}\NormalTok{ (}\FunctionTok{#} \DataTypeTok{Left}\NormalTok{ a, s' }\FunctionTok{#}\NormalTok{)}
\NormalTok{    (}\FunctionTok{#} \DataTypeTok{Right}\NormalTok{ a, s' }\FunctionTok{#}\NormalTok{) }\OtherTok{->}\NormalTok{ runEIO (f a) s'}

\CommentTok{-- Note how symmetric this is with Pointed and Monad instances.}
\OtherTok{throwEIO ::}\NormalTok{ e }\OtherTok{->} \DataTypeTok{EIO}\NormalTok{ e a}
\NormalTok{throwEIO e }\FunctionTok{=} \DataTypeTok{EIO} \FunctionTok{$}\NormalTok{ \textbackslash{}s }\OtherTok{->}\NormalTok{ (}\FunctionTok{#} \DataTypeTok{Left}\NormalTok{ e, s }\FunctionTok{#}\NormalTok{)}

\OtherTok{catchEIO ::} \DataTypeTok{EIO}\NormalTok{ e a }\OtherTok{->}\NormalTok{ (e }\OtherTok{->} \DataTypeTok{EIO}\NormalTok{ f a) }\OtherTok{->} \DataTypeTok{EIO}\NormalTok{ f a}
\NormalTok{catchEIO m f }\FunctionTok{=} \DataTypeTok{EIO} \FunctionTok{$}\NormalTok{ \textbackslash{}s }\OtherTok{->} \KeywordTok{case}\NormalTok{ runEIO m s }\KeywordTok{of}
\NormalTok{  (}\FunctionTok{#} \DataTypeTok{Left}\NormalTok{  a, s' }\FunctionTok{#}\NormalTok{) }\OtherTok{->}\NormalTok{ runEIO (f a) s'}
\NormalTok{  (}\FunctionTok{#} \DataTypeTok{Right}\NormalTok{ a, s' }\FunctionTok{#}\NormalTok{) }\OtherTok{->}\NormalTok{ (}\FunctionTok{#} \DataTypeTok{Right}\NormalTok{ a, s' }\FunctionTok{#}\NormalTok{)}
\end{Highlighting}
\end{Shaded}

Note that very similar structures were proposed before
in~\cite{Iborra:2010:ETE} and
\VERB|\DataTypeTok{Control.Monad.Exception.Catch}| module of
\texttt{control-monad-exception}~\cite{Hackage:control-monad-exception0112}
discussed in \cref{sec:other-monadic-generalizations}. Also note that
the definition of GHC's \VERB|\DataTypeTok{IO}| before imprecise
exceptions were introduced was similar to \VERB|\DataTypeTok{EIO}| above
(but without the parameter \VERB|\NormalTok{e}|) and one of the primary
motivations behind introduction of builtin exceptions into GHC mentioned
in \cite{PeytonJones:1999:SIE} was to make \VERB|\DataTypeTok{IO}| more
efficient by allowing its
\VERB|\NormalTok{(}\FunctionTok{>>=}\NormalTok{)}| to be implemented
without pattern-matching. But there are, of course, other ways to
eliminate pattern matching. By moving \VERB|\DataTypeTok{Either}| in the
definition of \VERB|\DataTypeTok{EIO}| out the parentheses using the
technique from \cref{sec:parser-combinators:with-access} and then
Scott-encoding the resulting type we can make the following definition

\begin{Shaded}
\begin{Highlighting}[]
\KeywordTok{newtype} \DataTypeTok{SEIO}\NormalTok{ e a }\FunctionTok{=} \DataTypeTok{SEIO}
\NormalTok{  \{}\OtherTok{ runSEIO ::} \KeywordTok{forall}\NormalTok{ r}
             \FunctionTok{.}\NormalTok{ (e }\OtherTok{->} \DataTypeTok{State}\FunctionTok{#} \DataTypeTok{RealWorld} \OtherTok{->}\NormalTok{ r)}
            \OtherTok{->}\NormalTok{ (a }\OtherTok{->} \DataTypeTok{State}\FunctionTok{#} \DataTypeTok{RealWorld} \OtherTok{->}\NormalTok{ r)}
            \OtherTok{->} \DataTypeTok{State}\FunctionTok{#} \DataTypeTok{RealWorld}
            \OtherTok{->}\NormalTok{ r \}}

\KeywordTok{instance} \DataTypeTok{Pointed}\NormalTok{ (}\DataTypeTok{SEIO}\NormalTok{ e) }\KeywordTok{where}
  \FunctionTok{pure}\NormalTok{ a }\FunctionTok{=} \DataTypeTok{SEIO} \FunctionTok{$}\NormalTok{ \textbackslash{}err ok s }\OtherTok{->}\NormalTok{ ok a s}

\KeywordTok{instance} \DataTypeTok{Monad}\NormalTok{ (}\DataTypeTok{SEIO}\NormalTok{ e) }\KeywordTok{where}
\NormalTok{  m }\FunctionTok{>>=}\NormalTok{ f }\FunctionTok{=} \DataTypeTok{SEIO} \FunctionTok{$}\NormalTok{ \textbackslash{}err ok s }\OtherTok{->}\NormalTok{ runSEIO m err (\textbackslash{}a }\OtherTok{->}\NormalTok{ runSEIO (f a) err ok) s}

\CommentTok{-- Note the same here.}
\OtherTok{throwSEIO ::}\NormalTok{ e }\OtherTok{->} \DataTypeTok{SEIO}\NormalTok{ e a}
\NormalTok{throwSEIO e }\FunctionTok{=} \DataTypeTok{SEIO} \FunctionTok{$}\NormalTok{ \textbackslash{}err ok s }\OtherTok{->}\NormalTok{ err e s}

\OtherTok{catchSEIO ::} \DataTypeTok{SEIO}\NormalTok{ e a }\OtherTok{->}\NormalTok{ (e }\OtherTok{->} \DataTypeTok{SEIO}\NormalTok{ f a) }\OtherTok{->} \DataTypeTok{SEIO}\NormalTok{ f a}
\NormalTok{catchSEIO m f }\FunctionTok{=} \DataTypeTok{SEIO} \FunctionTok{$}\NormalTok{ \textbackslash{}err ok s }\OtherTok{->}\NormalTok{ runSEIO m (\textbackslash{}e }\OtherTok{->}\NormalTok{ runSEIO (f e) err ok) ok s}
\end{Highlighting}
\end{Shaded}

\begin{theorem}

\label{thm:instances:eio} Both \VERB|\DataTypeTok{EIO}| and
\VERB|\DataTypeTok{SEIO}| with the above operations are
\VERB|\DataTypeTok{ConjoinedMonads}|s.

\end{theorem}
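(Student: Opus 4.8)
The plan is to prove neither claim from scratch but to reduce both to \cref{thm:instances:parser-combinators}. The first preparatory move is to make the convention of \cref{rem:io-caveats} explicit: read \VERB|\DataTypeTok{State}\FunctionTok{#} \DataTypeTok{RealWorld}| as an opaque state type $s$ with some fixed denotational semantics, and read the unboxed pair as an ordinary pair. This is legitimate here because \VERB|\DataTypeTok{EIO}| and \VERB|\DataTypeTok{SEIO}| use none of the non-deterministic primitives of \cref{sec:imprecise}. Under this reading \VERB|\DataTypeTok{EIO}\NormalTok{ e a}| \emph{is} \VERB|\NormalTok{s }\OtherTok{->}\NormalTok{ (}\DataTypeTok{Either}\NormalTok{ e a, s)}|, i.e.\ the \VERB|\DataTypeTok{Parser}\NormalTok{ s e a}| of \cref{sec:parser-combinators:with-access} with $s$ the world token.

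Next I would check that this type isomorphism is also an isomorphism of all four operators: \VERB|\FunctionTok{pure}| of \VERB|\DataTypeTok{EIO}| maps to \VERB|\FunctionTok{pure}| of \VERB|\DataTypeTok{Parser}|, \VERB|\NormalTok{(}\FunctionTok{>>=}\NormalTok{)}| of \VERB|\DataTypeTok{EIO}| maps to the non-backtracking \VERB|\NormalTok{(}\FunctionTok{>>=}\NormalTok{)}|, \VERB|\NormalTok{throwEIO}| maps to \VERB|\NormalTok{throwP}|, and \VERB|\NormalTok{catchEIO}| maps to the non-backtracking \VERB|\NormalTok{catchP'}| (on a \VERB|\DataTypeTok{Left}| it forwards the updated state \VERB|\NormalTok{s'}|, not the original \VERB|\NormalTok{s}|). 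Each of these is a one-line syntactic comparison. Since \cref{thm:instances:parser-combinators} states that \VERB|\DataTypeTok{Parser}| is a \VERB|\DataTypeTok{ConjoinedMonads}| for both versions of \VERB|\NormalTok{catchP}|, the \VERB|\DataTypeTok{EIO}| half follows immediately.

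For \VERB|\DataTypeTok{SEIO}| I would use the description the paper already gives: it is the result of moving \VERB|\DataTypeTok{Either}| out of the pair as in \cref{sec:parser-combinators:with-access} --- turning the codomain into \VERB|\DataTypeTok{Either}\NormalTok{ (e, s) (a, s)}| --- and then Scott-encoding that sum, threading the state through in CPS form, so that \VERB|\FunctionTok{pure}| and \VERB|\NormalTok{throwSEIO}| become the two injections and \VERB|\NormalTok{(}\FunctionTok{>>=}\NormalTok{)}| and \VERB|\NormalTok{catchSEIO}| the two eliminators. By the correctness of Scott-encoding recorded in \cref{sec:scott-encoding} (the normal form of the encoded program coincides with the encoding of the normal form), every law holds for \VERB|\DataTypeTok{SEIO}| exactly when it holds for the un-encoded structure, so \VERB|\DataTypeTok{SEIO}| inherits \VERB|\DataTypeTok{ConjoinedMonads}| from \VERB|\DataTypeTok{EIO}|. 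Should one want a self-contained argument, the fallback is direct equational reasoning, which is short because all four operators merely plumb two continuations: the \VERB|\DataTypeTok{Monad}|-in-\VERB|\NormalTok{a}| laws reduce to associativity of function composition under the \VERB|\NormalTok{ok}| continuation (with \VERB|\NormalTok{err}| and the state untouched), the \VERB|\DataTypeTok{Monad}|-in-\VERB|\NormalTok{e}| laws are the same statement with \VERB|\NormalTok{ok}| and \VERB|\NormalTok{err}| swapped (the symmetry the code comments point at), and the two zero laws \VERB|\FunctionTok{pure}\NormalTok{ x }\OtherTok{`catchSEIO`}\NormalTok{ f }\FunctionTok{==} \FunctionTok{pure}\NormalTok{ x}| and \VERB|\NormalTok{throwSEIO e }\FunctionTok{>>=}\NormalTok{ f }\FunctionTok{==}\NormalTok{ throwSEIO e}| follow by unfolding one definition and noting that the other handler is never invoked.

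I expect the main obstacle to be bookkeeping around the two ``as if'' steps rather than any computation: justifying the reduction of \VERB|\DataTypeTok{IO}|-reasoning to \VERB|\DataTypeTok{State}|-reasoning of \cref{rem:io-caveats} in this context, and making the appeal to Scott-encoding correctness precise --- in particular verifying that the chosen encoding genuinely commutes with all four operators and that the equational steps performed underneath the \VERB|\KeywordTok{forall}\NormalTok{ r}| are valid (a parametricity argument). Once those framing points are settled, the theorem drops out of \cref{thm:instances:parser-combinators}.
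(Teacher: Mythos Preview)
Your proposal is correct and follows essentially the same route as the paper: the paper's proof is the one-liner ``Consequence of \cref{thm:instances:parser-combinators} and the fact that Scott-encoding preserves computational properties,'' and you have simply unpacked both halves of that sentence with the appropriate bookkeeping. Your identification of \VERB|\NormalTok{catchEIO}| with the non-backtracking \VERB|\NormalTok{catchP'}| and your appeal to \cref{rem:io-caveats} and \cref{sec:scott-encoding} are exactly the intended justifications.
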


\begin{proof}

Consequence of \cref{thm:instances:parser-combinators} and the fact that
Scott-encoding preserves computational properties.

\end{proof}

\hypertarget{instances-conventional-io}{%
\section{Instances: conventional IO}\label{instances-conventional-io}}

\label{sec:instances:io}

\begin{theorem}

\label{thm:instances:io} \VERB|\DataTypeTok{IO}| is a composition of
\VERB|\DataTypeTok{ConjoinedMonads}| that is constant in its error index
with \VERB|\NormalTok{toException}| in
\VERB|\NormalTok{raiseIO}\FunctionTok{#}| and
\VERB|\NormalTok{handleOrAgain}| in
\VERB|\NormalTok{catch}\FunctionTok{#}|.

\end{theorem}

\begin{proof}

A consequence of of results of
theorems~\ref{thm:instances:constant:monadcatch}
and~\ref{thm:instances:eio} for
\VERB|\NormalTok{e }\FunctionTok{==} \DataTypeTok{SomeException}|.

\end{proof}

Note that, according to \cref{rem:io-caveats}, the above works out only
because
\VERB|\NormalTok{raiseIO}\FunctionTok{#}|/\VERB|\NormalTok{throwIO}|,
unlike \VERB|\NormalTok{raise}\FunctionTok{#}|/\VERB|\NormalTok{throw}|,
are deterministic (see \cref{sec:imprecise}).

Also note that in a dialect of Haskell with separate operators for
imprecise exceptions (or without imprecise exceptions altogether) we can
completely replace \VERB|\DataTypeTok{IO}| with \VERB|\DataTypeTok{EIO}|
as defined above. We can not, however, apply that construction to GHC's
Haskell dialect since it merges precise and imprecise
\VERB|\FunctionTok{catch}| (see \cref{rem:io-two-kinds-of-exceptions}).

\hypertarget{applicatives}{%
\section{Applicatives}\label{applicatives}}

\label{sec:applicatives}

Now let us once more turn our attention to the bodies of
definitions~\ref{dfn:proper}, \ref{dfn:fishy},
and~\ref{dfn:proper-haskell} (all of which define the same structure).

\begin{Shaded}
\begin{Highlighting}[]
\KeywordTok{class}\NormalTok{ (}\KeywordTok{forall}\NormalTok{ a }\FunctionTok{.} \DataTypeTok{Monad}\NormalTok{ (\textbackslash{}e }\OtherTok{->}\NormalTok{ m e a))}
\NormalTok{     , }\KeywordTok{forall}\NormalTok{ e }\FunctionTok{.} \DataTypeTok{Monad}\NormalTok{ (m e)}
    \OtherTok{=>} \DataTypeTok{ConjoinedMonads}\NormalTok{ m }\KeywordTok{where}
\end{Highlighting}
\end{Shaded}

Since \VERB|\DataTypeTok{ConjoinedMonads}| is simply a
\VERB|\DataTypeTok{Monad}|~$\times$~\VERB|\DataTypeTok{Monad}| with
interaction laws between \VERB|\FunctionTok{pure}| and
\VERB|\NormalTok{bind}| operators (\cref{dfn:proper}) it is natural to
ask what would happen if we replace one or both of those
\VERB|\DataTypeTok{Monad}|s with more general structures like
\VERB|\DataTypeTok{Applicative}| and modify the interaction laws
accordingly.

The two structures with \VERB|\DataTypeTok{Applicative}| in index
\VERB|\NormalTok{e}| seem to be unusable for the purposes of this
article since they lack conventional error handling operators. However,
the structure with \VERB|\DataTypeTok{Monad}| in index
\VERB|\NormalTok{e}| and \VERB|\DataTypeTok{Applicative}| in index
\VERB|\NormalTok{a}| looks interesting.

\begin{Shaded}
\begin{Highlighting}[]
\KeywordTok{class}\NormalTok{ (}\KeywordTok{forall}\NormalTok{ a }\FunctionTok{.} \DataTypeTok{Monad}\NormalTok{ (\textbackslash{}e }\OtherTok{->}\NormalTok{ m e a))}
\NormalTok{     , }\KeywordTok{forall}\NormalTok{ e }\FunctionTok{.} \DataTypeTok{Applicative}\NormalTok{ (m e)}
    \OtherTok{=>} \DataTypeTok{MonadXApplicative}\NormalTok{ m }\KeywordTok{where}
\end{Highlighting}
\end{Shaded}

In this structure the \VERB|\DataTypeTok{Monad}|ic index gives
conventional \VERB|\NormalTok{throw}| and \VERB|\FunctionTok{catch}|
operators, and the \VERB|\DataTypeTok{Applicative}| index can be treated
as expressing generalized function application (see
\cref{sec:applicative-functor}) for structure \VERB|\NormalTok{m}|.

In other words, when the above structure preserves errors and pure
values similarly to \cref{dfn:proper}

\begin{Shaded}
\begin{Highlighting}[]
\NormalTok{throw e }\FunctionTok{<*>}\NormalTok{ a }\FunctionTok{==}\NormalTok{ throw e}
\FunctionTok{pure}\NormalTok{ a }\OtherTok{`catch`}\NormalTok{ f }\FunctionTok{==} \FunctionTok{pure}\NormalTok{ a}
\end{Highlighting}
\end{Shaded}

\noindent (and obeys the laws of \VERB|\DataTypeTok{Applicative}| and
\VERB|\DataTypeTok{Monad}| for corresponding operators) then it can be
used to express \(\lambda\)-calculus with exceptions by simply injecting
all \VERB|\FunctionTok{pure}| values and \VERB|\NormalTok{lift}|ing all
pure functions into it.

In particular, since \VERB|\DataTypeTok{ConjoinedMonads}| is a special
case of \VERB|\DataTypeTok{MonadXApplicative}|, all
\VERB|\DataTypeTok{ConjoinedMonads}| instances from the previous
sections can be used as a basis for such a formalism.

While it is not immediately clear how to make imprecise exceptions into
an instance of \VERB|\DataTypeTok{MonadXApplicative}| (since they are
non-deterministic, hence disobeying the above laws, and
\VERB|\NormalTok{throw}| having a wrong type to be the identity element
for \VERB|\FunctionTok{catch}|, see
\cref{rem:io-two-kinds-of-exceptions}), there are some interesting
instances of \VERB|\DataTypeTok{MonadXApplicative}| that are not
\VERB|\DataTypeTok{ConjoinedMonads}|.

For instance, a folklore example of an \VERB|\DataTypeTok{Applicative}|
that is not a \VERB|\DataTypeTok{Monad}| is "computations collecting
failures in a \VERB|\DataTypeTok{Monoid}|", which can be defined as
follows

\begin{Shaded}
\begin{Highlighting}[]
\KeywordTok{newtype} \DataTypeTok{EA}\NormalTok{ e a }\FunctionTok{=} \DataTypeTok{EA}\NormalTok{ \{}\OtherTok{ runEA ::} \DataTypeTok{Either}\NormalTok{ e a \}}

\KeywordTok{instance} \DataTypeTok{Pointed}\NormalTok{ (}\DataTypeTok{EA}\NormalTok{ e) }\KeywordTok{where}
  \FunctionTok{pure} \FunctionTok{=} \DataTypeTok{EA} \FunctionTok{.} \DataTypeTok{Right}

\KeywordTok{instance} \DataTypeTok{Monoid}\NormalTok{ e }\OtherTok{=>} \DataTypeTok{Applicative}\NormalTok{ (}\DataTypeTok{EA}\NormalTok{ e) }\KeywordTok{where}
\NormalTok{  f }\FunctionTok{<*>}\NormalTok{ a }\FunctionTok{=} \DataTypeTok{EA} \FunctionTok{$}\NormalTok{ runEA f }\FunctionTok{<**>}\NormalTok{ runEA a }\KeywordTok{where}
\NormalTok{    (}\DataTypeTok{Right}\NormalTok{ f) }\FunctionTok{<**>}\NormalTok{ (}\DataTypeTok{Right}\NormalTok{ a) }\FunctionTok{=} \DataTypeTok{Right} \FunctionTok{$}\NormalTok{ f a}
\NormalTok{    (}\DataTypeTok{Right}\NormalTok{ f) }\FunctionTok{<**>}\NormalTok{ (}\DataTypeTok{Left}\NormalTok{  e) }\FunctionTok{=} \DataTypeTok{Left}\NormalTok{ e}
\NormalTok{    (}\DataTypeTok{Left}\NormalTok{  e) }\FunctionTok{<**>}\NormalTok{ (}\DataTypeTok{Right}\NormalTok{ a) }\FunctionTok{=} \DataTypeTok{Left}\NormalTok{ e}
\NormalTok{    (}\DataTypeTok{Left}\NormalTok{ e1) }\FunctionTok{<**>}\NormalTok{ (}\DataTypeTok{Left}\NormalTok{ e2) }\FunctionTok{=} \DataTypeTok{Left} \FunctionTok{$}\NormalTok{ e1 }\OtherTok{`mappend`}\NormalTok{ e2}
\end{Highlighting}
\end{Shaded}

Note, however, that this structure is a \VERB|\DataTypeTok{Monad}| in
\VERB|\NormalTok{e}|

\begin{Shaded}
\begin{Highlighting}[]
\OtherTok{throwEA ::}\NormalTok{ e }\OtherTok{->} \DataTypeTok{EA}\NormalTok{ e a}
\NormalTok{throwEA }\FunctionTok{=} \DataTypeTok{EA} \FunctionTok{.} \DataTypeTok{Left}

\OtherTok{catchEA ::} \DataTypeTok{EA}\NormalTok{ e a }\OtherTok{->}\NormalTok{ (e }\OtherTok{->} \DataTypeTok{EA}\NormalTok{ f a) }\OtherTok{->} \DataTypeTok{EA}\NormalTok{ f a}
\NormalTok{(}\DataTypeTok{EA}\NormalTok{ a) }\OtherTok{`catchEA`}\NormalTok{ f }\FunctionTok{=} \KeywordTok{case}\NormalTok{ a }\KeywordTok{of}
  \DataTypeTok{Right}\NormalTok{ a }\OtherTok{->} \FunctionTok{pure}\NormalTok{ a}
  \DataTypeTok{Left}\NormalTok{  e }\OtherTok{->}\NormalTok{ f e}
\end{Highlighting}
\end{Shaded}

\noindent which means it is also an instance of
\VERB|\DataTypeTok{MonadXApplicative}|. If we now remember that

\begin{itemize}
\item
  graded monads~\cite{Katsumata:2014:PEM} also require
  \VERB|\NormalTok{e}| to be a \VERB|\DataTypeTok{Monoid}| and
\item
  imprecise exceptions, too, can be though as producing a
  \VERB|\DataTypeTok{Monoid}| of possible errors with
  \VERB|\FunctionTok{catch}| (including the implicit
  \VERB|\FunctionTok{catch}| over \VERB|\NormalTok{main}|) "observing"
  one of its elements,
\end{itemize}

\noindent we come to a conclusion that in a calculus with
\VERB|\DataTypeTok{IO}|-effects separated from non-determinism-effects,
imprecise exceptions over non-deterministic
\VERB|\DataTypeTok{Applicative}| computations, indeed, form a
\VERB|\DataTypeTok{Monad}| (with equivalence defined up to raising the
same set of exceptions, similarly to section 4
of~\cite{PeytonJones:1999:SIE}) over the \VERB|\DataTypeTok{Monoid}| of
imprecise exceptions. That is, those, too, are examples of
\VERB|\DataTypeTok{MonadXApplicative}|.

\hypertarget{conclusions-and-future-work}{%
\section{Conclusions and future
work}\label{conclusions-and-future-work}}

We hope that with this article we pointed and then at least partially
plugged an algebraic hole in the programming languages theory by showing
that conventional computational formalisms with
\VERB|\NormalTok{throw}\FunctionTok{/}\NormalTok{try}\FunctionTok{/catch}|-exceptions
are "conjoined" products of pairs of \VERB|\DataTypeTok{Monad}|s (or,
less imperatively, \VERB|\DataTypeTok{Monad}|s and
\VERB|\DataTypeTok{Applicative}|s). This fact, in our opinion, makes a
lot of conventional programming "click into place" similarly to how
plain \VERB|\DataTypeTok{Monad}|s "click" imperative "semicolons".

Of particular note is the fact that everything in this paper, including
\VERB|\DataTypeTok{EIO}| of \cref{sec:instances:eio}, follows the
"marriage" framework of~\cite{wadler-thiemann-03} of confining effects
to monads, but ignores the question of any additional rules for type
indexes in question. In other words, ad-hoc exception encoding
constructions like that of error-explicit IO~\cite{Iborra:2010:ETE} or
graded monads~\cite{Katsumata:2014:PEM} are mostly orthogonal to our
"conjoined" structures and can be used simultaneously.

Besides practical applications described in the body of the paper and
observations already mentioned in \cref{sec:extabstract} (rereading said
section about now is highly recommended) we also want turn your
attention to the following observations.

\begin{enumerate}
\item
  Conventional error handling with \VERB|\NormalTok{throw}| and
  \VERB|\FunctionTok{catch}| (but without dynamic dispatch) is dual to
  conventional \VERB|\DataTypeTok{Monad}|ic sequential computation, a
  fact which, in our opinion, is interesting by itself (see
  footnote~\ref{fn:its-dual}).
\item
  Meanwhile, the "without dynamic dispatch" part above, in our opinion,
  provides an algebraic foundation for the argument against building new
  languages with builtin dynamic dispatch of exception handlers and/or
  an argument against extensively relying on that feature in the
  languages that have it, a point which is commonly discussed in the
  folklore ("exceptions are evil") and was articulated by Hoare from
  programmer comprehension standpoint already in
  1981~\cite{Hoare:1981:EOC}. Not only dynamic dispatch of exceptions
  is, citing Hoare, "dangerous", but it also prevents programs from
  directly accessing the inherent \VERB|\DataTypeTok{Monad}|ic
  structures discussed in this article.
\item
  We feel that the usual arguments against using
  \VERB|\DataTypeTok{Monad}|s for error handling are moot.

  \begin{itemize}
  \item
    The problem of syntactic non-uniformness between pure computations,
    \VERB|\DataTypeTok{Applicative}|s and \VERB|\DataTypeTok{Monad}|s,
    in our view, is almost trivial to solve: common primitives like
    \VERB|\FunctionTok{map}|/\VERB|\FunctionTok{mapM}| should be
    expressed in terms of \VERB|\DataTypeTok{Applicative}|s (of which
    pure functions are trivial instance) instead of
    \VERB|\DataTypeTok{Monad}|s. For instance, \VERB|\FunctionTok{mapM}|
    for list\footnote{And, similarly, for
      \VERB|\DataTypeTok{Traversable}| which we shall continue to ignore
      for the purposes of this article.} can be rewritten as

\begin{Shaded}
\begin{Highlighting}[]
\OtherTok{mapAp ::} \DataTypeTok{Applicative}\NormalTok{ f }\OtherTok{=>}\NormalTok{ (a }\OtherTok{->}\NormalTok{ f b) }\OtherTok{->}\NormalTok{ [a] }\OtherTok{->}\NormalTok{ f [b]}
\NormalTok{mapAp f     [] }\FunctionTok{=} \FunctionTok{pure}\NormalTok{ []}
\NormalTok{mapAp f (a}\FunctionTok{:}\NormalTok{as) }\FunctionTok{=} \FunctionTok{fmap}\NormalTok{ (}\FunctionTok{:}\NormalTok{) (f a) }\FunctionTok{<*>}\NormalTok{ mapAp f as}
\end{Highlighting}
\end{Shaded}

    Meanwhile, the uniform syntax for pure functions and
    \VERB|\DataTypeTok{Applicative}|s can be made by adding some more
    missing instances of the LISP macros into the compiler in
    question.\footnote{From a cynical LISP-evangelist point of view, all
      of "the progress" of the programming languages in the last 50
      years can be summarized as "adopting more and more elements
      (lately, meta-programming) from LISP while trying very hard not to
      adopt the syntax of LISP". From a less cynical perspective, "the
      progress", at least in typed languages, consists of well-typing
      said elements.} For instance,
    quasiquotation~\cite{Mainland:2007:WNQ} is one conventional way do
    such a translation, Conal Elliot's "Compiling to
    Categories"~\cite{Elliott:2017:CTC} provides another categorically
    cute way to achieve similar results.
  \item
    We feel that the problem of modularity as stated by
    Brady~\cite{Brady:2013:PRA}

    \begin{quote}
    Unfortunately, useful as monads are, they do not compose very well.
    Monad transformers can quickly become unwieldy when there are lots
    of effects to manage, leading to a temptation in larger programs to
    combine everything into one coarse-grained state and exception
    monad.
    \end{quote}

    can be solved by applying graded monads to the
    \VERB|\DataTypeTok{Monad}| part of
    \VERB|\DataTypeTok{MonadXApplicative}|.
  \end{itemize}
\end{enumerate}

In other words, we think that a programming language that

\begin{itemize}
\tightlist
\item
  provides a primitive \VERB|\FunctionTok{catch}| operator that does no
  dynamic dispatch,
\item
  provides quasi-quoting/compiling to categories for
  \VERB|\DataTypeTok{Applicative}|s,
\item
  distinguishes between \VERB|\DataTypeTok{IO}|-effects and
  non-determinism, and
\item
  uses a graded \VERB|\DataTypeTok{MonadXApplicative}| for a base type
  of computations
\end{itemize}

\noindent would provide all the efficiency of imprecise exceptions,
simplicity of \VERB|\DataTypeTok{Monad}|s (doubled, in some sense, since
error handling would stop being special), while having none of the usual
arguments against said mechanisms applying to it.

We feel that the following future work directions on the topic would be
of particular value:

\begin{itemize}
\item
  implementation of a practical "good-enough" (\cref{sec:encodings})
  library for GHC Haskell, and, eventually, an implementation of a
  dialect of Haskell with a graded
  \VERB|\DataTypeTok{MonadXApplicative}| as a base type of computations,
\item
  research into syntax and semantics of "marriages" between precise and
  imprecise exceptions in a single language, including, but not limited
  to, research into simpler semantic models for \(\lambda\)-calculus
  with Monads~\cite{wadler-thiemann-03,
   Filinski:1994:RM},
\item
  research into the question of whether multiplying more than two
  \VERB|\DataTypeTok{Monad}|s and \VERB|\DataTypeTok{Applicatives}| with
  non-trivial interaction laws produces interesting
  structures.\footnote{It is clear that one can have more than one index
    \VERB|\NormalTok{e}| conjoined to a single \VERB|\NormalTok{a}|, but
    such a construction doesn't seem to make much sense in presence of
    graded \VERB|\DataTypeTok{Monad}|s. However, that fact by itself
    does not exclude a possibility of existence of an interesting
    structure for which there are non-trivial interactions between
    different indexes \VERB|\NormalTok{e}|.}
\end{itemize}

All the practical results of this article except for
\VERB|\NormalTok{catchT}| combinator of
\cref{sec:instances:throw-catch-cc} were born in 2014 in a course of a
single week from observing the structure of a parser combinator
\VERB|\DataTypeTok{Monad}| indexed by errors and values (and other
things beyond the scope of this article, the original structure is also
an indexed \VERB|\DataTypeTok{State}| \VERB|\DataTypeTok{Monad}| to
allow parsing of arbitrary data types, not just streams) a very
simplified version of which was presented in
\cref{sec:parser-combinators,sec:instances:parser-combinators}. The
article itself was started in 2016 but then was rewritten from scratch
four times before finally settling to the current presentation. The
\VERB|\NormalTok{catchT}| combinator was discovered while writing
\cref{sec:continuations}.

This article would have been impossible without the patience of Sergei
Soloviev who read and meticulously commented numerous drafts of the
paper, numerous people who encouraged me to write this after I described
the general idea to them, and all contributors to Emacs and org-mode
without whom neither the planning nor the writing of the actual text
would have been manageable. The author is also grateful to
\fbox{Sergey Baranov} for helpful discussions on related topics which
steered the first half of this paper into its current form.

\printbibliography

\end{document}